\documentclass[10pt]{article}
\usepackage[utf8]{inputenc}
\usepackage{subfiles}
\usepackage[english]{babel}
\usepackage[a4paper,top=3 cm,bottom=3 cm,inner=2.5 cm,outer=2.5 cm]{geometry}
\usepackage{graphicx}
\usepackage{subfigure, tikz}
\usetikzlibrary{patterns}
\usetikzlibrary{positioning}
\usetikzlibrary{decorations.markings}
\usepackage{bm}
\usepackage{mathtools}
\usepackage{amsmath}
\usepackage{amsfonts}
\usepackage{mathbbol}
\usepackage{enumitem}
\usepackage{cancel}

\setlist[itemize]{label=\textbullet}
\usepackage{float}
\usepackage[utf8]{inputenc}
\usepackage{hyperref}
\usepackage{amsmath, amssymb, amsthm} 
\usepackage{eurosym}
\newtheorem{thm}{Theorem}[section] 
\newtheorem{lem}[thm]{Lemma}
\newtheorem{prop}[thm]{Proposition} 
\newtheorem{cor}[thm]{Corollary}
\newtheorem{defn}{Definition} [section]
\newtheorem{rem}{Remark}[section]

\theoremstyle{definition}

\usepackage{wasysym}
\hypersetup{pdfborder={0 0 0}} 
\addtolength{\skip\footins}{2 em} 
\usepackage{url} 
 
\usepackage{titlesec}
\setcounter{secnumdepth}{4}

\titleformat{\paragraph}
{\normalfont\normalsize\bfseries}{\theparagraph}{1em}{}
\titlespacing*{\paragraph}
{0pt}{3.25ex plus 1ex minus .2ex}{1.5ex plus .2ex}

\titleformat{\subparagraph}
{\normalfont\normalsize\bfseries}{\theparagraph}{1em}{}
\titlespacing*{\paragraph}
{0pt}{3.25ex plus 1ex minus .2ex}{1.5ex plus .2ex}

\usepackage{mathrsfs}
\usepackage{leftidx}
\usepackage{braket}
\usepackage{physics}

\newcommand{\di}{\text{d}}
\newcommand{\en}[1]{\omega_{\textbf{#1}}}

\newcommand{\bshs}[2]{\mathscr{B}^{#1^{2}}_{\textbf{#2}}}

\newcommand{\supp}[1]{\text{supp}(#1)}

\newcommand\numberthis{\addtocounter{equation}{1}\tag{\theequation}}



\usepackage{csquotes}
\usepackage[
backend=biber,
giveninits=true,
style=alphabetic,
isbn = false,
url = false,
maxbibnames = 4,
]{biblatex}
\addbibresource{bibliografia.bib}

\begin{document}
\par 
\bigskip 
\LARGE 
\noindent 
\textbf{Secular growths and their relation to equilibrium states in perturbative QFT} 
\bigskip \bigskip
\par 
\rm 
\normalsize 
 
\large
\noindent 
{\bf Stefano Galanda$^{1,a}$}, {\bf Nicola Pinamonti$^{1,2,b}$}, {\bf Leonardo Sangaletti$^{3,c}$} \\
\par
\small

\noindent$^1$ Dipartimento di Matematica, Dipartimento di eccellenza 2023-2027, Universit\`a di Genova - Via Dodecaneso, 35, I-16146 Genova, Italy. \smallskip

\noindent$^2$ Istituto Nazionale di Fisica Nucleare - Sezione di Genova, Via Dodecaneso, 33 I-16146 Genova, Italy. \smallskip

\noindent$^3$ Institut f\"ur Theoretische Physik, Universit\"at Leipzig, Br\"uderstraße 16, 04103 Leipzig, Germany.\smallskip
\smallskip

\noindent E-mail: 
$^a$stefano.galanda@dima.unige.it, 
$^b$pinamont@dima.unige.it,
$^c$leonardo.sangaletti@uni-leipzig.de\\ 

\normalsize
${}$ \\ \\
 {\bf Abstract} \ \
In the perturbative treatment of interacting quantum field theories, if the interaction Lagrangian changes adiabatically in a finite interval of time, secular growths may appear in the truncated perturbative series also when the interaction Lagrangian density is returned to be constant. If this happens, the perturbative approach does not furnish reliable results in the evaluation of scattering amplitudes or expectation values. In this paper we show that these effects can be avoided for adiabatically switched-on interactions, if the spatial support of the interaction is compact and if the background state is suitably chosen. We start considering equilibrium background states and show that, when thermalisation occurs (interaction Lagrangian of spatial compact support), secular effects are avoided. Furthermore, no secular effects pop up if the limit where the Lagrangian is supported everywhere in space is taken after thermalisation (large time limit), in contrast to the reversed order. This result is generalized showing that if the interaction Lagrangian is spatially compact, secular growths are avoided for generic background states which are only invariant under time translation and to states whose explicit dependence of time is not too strong. Finally, as an application, the presented theorems are used to study a complex scalar and a Dirac field, on a background KMS state, in a classical external electromagnetic potential and the contribution to the two point-function given by a generic loop diagram arising from a second order perturbative expansion.
\bigskip
${}$

\section{Introduction}
A complete rigorous formulation of interacting relativistic quantum field theories, like quantum electrodynamics (QED), is in general not yet at disposal. 
In spite of this fact, even if no rigorous construction of interacting fields is available in many cases, perturbative methods truncated at finite order furnish predictions of scattering amplitudes or expectation values of certain observables which are in some cases in extremely high accordance with experiments. This is the case of the electromagnetic fine-structure constant or the gyromagnetic ratio of the electron predicted in QED.\\
However, when time dependent interaction Lagrangians are considered, e.g.~an external potential for an electron switched on in a finite interval of time, it often happens that the contributions to expectation values or scattering amplitudes at order $n$ in the perturbation Lagrangian grow as $t^n$. This breaks the validity of the assumption that higher order corrections are smaller than the lower order ones, at least at large times. In these cases the truncated perturbative series cannot furnish reliable results.  
These effects are referred in the literature as \textit{secular effects} or \textit{secular growths}. 

In the recent years these problems of perturbation theory were extensively studied in the literature. Here we recall some relevant results: in the context of electrodynamics or self-interacting theories \cite{Akhmedov_2014, Trunin}, for the violation of the stationary approximation in expanding universes \cite{AkhmedovMoschella}, for the failure of perturbation theory near horizons \cite{Burgess_2018}, for quantum corrections to the gravitational potential on de Sitter \cite{FrobSecular, FrobSecular2, Prokopec} in the resummation of the perturbative series for non-equilibrium states and at finite temperature \cite{AltherrResummation_NonEquilibrium, Gautier, Boyanovsky_FiniteTemperature}; see also further citations therein. 

It is believed that secular effects are artefacts of perturbation theory and are present because, in the full theory, oscillations in time arise. For this reason, the validity of the perturbative approach to describe certain interacting quantum theories or even self-interacting quantum fields propagating also on certain curved backgrounds has been questioned in the literature. In order to clarify this point of view, Section \ref{se:toy-model} is devoted to the construction of a toy model where these observations become manifest.

In this toy model and in other models presented Section 
\ref{se:scalar-ED}
and in Section
\ref{sec: Dirac example}
where an external classical electromagnetic potential interacts with a quantum charged scalar field or a Dirac field in an equilibrium state, secular effects arises because an external potential which is switched on adiabatically is considered. In these cases, the interaction Lagrangian is quadratic in the field, hence one could have analyzed directly the theory without using perturbation theory for quantum fields, thus avoiding these problems. 
However, secular effects are present also in theories where the interaction Lagrangian is more than quadratic and we do not have any other way than perturbation theory to take it into account.
This is for example the case presented in Section
\ref{sec: Loop Example}
and in Section \ref{sec: Loop Example sec}
where a self interacting scalar field in a generic translation and rotation invariant state is studied.  
In this last case, time dependent growths arise in the diagrammatic expansion at higher perturbative order 
in which at least one diagram has a vanishing combination of the frequencies associated with the internal and the external legs, e.g. loop diagram.
This last example shows secular growths which
 are extensively studied in the rich literature, see e.g. \cite{Akhmedov_2014, Akhmedov_2020}.

The aim of this work is to study the validity of perturbation theory, for compactly spatially supported interaction Lagrangians, 
for theories defined in a Minkowski spacetime
by proving the absence of secular effects. In addition, the subtle limit removing the spatial cutoff is carefully discussed. In particular, time translation invariant states after the switch on, do not exhibit any secular growth. This happens for example, if thermalisation occurs, for equilibrium states, at late time. Notably, if the state is already in thermal equilibrium for the free dynamics and if the interaction Lagrangian acts as a perturbation, equilibrium is reached again. This process of return to equilibrium is well under control in the case of $C^*$-dynamical systems \cite{BratteliRobinson}, while it is not completely clear whether it holds for generic interacting quantum field theories and if it is compatible with perturbation theory.

In \cite{FaldinoEquilibriumpAQFT} it is proven that the spatial compactness of the interaction Lagrangian is a sufficient condition for thermalisation at late times, for a self-interacting neutral scalar field. In the same paper, examples where return to equilibrium fails for spatially non-compactly supported interactions are given (See also \cite{Sangaletti, Joao2019}). In addition, the limit removing the spatial cutoff can be taken for an equilibrium state of the interacting theory treated with perturbative methods \cite{FredenhagenLindnerKMS_2014, Galanda, DragoHackPinamonti, JoaoNicNic}. Hence, the limit where the interaction Lagriangian is supported everywhere in space needs to be taken only after thermalisation is reached. In other words, in perturbation theory, the order of the large time limit and the limit removing the space cutoff cannot be changed. 

Motivated by this observation, we prove in Section \ref{sec:req} that if the interaction Lagrangian is spatially compactly supported, thermalisation occurs also for charged scalar and fermionic fields. Therefore, secular effects are absent from the perturbation series.
Later, the discussion is extended to a initial state $\omega$ out of equilibrium. Namely, in Theorem \ref{th:truncated}, we prove that if a state $\omega$ is secularily bounded (See Definition \ref{def:secularily-bounded}) and if the interaction Lagrangian is of spatially compact support, secular effects are absent at each order in the perturbation series.
In Corollary \ref{cor: NoSecInvariant} we prove the validity of the property of being secularily bounded for a quasifree state $\omega$ invariant under time translations, as a consequence of a suitable clustering condition given in \eqref{eq:clustering=Derivative}. Hence, in this case the absence of secular effects is guaranteed by an application of Theorem \ref{th:truncated}.
Finally, in Theorem \ref{theo:time dependent}, we prove that for quasifree states (not necessarily time translations invariant) the secular boundedness is implied by certain conditions on the two-point function of the state itself.

Some applications of the  results obtained in Theorem \ref{th:truncated} and
Corollary \ref{cor: NoSecInvariant} are presented in Section  \ref{se:applications}.
There, it is shown that secular effects are avoided in the case of a charged scalar and Dirac field which are defined on a Minkowski spacetime and which interact with an external electromagnetic field when its spatial support is compact. This is done
analyzing directly some contributions at finite perturbation order. Finally, in Section \ref{sec: Loop Example no sec} it is shown that secular effects are avoided in certain loop diagrams when the interaction Lagrangian has a spatial cutoff. 
It should be stressed that the results of the present paper are not applicable in some other situations of interest. Actually, in this paper, we restrict our attention to quantum field theories defined on Minkowski spacetime. Even though an adaptation to the case of stationary spacetime with infinitely extended Cauchy surfaces and with  smooth metric coefficients seems to be possible, we expect that in some cases secular effects cannot be avoided.
More precisely, to prove the results of this paper, we make extensive use of the fact that the interaction is localized in a spatial compact region, while the field theory is defined on the entire Minkowski spacetime. If this assumption is removed, for instance defining the theory in a box with periodic boundary conditions, the time decay property of the correlation functions ceases to hold and in that case we expect that secular effects cannot be avoided. 

In order to obtain the general results mentioned above, we make use of methods proper of perturbative algebraic quantum field theory (pAQFT)
\cite{BrunettiFredenhagen00, BFV03, HW02, HollandsWald2001, HW05, Fredenhagen_2012,  Fredenhagen2015}, (see also 
\cite{DutschBook, KasiaBook}). This approach treats interacting relativistic quantum fields, combining techniques of renormalization of perturbative interacting theories \cite{EpsteinGlaser} with the axiomatic algebraic approach to QFT \cite{HaagKastler, HaagLQP, BuchFredAQFT}. This framework can be directly applied to quantised field theories on generic (globally hyperbolic) spacetimes and it puts the focus on the observables, identified with elements of algebras, assigned to open regions of spacetime. Locality is encoded in the commutation relations between observables of algebras associated to different regions. Once the local algebras of the free theory have been constructed, the ones of the interacting theory can be mapped inside the $*$-algebra of the free theory in the form of formal power series in the coupling constant. States are constructed in a second step, assigning positive, normalised and linear functionals over the algebras. In addition, once a state is chosen, the \textit{GNS theorem} allows to represent the algebra of observables as operators on a Hilbert space. One of the advantages of this formulation, is that renormalization turns out to be independent from the choice of a specific state by construction.

The paper is organised as follows. At the end of this section we present a toy model in which it is made manifest that secular effects are artefacts of perturbation theory. The second section, is devoted to a brief overview of the framework in which the results are presented and to the construction of equilibrium states for interacting theories. The third section contains the main result concerning the absence of secular growths, and its adaption to some physically relevant states. In this setting, the relation with thermalisation (return to equilibrium) for equilibrium states is discussed. Finally, the fourth section is devoted to the application of these results to two concrete examples: a complex scalar and a Dirac field coupled with a strong external electromagnetic field. An explicit computation at first order in perturbation theory is presented, in order to make manifest in a concrete example the discussion done in the previous sections.

\subsection{Simple toy model - adiabatic mass changes in Klein--Gordon fields}
\label{se:toy-model}
In this section we present a simple toy model in which it is made clear that secular effects are artefacts of perturbation theory. It will clarify how they possibly pop up when the $n$-point correlation functions of the state, after the switching on of the interaction, are not invariant under time translations but rather oscillate in time.
Similar analyses are present in the PhD Thesis \cite{Joao2019} and in the Master Thesis \cite{Sangaletti}. 
\\
We work in Minkowski\footnote{The convention adopted in this work on the signature of the metric is $-+++$ and on units $\hbar = c = 1$.} spacetime $(\mathbb{M},\eta)$ and we consider a massive Klein--Gordon field $\phi$ of mass $m$.
Let $\chi \in \mathcal{C}^{\infty}(\mathbb{R})$ (or eventually the limit in the distributional sense of a sequence of smooth functions) be positive with $\mathrm{supp}(\chi) = \{ t \in \mathbb{R} : t > -\epsilon \}$, for $\epsilon \in \mathbb{R}^+$, whose derivative (eventually defined in the weak sense) $\Dot{\chi}$ has support in the region $[-\epsilon, 0]$. We also assume $\chi$ equal to $1$ for $t\geq 0$. This function governs the way in which the mass of the Klein--Gordon field on Minkowski spacetime is adiabatically changed in the interval of time $[-\epsilon,0]$. This change corresponds to considering an interaction Lagrangian density of the form $-\delta m^2\phi^2/2 $. The equation of motion satisfied by $\phi$ is:
\begin{equation*}
    (\square - m^2 - \chi \, \delta m^2) \phi = 0.
\end{equation*}
This is solvable in Fourier space and becomes an ordinary differential equation for the modes $\xi_p(t)$:
\begin{equation*}
    \Ddot{\xi}_p(t) + (|\mathbf{p}|^2 + m^2 + \delta m^2 \chi(t)) \xi_p(t) = 0.
\end{equation*}
By knowing the mode solutions, a general pure, translation invariant, Gaussian state of the corresponding quantised theory has the two-point function of the form:
\begin{equation*}
    \omega_2(x,y) =\frac{1}{(2 \pi)^3} \int_{\mathbb{R}^3}  \di^3\mathbf{p} \, \overline{\xi_p(t_x)} \xi_p(t_y) e^{i \mathbf{p} (\mathbf{x} - \mathbf{y})}.
\end{equation*}
For the sake of simplicity, we fix as initial condition for $t < -\epsilon$:
\begin{equation*}
    \xi_p(t) = \frac{e^{-i \omega_0 t}}{\sqrt{2 \omega_0}},
\end{equation*}
corresponding to the assumption of the system being prepared in its ground state. Here, $\omega_0 = \sqrt{|\mathbf{p}|^2 + m^2}$. Therefore, the general modes at $t > 0$ can be determined exactly and have the general form:
\begin{equation*}
    \xi_p(t) = \alpha_p  \frac{e^{i \omega_1 t}}{\sqrt{2 \omega_1}} + \beta_p  \frac{e^{-i \omega_1 t}}{\sqrt{2 \omega_1}},
\end{equation*}
where $\omega_1 = \sqrt{|\mathbf{p}|^2 + m^2 + \delta m^2}$ and $\alpha_p, \beta_p$ are complex functions, determined by the switch-on, and depending on the norm of the three momentum $|\mathbf{p}|$. In particular, if $\dot\chi$ is smooth and of compact support, $\beta_p$ decays rapidly for large $|\mathbf{p}|$. Furthermore, since $|\alpha_p|^2= 1 + |\beta_p|^2$, we have that also $\alpha_p \beta_p$ is of rapid decay.\\
Therefore, after the smooth switching on of the interaction the system is described by:
\begin{equation*}
    \omega_2(x,y) = \frac{1}{(2 \pi)^3}\int_{\mathbb{R}^3} \frac{\di^3 \mathbf{p}}{2 \omega_1} \, \bigg(|\alpha_p|^2 e^{-i \omega_1 (t_x - t_y)} + |\beta_p|^2 e^{i \omega_1 (t_x - t_y)} +  \overline{\alpha_p} \beta_p e^{-i\omega_1 (t_x + t_y)} + \alpha_p \overline{\beta_p} e^{+i\omega_1 (t_x + t_y)}\bigg) e^{i \mathbf{p} (\mathbf{x} - \mathbf{y})} ,
\end{equation*}
where $t_x,t_y > 0$. The first two contributions depend on $t_x-t_y$ and thus they are invariant under time translations $(t_x,t_y) \to (t_x+a,t_y+a)$. The only terms sensitive to the infrared behaviour in time are those depending on $t_x + t_y$. In particular, the dependence on $t_x + t_y$ appears only in the oscillating function, and $\overline{\alpha_p}\beta_p$ decreases rapidly. Hence, the corresponding contribution is bounded. We thus observe that, in the exact theory, no secular growths are visible: only contributions which oscillate in $t_x+t_y$.\\
The very same behaviour occurs also in the limit where $\chi(t)$ tends to the Heaviside step function  $\theta(t)$. In that case, an explicit form of $\alpha_p$ and $\beta_p$ can be obtained: 
\begin{equation*}
    \alpha_p = \frac{1}{2}\bigg( \sqrt{\frac{\omega_1}{\omega_0}} + \sqrt{\frac{\omega_0}{\omega_1}} \bigg), \qquad
    \beta_p = \frac{1}{2}\bigg( \sqrt{\frac{\omega_1}{\omega_0}} - \sqrt{\frac{\omega_0}{\omega_1}} \bigg),
\end{equation*}
and the contribution which is not invariant under time translation has the form
\begin{align*}
    O\coloneqq&\frac{2}{(2 \pi)^3} \int_{\mathbb{R}^3} \frac{\di^3 \mathbf{p}}{\omega_1} \frac{\delta m^2}{4 \omega_1 \omega_0}\cos\big( \omega_1 (t_x + t_y)\big) e^{i \mathbf{p} (\mathbf{x} - \mathbf{y})}.
\end{align*}
We now expand $O$ in powers of $\delta m^2$. Denoting by $O_{n}$, the term of order $n$ in $\delta m^2$ in the expansion, the contribution which shows the largest growth in time has the form:
\[
    O_n= C (t_x + t_y)^{n - 3/2}(\delta m^2)^{n+1},
\]
manifesting the presence of time growths.\\
Comparable  behaviours and further details were recently studied for thermal equilibrium states for scalar field theories in the Master Thesis \cite{Sangaletti} and in the PhD Thesis \cite{Joao2019}. In this context, the space localisation of the interaction, via a space cutoff, prevents the arising of such growths at any perturbative order. This is a consequence of the property of \textit{return to equilibrium} (\textit{thermalisation}), holding just for spatially localised interactions (\cite{FaldinoEquilibriumpAQFT} for the case of scalar field theories).\\

We remark here that the secular effects are not a peculiarity of quadratic interacting Lagrangian. They arise also in loop diagrams of self interacting scalar fields as for example, $\lambda \varphi^4$ on a thermal state or $\lambda \varphi_1 (\varphi_2)^2$ on the vacuum state for two scalar fields $\varphi_1$ and $\varphi_2$ with mass respectively $m$ and $m/2$. These are the kind of growths that are discussed in the rich literature on the topic \cite{Akhmedov_2014, Akhmedov_2020}. See e.g. Section \ref{sec: Loop Example sec} for a direct analysis. In all these cases, when the interaction Lagrangian has spatially compact support, the secular growths are avoided. 

\section{Perturbative approach to interacting quantum field theories in the algebraic setting}
In order to fix notation and review core results that we refer to later, we discuss with little more details the class of theories to which our results apply.\\
First of all, let us specify that, in this work, the main focus is on Minkowski spacetime ($\mathbb{M}, \eta$), metric signature $(-,+,+,+)$, but most of the presented results are generalizable to stationary (\textit{globally hyperbolic} \cite{Sanchez, BernalSanchez2005}) spacetime with non compact Cauchy surfaces and with well posed Cauchy problem for wave operators \cite{Bar_2014, BarGinPfaf}.\\
We will consider quantum field theories described by a Lagrangian density that can be factorised in the following way:
\begin{equation*}
    \mathcal{L} = \mathcal{L}_{F} + \mathcal{L}_I,
\end{equation*}
where $\mathcal{L}_{F}$ denotes the Lagrangian density of the free theory while $\mathcal{L}_{I}$ the interaction Lagrangian density. In particular, the interacting scalar field, both real ($\mathcal{L}_F = -\frac{1}{2} \partial^{\mu} \phi \partial_{\mu} \phi - \frac{m^2}{2} \phi^2$) and complex ($\mathcal{L}_F = -\partial^{\mu} \varphi \partial_{\mu} \varphi^* - m^2 \varphi \varphi^*$), and the interacting Dirac field ($\mathcal{L}_F = \overline{\psi}(i \cancel{\partial} - m)\psi$). 

\subsection{Interacting quantum field theory, S matrices and Bogoliubov maps}
In this section we briefly recall the main steps used to construct interacting quantum field theories 
with perturbation methods in the algebraic setting. 
We refer to the original works 
\cite{BrunettiFredenhagen00, BFV03, HW02, HollandsWald2001, HW05, Fredenhagen_2012,  Fredenhagen2015}
where the formalism has been developed and to the books \cite{DutschBook, KasiaBook} for recent detailed reviews. 

In this framework, the observables of the theory are seen as element of a {\bf $*$-algebra} $\mathcal{A}$ generated by normal ordered local fields smeared with compactly supported smooth functions. A generic normal ordered local field is denoted by $\Phi(f)$ where $f$ is the smearing function. The set of local fields is denoted by $\mathcal{F}_{\text{loc}}$ and typical examples of elements in $\mathcal{F}_{\text{loc}}$ are:
\[
:\phi^2:(f), \qquad :\varphi\varphi^\dagger:(f), \qquad :\psi\gamma_\mu \psi^\dagger:(f),
\]
where $:\cdot:$ indicates the (covariant) normal ordering \cite{HW02} and $\gamma_\mu$ are the standard $\gamma$ matrices. $\phi$ is a scalar neutral field, $\varphi$ is a charged scalar field and $\psi$ the linear Dirac field. 
The support of a local field coincides with the support of $f$. \\
We restrict our attention to local fields which are polynomial in the basic scalar field $\phi$, $\varphi$ or the basic Dirac fields $\psi$ and $\psi^\dagger$. $\mathcal{A}$ is then obtained as the collection of finite sum and finite products of its generators (normal ordered local fields).
Furthermore, $\mathcal{A}(\mathcal{O})\subset \mathcal{A}$ is the subalgebra generated by fields supported in $\mathcal{O}$.\\	
Concrete realizations of $\mathcal{A}$ are obtained identifying elements of $\mathcal{A}$ with functionals over smooth field configurations, see \cite{BruDutFre09} for the scalar field and \cite{BDFR22} for the extension to the case of anticommuting Dirac fields. Products in $\mathcal{A}$ are given in terms of the commutator function (Pauli-Jordan propagator) $\Delta$ of the basic fields $\phi$  and $\varphi$ in the case of scalar fields and in terms of the anticommutator function $\cancel{S}$ of the basic Dirac fields $\psi$ and $\psi^\dagger$. In particular, denoting by $\star$ the {\bf product} among generators, we have for every $A,B\in \mathcal{A}$:\\
\[
A \star B \coloneqq \mathcal{M}e^{
\frac{i}{2}\!\! \int\!\! \Delta(x,y) \frac{\delta}{\delta \phi (x)}
\otimes \frac{\delta}{\delta \phi (y)}  
+
\frac{i}{2} \!\!\int\!\! \Delta(x,y) \left( 
\frac{\delta}{\delta \varphi (x)}
\otimes  \frac{\delta}{\delta \varphi^\dagger\! (y)}
-
\frac{\delta}{\delta \varphi^\dagger\! (y)}
\otimes  \frac{\delta}{\delta \varphi (x)}
\right)
+
\frac{i}{2} \!\!\int\!\! \cancel{S}(x,y) \left( \frac{\delta_r}{\delta \psi (x)}
\otimes  \frac{\delta}{\delta \overline{\psi} (y)}
+
\frac{\delta_r}{\delta \overline{\psi} (y)}
\otimes  \frac{\delta}{\delta \psi (x)}
\right) }
A\otimes B
\]
where $\mathcal{M}(A\otimes B)$ is the graded pointwise product, $\delta/\delta \phi$ is the functional derivative, $\delta_r/\delta \psi = (-1)^{n+1}\delta/\delta \psi$ is the right functional derivative (where $n$ is the number of Dirac field generators) and $\overline{\psi} = \psi^\dagger\gamma^0$. This product implements ordinary free commutation relations among linear scalar fields and free anticommutation relations among Dirac fields. The support of $A\star B$ satisfies $\supp {A\star B} \subset \supp A \cup \supp B$. The involution on $\mathcal{A}$ is the complex conjugation for scalar fields and the adjoint in spinor space for Dirac fields. In the rest of the paper we drop the symbol $\star$ to denote the product
and replace it by juxtaposition when it is not explicitly needed.

A {\bf state} $\omega$ is a normalised, positive, linear functional on $\mathcal{A}$ whose evaluation on $A\in \mathcal{A}$ implements the mean expectation value. In order for $\omega$ to be a physical state on $\mathcal{A}$ we require it to be of Hadamard form \cite{KW91, Wald1995, Ra96, BFK}. This request is necessary to avoid divergences in the expectation values of elements of $\mathcal{A}$, even if composite fields are used as generators. Once a state is chosen, the standard picture where pure states are vectors in a suitable Hilbert space and expectation values are matrix elements of suitable operators can be recovered by the GNS construction.\\
If the state $\omega$ is quasi-free (Gaussian), the correlation functions of the theory are determined by the two-point function only. Furthermore, in this case, there is a representation of $\mathcal{A}$ in which the state $\omega$ takes a particularly simple form. Actually, in that case, (considering for simplicity only scalar fields):
\begin{equation}\label{eq:omega-deformation}
\omega(\Phi(f)\Phi(g)) = \left.\mathcal{M}e^{\int \omega_2(x,y) \frac{\delta}{\delta \phi(x)}\otimes \frac{\delta}{\delta \phi(y)} }\alpha_{\omega_2}(\Phi(f))\otimes \alpha_{\omega_2}(\Phi(g)) \right|_{\phi=0}
\end{equation}
where $\alpha_{\omega_2}$ is a $*$-isomorphism of $\mathcal{A}$ to $\mathcal{A}_{\omega_2}$ which is a $*$-algebra with a deformed product. Its explicit form on $F\in \mathcal{A}$ is: 
\begin{equation}\label{eq:alpha}
\alpha_{\omega_2} (F) \coloneqq e^{\frac{1}{2}\int \omega_2(x,y) \frac{\delta^2}{\delta \phi(x)\delta \phi(y)}} F.
\end{equation}
Below, in some proofs, we shall drop the symbol $\alpha_{\omega_2}$ keeping implicit the application of $\alpha_{\omega_2}$ in the representation of elements of $\mathcal{A}$ in $\mathcal{A}_{\omega_2}$ if not strictly necessary.

{\bf Interaction Lagrangians} $V\coloneqq\mathcal{L}_I( g) \in \mathcal{A}$ are normal ordered local fields, smeared with compactly supported smooth functions $g$ in order to avoid infrared divergences. Both the interaction Lagrangians and the observables we shall work with for Dirac theories are even in the number of Dirac field generators.

To represent elements of the interacting theory in $\mathcal{A}$ one needs to introduce a {\bf time ordered product} among local functionals which is a map:
\[
T : \mathcal{F}_{\text{loc}}^{\otimes n} \to \mathcal{A}.
\]
For a detailed list of properties and the proof of existence of these maps, also in curved spacetime, we refer to \cite{BrunettiFredenhagen00, HW02, HW05}. 
Here we simply mention the causal factorization property, that permits to represent certain time ordered products as ordinary products: $T(A\otimes B) = A B$ if $J^{+}\big(\supp{A}\big)\cap J^{-}\big(\supp{B}\big) = \emptyset$\footnote{Here $J^+$ ($J^-$) denotes the causal future (past).}. We also mention that the construction of these products is obtained using the Epstein and Glaser recursive procedure \cite{EpsteinGlaser}. At each recursive step, renormalization has to be employed to get a well defined element in $\mathcal{A}$. This is done using ideas of Steinmann \cite{Steinmann}, extending algebra valued distributions on $\mathcal{C}^{\infty}_0(\mathbb{M}^n\setminus D_n)$ to distributions on $\mathcal{C}^{\infty}_0(\mathbb{M}^n)$, where $D_n$ is the thin diagonal in $\mathbb{M}^n$.

Elements of the interacting theory are obtained as formal power series in $\lambda$ with coefficients in $\mathcal{A}$. In particular, by the above definition, the interaction Lagrangian is given by:
		\begin{equation}\label{eq:formaV}
    		V \coloneqq  \lambda \int \chi(t) h(\mathbf{x}) \mathcal{L}_I \di^4x,
		\end{equation}
where $\chi$ is a time cutoff and $h$ a space cutoff. Namely, $\chi\in \mathcal{C}^{\infty}_0(\mathbb{R})$ and $h \in \mathcal{C}^{\infty}_0(\Sigma)$, where $\Sigma$ is a Cauchy surface (set of points at fixed Minkowski time) in $\mathbb{M}$.\\
With the aid of the time ordered product we can define for a local functional $F \in \mathcal{F}_{\text{loc}}$ the corresponding {\bf $S$-matrix} as the time ordered exponential:
\begin{equation}\label{eq: def S-matrice}
    S(F) \coloneqq 1 + \sum_{n \geq 1} \frac{(i\lambda )^n}{n!} T(\underbrace{F \otimes \ldots \otimes F}_n).
\end{equation}
This series is in general not convergent and thus $S(F)$ is understood as an element of $\mathcal{A}[[\lambda]]$, the formal power series in $\lambda$ with coefficients in the algebra $\mathcal{A}$. Suppose now that $F_1, F_2, F_3 \in \mathcal{F}_{\text{loc}}$ and $J^-(\text{supp}(F_3)) \cap J^+(\text{supp}(F_1)) = \emptyset$, denoted as $F_1\gtrsim F_3$. Then, by the causal factorization property of the time ordering map follows a causal factorization for the $S$-matrix:
\begin{equation}\label{eq: causalfact}
    S(F_1 + F_2 + F_3) = S(F_1 + F_2)  S(F_2)^{-1}  S(F_2 + F_3).
\end{equation}
With the aid of $S$-matrices, we define the interacting field as \cite{Bogoliubov, Stueckelberg}:
\[
R_V(F) \coloneqq -\frac{i}{\lambda} \left. \frac{d}{dt} S(V)^{-1} S(V + t F)\right|_{t=0}, \qquad F\in \mathcal{F}_{\text{loc}} 
\]
which is an element of $\mathcal{A}[[\lambda]]$. \\
We stress that is possible to draw an analogy between this formula and the known Keldysh formalism. To make this manifest it is necessary to take into account the integration over time due to the smearing of the interaction Lagrangian and of the fields. Indeed, in $S(V+F)$ the local fields $V$ and $F$ present an integration from $-\infty$ to $+\infty$ in time, while the same integration is essentially performed backwards in $S(V)^{-1}$. See \cite{JoaoNicNic} for further details.\\
We have now all tools to construct the algebra of localised interacting fields as the subalgebra of $\mathcal{A}[[\lambda]]$ generated by $R_V(F)$, namely:
\begin{equation*}
    \mathcal{A}_I(\mathcal{O}) \coloneqq \big[ \{ R_V(F) | F \in \mathcal{F}_{\text{loc}}(\mathcal{O})\} \big] \subset \mathcal{A}[[\lambda]],
\end{equation*}
were the square bracket denotes the algebra generated by the elements enumerated in them. Notice that $\mathcal{A}_I(\mathcal{O})$ is not sensitive to changes of $\chi$ to $\chi'$ which occur after $\mathcal{O}$, namely if $\text{supp} \big((\chi-\chi')h \big)\cap J^{-}(\mathcal{O}) = \emptyset$. Furthermore, if the cutoff is changed before $\mathcal{O}$, namely if $\text{supp} \big((\chi-\chi')h \big) \cap J^{+}(\mathcal{O}) = \emptyset$, one obtains an interacting algebra $\mathcal{A}_I(\mathcal{O})'$ which is $*$-isomorphic to $\mathcal{A}_I(\mathcal{O})$. The isomorphism is realised by the adjoint action of a suitable formal unitary map. For this reason, as soon as interacting field observables localised in space and time are considered, we can easily extend the definition of $R_V$ also to time cutoff functions of the form:
\begin{equation}\label{eq:accad}
    \chi(t) = \left\{ \begin{aligned}
                          &0 \qquad \text{for $t < -2\epsilon$}\\
                          &1 \qquad \text{for $t \geq  -\epsilon$}
                      \end{aligned}\right.
\end{equation}
without altering the construction.\\
From now on, we will refer to $\chi(t)$ as the \textbf{smooth switch-on function}.

\subsection{Equilibrium states for interacting theories}
In this section we shall briefly discuss how the equilibrium states for an interacting theory are obtained starting from the corresponding equilibrium states of the free theory. 
We follow mainly the construction of \cite{FredenhagenLindnerKMS_2014}, for further aspects of thermal field theory we refer to \cite{LeBellac, KaputsaGale, LandsmanVanWert}. 
\\
In order to be able to present equilibrium states, we need to provide the algebra with a one-parameter group of automorphisms that describes {\bf time evolution} and makes it a dynamical system. In particular the free dynamics is defined on local functionals $\Phi(f) \in \mathcal{F}_{\text{loc}}$ by:
\begin{equation*}
    \tau_t(\Phi(f) ) \coloneqq \Phi(f)_t: = \Phi(f_t) \, , \quad f_t(s, \mathbf{x}) \coloneqq f(s + t, \mathbf{x}).
\end{equation*}
It is then extended to $\mathcal{A}$ by simply requiring that: 
\[
\tau_t(FG)=\tau_t(F)\tau_t(G).
\]
The {\bf interacting dynamics} $\tau_t^V$ is obtained from the free dynamics $\tau_t$ as the pull-back with respect to the Bogoliubov map, namely for every $F\in \mathcal{F}_{\text{loc}}$ 
\begin{equation}
    \tau_t^V(R_V(F)) \coloneqq R_V(\tau_t(F)),
\end{equation}
from which follows that:
\begin{equation*}
    \tau^V_t(S(V)^{-1} S(V + F)) = S(V)^{-1} S(V + F_t).
\end{equation*}

\subsubsection{Equilibrium states for the free dynamics}
The two most important categories of equilibrium states, for the free dynamics, are defined as follows:
\begin{defn}\label{def: ground}
Let $\omega^{\infty}$ be a state on a realization of the dynamical system $\mathcal{A}$ with free dynamics $\tau_t$. If the state satisfies the condition:
\begin{equation}\label{eq: ground}
    -i \partial_t \omega^{\infty}(A^* \tau_t(A))\big|_{t=0}\geq 0\, \quad \forall A\in \mathcal{A}.
\end{equation}
Then $\omega^{\infty}$ is of equilibrium with respect to $\tau_t$ and is referred to as \textit{ground} (\textit{vacuum}) state.
\end{defn}
Condition \eqref{eq: ground} is equivalent to the request that, in the GNS representation associated to the state $\omega^{\infty}$, the generator of the time evolution is positive semi-definite, with lowest energy state corresponding to the vector representation of $\omega^{\infty}$. Moreover, Equation \eqref{eq: ground} is also proven to be equivalent to the request that the function $z \to \omega^{\infty}(A \tau_z(B))$ is bounded in the region $\Im(z) \geq 0$ and analytic in $\Im(z) > 0$. For a proof of these statements see Proposition $5.3.19$ in \cite{BratteliRobinson}.
\begin{defn}\label{def: KMScond}
Let $0 \leq \beta < \infty$ and $\omega^{\beta}$ be a state on a realization of the dynamical algebra $\mathcal{A}$ with dynamics $\tau_t$. Then, it is of thermal equilibrium, or KMS (Kubo-Martin-Schwinger) state, at inverse temperature $\beta$ with respect $\tau_t$, if the functions:
\begin{equation*}
    (t_1, \ldots, t_n) \mapsto \omega^{\beta}(\tau_{t_1}(A_1) \cdots \tau_{t_n}(A_n)) \, , \quad A_1, \ldots, A_n \in \mathcal{A}
\end{equation*}
have analytic continuation to the region:
\begin{equation*}
    \{(z_1, \ldots, z_n) \in \mathbb{C}^n \, : \, 0 < \Im(z_j) - \Im(z_i) < \beta, \quad 1 \leq i < j \leq n\},
\end{equation*}
are bounded and continuous on its closure and fulfil the boundary conditions:
\begin{align*}
    \omega^{\beta}(\tau_{t_1}(A_1) \cdots \tau_{t_{k-1}}(A_{k-1}) \tau_{t_{k} + i \beta}(A_{k}) \cdots \tau_{t_n + i \beta}(A_n))\\
    = \omega^{\beta}(\tau_{t_{k}}(A_{k}) \cdots \tau_{t_n}(A_n) \tau_{t_1}(A_1) \cdots \tau_{t_{k-1}}(A_{k-1}))
\end{align*}
\end{defn}
Condition \ref{def: KMScond} is referred to as \textit{KMS condition} and is motivated by the generalization to QFT of the analogous properties satisfied by a Gibbs state in Quantum Mechanics.\\
Notice that both states are time translation invariant and the ground state, by its analiticity properties, can be understood as a thermal equilibrium state of inverse temperature $\beta = \infty$. Concrete expression of the correlation function of the free KMS state, both for the free Dirac and the free scalar field, are given below respectively in Section \ref{se:dirac-field}
and in Section \ref{se:complex-scalar}. See Formulas \eqref{eq: 2puntiKMS_+} and \eqref{eq: 2puntiKMS_-} for the Dirac field and \eqref{eq: 2pfscalar} for the scalar field.

\subsubsection{Equilibrium states for the interacting dynamics}\label{sec: Int States}
Constructing states for the interacting theory is easily done prescribing the form of the correlation functions among local interacting fields. Namely, given $F_i \in \mathcal{F}_{\text{loc}}$ and any state $\omega$ on the free algebra $\mathcal{A}$, define:
\begin{equation*}\label{partialeqstate}
    \omega^I(F_1, \ldots, F_n) \coloneqq \omega(R_V(F_1) \cdots R_V(F_n))
\end{equation*}
that is a state on $\mathcal{A}[[\lambda]]$, i.e.~on observables of the interacting theory. However, even though $\omega$ might be of equilibrium for the free dynamics at some finite inverse temperature $\beta$, the above prescription does not necessarily provide a corresponding equilibrium state for the interacting dynamics $\tau_t^V$, even if $\supp{F_i}\subset \mathcal{O} \subset \mathbb{M^{+}}$ (the region of $\mathbb{M}$ where $\chi=1$). Nevertheless, if thermalisation occurs (return to equilibrium), an equilibrium state for the interacting theory is expected to be reached for observables supported far in the future. Namely, this corresponds to the large time limit: $\lim_{t\to +\infty}\omega^{\beta}\circ\tau_{t}^{V}$. Though, this limit does not exist in general if the spatial adiabatic limit is considered $h\to 1$. See the next sections and \cite{FaldinoEquilibriumpAQFT} for further details. Nevertheless, despite this problem, in the setting of pAQFT equilibrium states are obtained making use of the approach proposed in \cite{FredenhagenLindnerKMS_2014}, at least for observalbes supported in $\mathbb{M}^+$. In that paper, the authors perturbatively construct an equilibrium state for the interacting theory of a scalar free massive field at finite temperature, generalizing ideas already known for systems with finite degrees of freedom proposed by Araki in \cite{Araki1973RelHamilt}. The connection of that construction with the known Matsubara and real-time formalism used to characterise states at finite temperature, has been analysed in \cite{JoaoNicNic}. The pivotal achievement in \cite{FredenhagenLindnerKMS_2014}, Theorem $1$, is the construction of a map $\mathbb{R}^+ \ni t \mapsto U_V(t) \in \mathcal{A}[[\lambda]]$ unitary, satisfying a cocycle condition: 
\begin{equation*}
    U_V(t) \tau_t(U_V(s)) = U_V(t + s),
\end{equation*}
with generator:
\begin{equation*}
    K \coloneqq -i \frac{d}{dt}U_V(t)\bigg|_{t = 0}.
\end{equation*}
The generator has the following explicit form in terms of the interaction Lagrangian density:
\begin{equation}\label{eq:formaK}
    K = - R_V\bigg(\lambda \int_{\mathbb{M}} \di^4x h(\mathbf{x}) \Dot{\chi}(t) \mathcal{L}_I(x)\bigg).
\end{equation}
In the previous formula $\Dot{\chi} \in \mathcal{C}^{\infty}_0(\mathbb{R})$ denotes the time derivative of the switch-on function and the generator $K$ has to be understood, as always, as a formal power series in the coupling constant $\lambda$. The relevance of the unitary cocycle lies in its intertwining property between the free and interacting dynamics:
\begin{equation*}
    \tau_t^V(F) = U_V(t) \tau_t(F) U_V(t)^{-1} \qquad  t>0, \,\, \supp{F} \subset \mathbb{M^+}. 
\end{equation*}
$U_V(t)$ can be seen, in analogy with quantum mechanics, as the time evolution operator in the interacting picture with $K$ the corresponding interaction Hamiltonian, at least when $\mathcal{L}_I$ does not contain derivatives of the fields. With the cocycle $U_V$ at disposal, and having proved suitable analiticity properties in the parameter $t$ of the function $t\mapsto\omega^\beta(U_V(t) A \, U_V(t)^*)$, for generic $A \in \mathcal{A}$, the equilibrium states at finite inverse temperature $\beta$ for the interacting dynamics $\tau_t^V$ is:
\begin{equation}\label{eq: intGeneraleq}
    \omega^{\beta,V}(A) = \frac{\omega^{\beta}\big(U_V(-i\beta/2)^{-1} \, A \, U_V(i\beta/2)\big)}{\omega^{\beta}(U_V(i\beta))}\, , \quad \forall A \in \mathcal{A}(\mathcal{O}), \quad \mathcal{O}\subset \mathbb{M^+}.
\end{equation}
In particular, for $\beta = \infty$ the ground state for the interacting theory is given by:
\begin{equation}\label{eq: intGround}
    \omega^{\infty,V}(A) = \lim_{\beta \to \infty}\frac{\omega^{\beta}\big(U_V(-i\beta/2)^{-1} \, A \,U_V(i\beta/2)\big)}{\omega^{\beta}(U_V(i\beta))}\, , \quad \forall A \in \mathcal{A}(\mathcal{O}), \quad \mathcal{O}\subset \mathbb{M^+}.
\end{equation}
Indeed, it satisfies by construction the characterization mentioned above after Definition \ref{def: ground}, with respect to the interacting time evolution $\tau_t^V$ in the sense of formal power series.\\
For $0 < \beta < \infty$, making use of the KMS condition for $\omega^{\beta}$ and of the cocycle property, the KMS state with respect to the interacting dynamics can be rewritten in the following form:
\begin{equation}\label{eq: intKMS}
    \omega^{\beta,V}(A) = \frac{\omega^{\beta}\big(A \, U_V(i\beta)\big)}{\omega^{\beta}(U_V(i\beta))}\, , \quad \forall A \in \mathcal{A}(\mathcal{O}), \quad \mathcal{O}\subset \mathbb{M^+}.
\end{equation}
In Theorem $2$ of \cite{FredenhagenLindnerKMS_2014}, this state was proven to satisfy the condition given in Definition \ref{def: KMScond} with respect to $\tau_t^V$ in the sense of formal power series.\\
Notice that the equilibrium states just constructed cease to be quasifree even though the state $\omega^{\beta}$ is.\\
For computational purposes, perturbative expansions of the introduced objects are derived starting from the presented algebraic definitions in terms of \textit{truncated} or \textit{connected} parts $\omega^{\mathcal{T}}$ of the considered state $\omega$ (see \cite{BratteliRobinson} Section $5.2.3$ and \cite{BKR78} Section $2$ for the precise definition). These, are reported in the Appendix \ref{app: espansioni} and will be useful in the following discussion.\\\\
At this point, we are interested in studying the limit $h \to 1$ (with a little nomenclature abuse we refer to it as \textit{adiabatic limit}) for the interacting states, in order to relax the assumption on the spatial support of the interactions. The essential requirement is a sufficiently fast decay rate in the spatial directions of the truncated $n$-point functions, compensating the corresponding infrared limit divergence. In the case of massive real and complex scalar fields these decay rates were shown in the works mentioned above, both for interacting ground and KMS states, and in the case of the Dirac field in Theorem $4.2$ of \cite{Galanda}. Therefore, for these kind of quantum field theories, equilibrium states exist for interactions with arbitrary spatial support.

\section{Secular effects and return to equilibrium}\label{sec: Ret Eq}
\subsection{Return to equilibrium for KMS states and absence of secular growths}\label{sec:req}
In this section we want to collect and generalise some notions and theorems concerning the stability properties of equilibrium states, already known for self interacting neutral scalar fields with spatially compact interaction Lagrangians \cite{FaldinoEquilibriumpAQFT}. In particular, we are interested in understanding under which conditions the property of \textit{return to equilibrium} is satisfied. This property asserts that a state $\omega^\beta$ in thermal equilibrium with respect to the free dynamics naturally converges to $\omega^{\beta,V}$, a KMS state for the interacting dynamics, in the limit of large times:
\begin{equation}
    \lim_{T\to\infty}\omega^\beta(\tau_T^V(A))=\omega^{\beta,V}(A)\;\;\forall{A}\in\mathcal{A}.
\end{equation}
This result holds in the context of $C^*$-dynamical systems if certain clustering properties are satisfied by the state $\omega^\beta$, see e.g. Section $5.4.2$ in \cite{BratteliRobinson} for a detailed review. Furthermore, it also holds in the context of perturbative interacting quantum field theory if certain conditions are satisfied, see Theorem $3.4$ in \cite{FaldinoEquilibriumpAQFT} whose proof can be also obtained adapting the proof of Theorem 2 in \cite{BKR78}. At each perturbative order, when the property of return to equilibrium is satisfied, it is possible to obtain expectation values on equilibrium states for the interacting dynamics without having to directly consider the effect of the interaction on the state itself. In particular, since the state obtained taking the large time limit is an equilibrium state for the interacting dynamics, we are ensured that in the perturbative expansions of the expectation values no term dependent on time translation (as, for instance, terms growing in time), will survive to the large time limit. More specifically, we have:
\begin{thm}\label{thm: NoSecEq}
Let $\omega^{\beta}$ be an equilibrium state with respect to the free dynamics $\tau_t$. Given an interaction $V$ (Equation \eqref{eq:formaV}) and the corresponding interacting dynamics $\tau_t^V$, if return to equilibrium holds at all orders in perturbation theory: 
\begin{equation*}
    \lim_{T\to\infty} \omega^\beta(\tau^V_T (A))  = \omega^{\beta,V}(A)\, , \quad \forall A \in \mathcal{A}
\end{equation*}
then secular effects are absent. Here, $\omega^{\beta,V}$ is the equilibrium state for the interacting dynamics.
\end{thm}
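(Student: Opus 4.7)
The plan is to unpack what secular growth means order by order and then observe that convergence at each perturbative order forces boundedness, which excludes any polynomial (or otherwise divergent) time dependence.

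First, I would write $\omega^\beta(\tau_T^V(A))$ as a formal power series in the coupling constant $\lambda$. Since $\tau_T^V(A) = R_V(\tau_T(R_V^{-1}(A)))$ is a formal power series in $\lambda$ with coefficients in $\mathcal{A}$ (equivalently, via the Araki cocycle, $\tau_T^V(A) = U_V(T)\,\tau_T(A)\,U_V(T)^{-1}$ for $A$ supported in $\mathbb{M}^+$, where $U_V$ is itself a formal series), we obtain an expansion
\begin{equation*}
    \omega^\beta(\tau_T^V(A)) = \sum_{n\geq 0} \lambda^n\, c_n(T;A),
\end{equation*}
where each $c_n(T;A)$ is a well-defined function of $T\in\mathbb{R}^+$ built from truncated correlation functions of $\omega^\beta$ evaluated on nested time-translated interaction densities. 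A secular growth at order $n$ is by definition the appearance in $c_n(T;A)$ of a contribution that grows polynomially (or more generally unboundedly) in $T$; the toy-model analysis of Section \ref{se:toy-model} exhibits precisely such terms of order $T^{n-3/2}$ when return to equilibrium fails.

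Second, I would invoke the hypothesis order by order. The assumption $\lim_{T\to\infty}\omega^\beta(\tau_T^V(A)) = \omega^{\beta,V}(A)$ is understood in the sense of formal power series in $\lambda$, which means
\begin{equation*}
    \lim_{T\to\infty} c_n(T;A) = c_n^{\beta,V}(A),
\end{equation*}
where $c_n^{\beta,V}(A)$ denotes the $n$-th perturbative coefficient of $\omega^{\beta,V}(A)$ as given by the Fredenhagen--Lindner formula \eqref{eq: intGeneraleq}. A convergent numerical sequence is in particular bounded, so each $c_n(\,\cdot\,;A)$ is a bounded function of $T$ on $[0,\infty)$. This rules out any term in $c_n(T;A)$ which grows polynomially in $T$, and more generally any contribution diverging in the large-time limit; hence secular effects are absent at every finite order of perturbation theory.

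The statement is thus essentially a direct translation of the hypothesis: the real work is hidden in establishing return to equilibrium itself (which is the content of the subsequent sections via the notion of secular boundedness and suitable clustering of $\omega^\beta$). The only subtle point I would highlight in the writeup is that the equality with $\omega^{\beta,V}(A)$ must be interpreted coefficient-wise in $\lambda$, since the formal series need not converge numerically; once this interpretation is fixed, the argument is immediate. The main obstacle, which this theorem deliberately avoids, is establishing the hypothesis itself — proving that the individual $c_n(T;A)$ admit the required limit, which in pAQFT requires nontrivial clustering and decay properties of the truncated correlation functions of $\omega^\beta$ together with the spatial compactness of the interaction Lagrangian $V$.
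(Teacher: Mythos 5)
Your proposal is correct and is in substance the paper's own argument: the paper's proof is a one-sentence remark that the limit state $\omega^{\beta,V}$, being KMS, is invariant under time translations, so the assumed convergence immediately rules out secular growth. You simply make explicit what the paper leaves implicit — that the hypothesis is to be read coefficient-wise in $\lambda$, and that order-by-order convergence as $T\to\infty$ (together with continuity of each coefficient $c_n(T;A)$ on compact $T$-intervals) yields boundedness of every perturbative coefficient, which is exactly the stated absence of secular effects.
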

\begin{proof}
The proof of this theorem is an immediate consequence of the fact that a state which satisfies the KMS condition is invariant under time translations. 
\end{proof}
The property of return to equilibrium, that guarantees the absence of secular growths, as a consequence of the thermalisation process, has been proven to hold in Theorem $3.4$ of \cite{FaldinoEquilibriumpAQFT} for a massive real scalar field with spatially compact interaction:

\begin{thm}\label{thm: thermalisation}
Let $\omega^{\beta}$ be the KMS state of a real scalar QFT with respect to the free dynamics $\tau_t$, at finite inverse temperature $\beta$. Then, the state is stable under perturbations described by a spatially compact interaction $V$ of the kind \eqref{eq:formaV}. Namely:
\begin{equation*}
    \lim_{T \to \infty} \omega^{\beta}(\tau_T^V(A)) = \omega^{\beta, V}(A)\;\;\forall A\in\mathcal{A}.
\end{equation*}
\end{thm}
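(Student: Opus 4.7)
The plan is to follow the strategy of \cite{BKR78} Theorem 2, as adapted to the pAQFT setting in \cite{FaldinoEquilibriumpAQFT}. First, I would use the intertwining relation $\tau^V_T(A) = U_V(T)\,\tau_T(A)\,U_V(T)^{-1}$, valid for $A$ localised in $\mathbb{M}^+$, to rewrite
\begin{equation*}
\omega^\beta(\tau^V_T(A)) = \omega^\beta\!\bigl(U_V(T)\,\tau_T(A)\,U_V(T)^{-1}\bigr),
\end{equation*}
and expand $U_V(T)$ as a time-ordered exponential in the generator $K$ of \eqref{eq:formaK}, whose $n$-th order term is an iterated integral of $\tau_{s_i}(K)$ over $0 \le s_1 \le \dots \le s_n \le T$. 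Using the $\tau_t$-invariance of $\omega^\beta$, one shifts the argument of $A$ so that $T$ enters only through the extremes of the real-time integration.

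The second step would be to reorganise the perturbative expansion by passing to truncated (connected) correlation functions $\omega^{\beta,\mathcal{T}}$ via the linked-cluster identities recalled in Appendix \ref{app: espansioni}. Applying the same reorganisation to the KMS formula \eqref{eq: intKMS}, one obtains a parallel expression for $\omega^{\beta,V}(A)$ in which the real-time contour is replaced by a segment of length $\beta$ along the imaginary axis. The difference $\omega^\beta(\tau^V_T(A)) - \omega^{\beta,V}(A)$ is then, at every order in $\lambda$, a finite sum of multi-time integrals of truncated correlators $\omega^{\beta,\mathcal{T}}(\tau_{s_1}(K)\otimes\dots\otimes\tau_{s_n}(K)\otimes A)$ over contours that differ only by the position and length of their real-time legs.

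The third and crucial step is to establish that these truncated correlators decay fast enough in the real-time variables $s_i$ for the difference to vanish as $T\to\infty$. Here the spatial compactness of $V$ is decisive: the generator $K$ inherits compact spatial support from $h(\mathbf{x})$ and compact temporal support from $\dot\chi(t)$, so the Fourier representation of the free KMS two-point function is smeared against test functions whose momentum-space profile is Schwartz in $\mathbf{p}$ and compactly supported around $\supp{\dot\chi}$ in the time variable. A Riemann--Lebesgue / non-stationary-phase estimate then yields integrable decay in each time separation $|s_i - s_j|$ and $|s_i|$, and this clustering in time propagates inductively to truncated $n$-point functions. With integrable clustering in hand, dominated convergence allows extending the real-time integrals to the whole real line as $T \to \infty$, and the analyticity of $\omega^\beta$ in strips of width $\beta$ permits deforming the extended real-time contour to the imaginary-time one, matching exactly the expansion of $\omega^{\beta,V}$.

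The main obstacle I expect is precisely this third step: proving at every perturbative order $n$ a clustering estimate on truncated $n$-point functions which is integrable in the real-time variables, with rates that do not deteriorate when $A$ varies in a given localisation region. The earlier steps are essentially algebraic manipulations at the level of formal power series in $\lambda$, and the final passage to the limit is a standard combination of dominated convergence and KMS analyticity once clustering is established. In the massive scalar case the necessary decay of the free two-point function is classical, which is why the theorem holds in the stated generality; in less favourable situations (e.g.\ infrared-singular or massless theories, or interactions without a spatial cutoff) the same scheme would break down precisely at this clustering step, consistently with the discussion in the introduction.
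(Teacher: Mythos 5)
Your plan is essentially the paper's own argument: the paper does not prove Theorem \ref{thm: thermalisation} in-text but quotes it from Theorem 3.4 of \cite{FaldinoEquilibriumpAQFT}, whose proof — as the paper itself notes — is obtained by adapting Theorem 2 of \cite{BKR78}, which is precisely your scheme of cocycle expansion, passage to truncated functions, $L^1$-clustering in time coming from the $t^{-3/2}$ timelike decay of the massive thermal two-point function under the spatial cutoff, and contour deformation to the imaginary-time Araki expansion of $\omega^{\beta,V}$. The only structural difference is cosmetic: the published route factors the analysis through an intermediate clustering lemma for the interacting dynamics $\tau^V_t$ (Propositions 3.1 and 3.3 of \cite{FaldinoEquilibriumpAQFT}) before concluding, whereas you compare the two perturbative expansions order by order directly, but the analytic core — integrable clustering of the truncated $n$-point functions, which you correctly identify as the crucial and only non-algebraic step — is identical.
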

Once the large time limit is taken, also the adiabatic limit can be performed relying on the results obtained in \cite{FredenhagenLindnerKMS_2014}:
\begin{equation*}
    \lim_{h \to 1} \lim_{T \to \infty} \omega^{\beta}(\tau_T^V(A)) = \lim_{h \to 1} \omega^{\beta, V}(A).
\end{equation*}
This shows that, once thermalisation is reached, it cannot be altered by extending the spatial support  of the interaction. On the contrary, it is natural to ask whether or not the property of return to equilibrium is always guaranteed, in particular in the presence of non-compactly spatially supported interactions. A negative answer to this question has been already given in Section 4 of \cite{FaldinoEquilibriumpAQFT}, where the authors provide some explicit example (e.g.~perturbation of the mass) of situations in which the KMS state of the free theory does not thermalise to a KMS state of the interacting theory. This is again a consequence of the infinitely extended spatial support of the interaction. Therefore, if we studied from the beginning $\lim_{h \to 1} \omega^{\beta}(\tau_t^V(A))$, the hypotheses of Theorem \ref{thm: NoSecEq} could be violated leading to possible occurrence of secular effects. In the next subsections, we extend this result concerning the property of return to equilibrium to the case of a complex scalar field and of a fermionic field. In addition, in Section \ref{se:applications}  we will also provide new physically interesting examples in which the presence of an interaction supported everywhere in space breaks the property of return to equilibrium leading to secular effects. As we will directly check, these problems can be avoided taking in consideration the effect of the interaction on the state, i.e.~using from the very beginning the KMS state for the interacting theory.

\subsubsection{Dirac field}\label{se:dirac-field}
The beginning of the section is devoted to briefly present the main differences between the Dirac and the scalar case. First, the Hadamard two-point functions derived from the Hadamard parametrix of the free scalar field on Minkowski spacetime are (see Chapter 17 of \cite{gerardbook}):
\begin{equation*}
    \cancel{S}_2^{\pm}(z) = \pm (i\cancel{\partial}_x + m) \omega_2^{\infty}(\pm z), \qquad  z=x-y,
\end{equation*}
where $\omega_2^{\infty}(z)$ is the two-point function of the ground state for the free scalar field and $\cancel{S}_2^{\pm}(z)$ are the two-point functions defining the ground state $\cancel{\omega}^{\infty}$ for the free Dirac field. $\cancel{S}_{2}^{\pm}$ are checked to satisfy the Dirac equation with $\cancel{S}$ the Pauli-Jordan function (causal propagator):
\begin{equation*}
    \cancel{S}_2^+(x,y) + \cancel{S}_2^{-}(x,y) = i \cancel{S}(x,y).
\end{equation*}
The equilibrium state with respect to Minkowski time translations at inverse temperature $\beta$ of the free Dirac field, $\cancel{\omega}^{\beta}$, has associated two-point functions denoted by $\cancel{S}^{\beta, \pm}_2$. These, are derived imposing the KMS condition.\\
For later purposes, we write here explicitly the two-point functions of the ground state\footnote{Here, the spinor indices are kept implicit for the sake of simplicity.}:
\begin{align}
    \cancel{S}_2^{+}(x - y) &\coloneqq \cancel{\omega}^{\infty}(\psi(x) \psi^*(y)) \nonumber\\
    &= \frac{1}{(2 \pi)^3} \int \frac{\di^3\mathbf{p}}{2 \en{p}} (-\gamma^0 \en{p} - \gamma^i p_i + m)e^{-i(\en{p} (x_0 - y_0) - \mathbf{p} (\mathbf{x} - \mathbf{y}))},\label{eq: 2puntiGro_+}\\
    \cancel{S}_2^{-}(x - y) &\coloneqq \cancel{\omega}^{\infty}(\psi^*(y) \psi(x)) \nonumber\\
    &= -\frac{1}{(2 \pi)^3} \int \frac{\di^3\mathbf{p}}{2 \en{p}} (\gamma^0 \en{p} + \gamma^i p_i + m)e^{i(\en{p} (x_0 - y_0) - \mathbf{p} (\mathbf{x} - \mathbf{y}))},\label{eq: 2puntiGro_-}
\end{align}
and KMS states:
\begin{align} 
    \cancel{S}^{\beta, +}_{2}(x-y) &\coloneqq \cancel{\omega}^{\beta}(\psi(x) \psi^*(y)) \nonumber\\
    &= \frac{1}{(2 \pi)^3} \int \frac{\di^3\mathbf{p}}{2 \en{p}} \bigg( \frac{(-\gamma^0 \en{p} - \gamma^i p_i + m)e^{-i\en{p} (t_x - t_y)}}{(1 + e^{-\beta \en{p}})} - \frac{(\gamma^0 \en{p} - \gamma^i p_i + m)e^{i\en{p} (t_x - t_y)}}{(1 + e^{\beta \en{p}})}  \bigg) e^{i \mathbf{p} (\mathbf{x}- \mathbf{y})},\label{eq: 2puntiKMS_+}\\
    \cancel{S}^{\beta, -}_{2}(x-y) &\coloneqq \cancel{\omega}^{\beta}(\psi^*(y) \psi(x)) \nonumber\\
    &= \frac{1}{(2 \pi)^3} \int \frac{\di^3\mathbf{p}}{2 \en{p}} \bigg(\frac{(-\gamma^0 \en{p} - \gamma^i p_i + m)e^{-i\en{p} (t_x - t_y)}}{(1 + e^{\beta \en{p}})} - \frac{(\gamma^0 \en{p} - \gamma^i p_i + m)e^{i\en{p} (t_x - t_y)}}{(1 + e^{-\beta \en{p}})}  \bigg) e^{i \mathbf{p} (\mathbf{x}- \mathbf{y})},\label{eq: 2puntiKMS_-}
\end{align}
where $\psi, \psi^* \in \mathcal{F}_{\text{loc}}$ are the linear Dirac fields and $\en{p} = \sqrt{|\mathbf{p}|^2 + m^2}$. Interacting KMS states at finite inverse temperature $\beta$ are constructed for Fermi fields following the steps presented in \cite{Galanda}. This leads to the perturbative expansions reported in Appendix \ref{app: espansioni}, where the states for the free theory have to be understood as the one just described for the fermionic case. 

We now want to generalize Theorem \ref{thm: thermalisation} for interacting Dirac field theories. In order to do so, we first need to prove the clustering properties for expectation values of compactly supported observables at large time separation:
\begin{lem}[Clustering for $\tau_t$]\label{lem: ClusteringFree}
Consider any two interacting observables $A,B \in \mathcal{A}$ of a Dirac quantum field theory, supported after the switching on of the interaction represented by a potential $V$ of the kind given in Equation \eqref{eq:formaV}. Then, given a state $\cancel{\omega}^{\beta}$ which is KMS with respect to the free dynamics $\tau_t$ at finite inverse temperature $\beta$, it holds:
\begin{equation*}
    \lim_{T \to \infty} \cancel{\omega}^{\beta}(A \, \tau_{T}(B)) = \cancel{\omega}^{\beta}(A)\cancel{\omega}^{\beta}(B),
\end{equation*}
where the convergence of this limit is in the sense of formal power series.
\end{lem}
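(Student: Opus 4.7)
The plan is to reduce the claim, order by order in $\lambda$, to a statement about quasifree correlators of the free Dirac field and then to invoke Riemann-Lebesgue on the momentum representations of the two-point functions \eqref{eq: 2puntiKMS_+}--\eqref{eq: 2puntiKMS_-}. At each fixed order in $\lambda$, I am confronted with a finite sum of expressions of the form $\cancel{\omega}^{\beta}(X\,\tau_T(Y))$, where $X$ and $Y$ are finite products of normal-ordered Wick polynomials in the basic Dirac fields $\psi,\psi^*$, smeared with smooth compactly supported test functions in $\mathbb{M}^+$. Since $\cancel{\omega}^{\beta}$ is quasifree on the free algebra, the fermionic Wick theorem expresses each such expectation value as a finite signed sum over pairings of the basic field factors entering $X$ and $\tau_T(Y)$; every pairing produces a product of two-point functions $\cancel{S}_2^{\beta,\pm}$ evaluated at the smearing points.

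I split these pairings according to whether both endpoints lie in $X$, both in $\tau_T(Y)$, or one in each: I call the last class \emph{crossing}. The contribution of the pairings without crossings factorises exactly as $\cancel{\omega}^{\beta}(X)\,\cancel{\omega}^{\beta}(\tau_T(Y))=\cancel{\omega}^{\beta}(X)\,\cancel{\omega}^{\beta}(Y)$, by $\tau_t$-invariance of $\cancel{\omega}^{\beta}$. It therefore suffices to show that every contribution containing at least one crossing factor vanishes as $T\to\infty$.

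A generic crossing factor, after collecting the residual smearings produced by all the other contractions, has the form
\begin{equation*}
    I_T \coloneqq \int d^4x\, d^4y \; f(x)\, g(y)\, \cancel{S}_2^{\beta,\pm}(x-y+Te_0),
\end{equation*}
with $e_0$ the time unit vector, and $f,g$ smooth and compactly supported in $\mathbb{M}^+$ (inherited from the smearings of $X,Y$, possibly with finitely many derivatives acting on them coming from the spinor structure and from Wick coefficients). Substituting \eqref{eq: 2puntiKMS_+} or \eqref{eq: 2puntiKMS_-} and performing the spatial Fourier transform reduces $I_T$ to an oscillatory integral
\begin{equation*}
    I_T = \int_{\mathbb{R}^3}\frac{d^3\mathbf{p}}{2\en{p}}\, M_{\pm}(\mathbf{p})\, e^{\pm i\en{p} T},
\end{equation*}
where $M_{\pm}(\mathbf{p})$ packages the spacetime Fourier transforms of $f,g$ evaluated on the mass shell, the Fermi-Dirac weights $(1+e^{\pm\beta\en{p}})^{-1}$, and the spinor matrices appearing in \eqref{eq: 2puntiKMS_+}--\eqref{eq: 2puntiKMS_-}. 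Since $f,g$ are Schwartz, $\hat f(\pm\en{p},\mathbf{p})$ and $\hat g(\mp\en{p},-\mathbf{p})$ decay rapidly in $|\mathbf{p}|$; the thermal factors are uniformly bounded; and because $m>0$, $1/\en{p}$ is bounded as well. Hence $M_{\pm}/\en{p}\in L^1(\mathbb{R}^3)$, and Riemann-Lebesgue applied to $e^{\pm i\en{p}T}$ yields $I_T\to 0$ as $T\to\infty$.

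The main obstacle is precisely this last Fourier-analytic step, together with the bookkeeping that guarantees that after redistributing all Wick contractions the effective smearings $f,g$ entering each crossing factor are still smooth and of compact support, so that Riemann-Lebesgue genuinely applies. Once this verification is carried out, summing the finitely many pairings and the finitely many terms appearing at each order in $\lambda$ concludes the proof in the sense of formal power series.
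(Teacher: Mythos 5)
Your reduction to crossing contractions, and the factorisation of the non-crossing pairings (using evenness of the observables for the fermionic signs and $\tau_t$-invariance of $\cancel{\omega}^{\beta}$), is sound; but the step you defer as ``bookkeeping'' is precisely where the argument breaks, and it cannot be carried out as stated. The elements appearing at each order in $\lambda$ are $\star$-products of several local Wick monomials (the interacting observables themselves, at order $n$, carry $n$ extra vertices joined by retarded/time-ordered propagators), and the contractions internal to $X$ or to $\tau_T(Y)$ multiply the original test functions by kernels of the form $\prod_{i<j}\cancel{S}_2^{\beta,\pm}(x_i-x_j)$ or $\prod_{i<j}\cancel{S}_F(x_i-x_j)$, which are singular on lightcones inside the supports. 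Hence the effective smearings $f,g$ entering a crossing factor are \emph{not} smooth: they are only compactly supported (microcausal) distributions of finite order. For such distributions the Fourier transforms grow polynomially in $|\mathbf{p}|$, and since one of the two terms in \eqref{eq: 2puntiKMS_+} carries the non-decaying weight $(1+e^{-\beta\en{p}})^{-1}\to 1$, the amplitude $M_{\pm}(\mathbf{p})/\en{p}$ need not belong to $L^1(\mathbb{R}^3)$; Riemann--Lebesgue therefore does not apply, and no purely soft Fourier argument suffices, because dominating a pairing against a finite-order distribution requires decay of the two-point function \emph{together with all its derivatives}.

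That quantitative input is exactly how the paper proceeds: the proof follows Proposition $3.1$ of \cite{FaldinoEquilibriumpAQFT} step by step, observing that for $T$ large the separation between the supports is timelike, where $\cancel{S}^{\beta,\pm}_2$ is smooth, and supplying as the only new ingredient the estimate $|\partial^{(\alpha)}_x\partial^{(\beta)}_y \cancel{S}^{\beta,\pm}_2(x;t_y+T,\mathbf{y})| \leq C_{\alpha} T^{-3/2}$ of Proposition \ref{app: 1}, proved via Bessel-function asymptotics for the vacuum part and an integration-by-parts analysis for the thermal remainder $W^{\pm}_{\beta}$. Decay of the $C^k$-seminorms on the relevant compact sets is what controls the crossing factor when paired with a finite-order distribution, and the regimes of small and large time separation are joined by a continuity argument for distributions (compare the analogous discussion in the proof of Corollary \ref{cor: NoSecInvariant}). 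To repair your proof, keep your pairing decomposition but replace the Riemann--Lebesgue step by this derivative-decay estimate; note that your argument is complete only in the special case where $A$ and $B$ are single normal-ordered monomials smeared with smooth functions (the residual internal kernels are then the smooth diagonal parts $W_{\beta}$), which does not cover the elements of $\mathcal{A}[[\lambda]]$ the lemma is about.
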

\begin{proof}
The proof of the Lemma follows in every step the proof of Proposition $3.1.$ in \cite{FaldinoEquilibriumpAQFT} (i.e. the analogous statement for the real scalar case). In particular, all the general assumptions concerning the compact support of the interacting observables are satisfied, as well as the Hadamard properties of the thermal state. The only non trivial generalization to the fermionic case consists in the proof of the large time decay property of the fermionic thermal two point function for timelike separation. This result is obtained in Proposition \ref{app: 1} of the present paper.
\end{proof}
Once the clustering property for $\tau_t$ is established, also the clustering property for the interacting dynamics $\tau_t^V$ is proven:
\begin{equation*}
    \lim_{T \to \infty} \bigg[ \cancel{\omega}^{\beta}(A \, \tau_T^V(B)) - \cancel{\omega}^{\beta}(A) \cancel{\omega}^{\beta}(\tau_T^V(B)) \bigg] = 0,
\end{equation*}
In fact, the proof given in Proposition $3.3$ of \cite{FaldinoEquilibriumpAQFT}, for the real scalar field case, relies on the considered QFT just in the statement for the $\tau_t$ clustering. Finally, return to equilibrium for a KMS state $\cancel{\omega}^{\beta}$ in the form of Theorem \ref{thm: thermalisation}, is proven with the clustering condition for $\tau_t^V$. Again, the reason is that the proof of Theorem $3.4.$ of the same above cited work does not depend on the considered type of QFT.

\subsubsection{Complex scalar field}\label{se:complex-scalar}
We now briefly recall the main properties of the charged (complex) scalar field $\varphi$. The aim is to explicitly provide the form of the two-point function of the equilibrium state with respect to the free time evolution at inverse temperature $\beta$. We recall that the complex scalar field $\varphi$ can be seen as the linear combination of two independent real massive Klein-Gordon scalar fields $\varphi_1$ and $\varphi_2$. Hence, the canonical commutation relations 
\[
    \left[\varphi(x),\varphi^\dagger(y)\right] = i\Delta(x,y)\, , \quad \left[\varphi(x),\varphi(y)\right] =0 \, , \quad \left[\varphi^\dagger(x),\varphi^\dagger(y)\right] =0
\]
are given in term of $\Delta(x,y)$ which is the usual Pauli-Jordan propagator (commutator function) of the Klein--Gordon massive field on Minkowski spacetime:
\begin{equation*}
    \Delta(x-y)=-\int\frac{\di^3\textbf{k}}{(2\pi)^3} \frac{\sin(\en{k}(t_x-t_y))}{\en{k}}e^{i\mathbf{k}(\mathbf{x-y})},
\end{equation*}
where $\en{k}=\sqrt{|\mathbf{k}|^2+m^2}$. We observe that using the commutation relations and the KMS property of the state, together with its invariance under space-time translations, we can derive the following expression for the two-point function:
\begin{align}\label{eq: 2pfscalar}
    \omega_2^{\beta,+}(x,y)&\coloneqq\omega^\beta(\varphi(x)\varphi^\dagger(y))=
    \omega_2^{\beta}(x-y) \nonumber\\
    &=\int \frac{1}{(2\pi)^3}\left[\frac{e^{-i\en{p}(t_x-t_y)}}{(1-e^{-\beta\en{p}})2\en{p}}-\frac{e^{i\en{p}(t_x-t_y)}}{(1-e^{\beta\en{p}})2\en{p}}\right] e^{i\mathbf{p}(\mathbf{x-y})}\;\di^3\textbf{p}.
\end{align}
To determine an expression for the two-point function $\omega^\beta(\varphi^\dagger(x)\varphi(y))$ we just need to compute a complex conjugation and to make use of the commutation relation and the explicit expression for the two-point function $\omega_2^\beta(x,y)$, to obtain that:
\begin{equation*}
    \omega_2^{\beta,-}(x,y)\coloneqq \omega(\varphi^\dagger(x)\varphi(y))=\omega_2^\beta(x,y).
\end{equation*}
We observe that both the two-point functions of the complex scalar field coincide with the KMS two-point function of the real scalar field. 
Therefore, all the results concerning the construction of the interacting KMS state directly extend from the real to the complex case. The same holds in particular for the return to equilibrium properties we are interested in.

\subsection{Absence of secular growths for general states}
After the previous discussion for equilibrium states, general conditions for the absence of secular growths are studied in this subsection.\\ 
The starting point is the following definition of a class of functions, with certain integrability and boundedness properties, useful for the statement of the main theorem: 

\begin{defn}\label{def:secularily-bounded}
A function $f(t_1, \ldots, t_n)$ of real variables $(t_1, \ldots, t_n) \in \mathbb{R}^n$
is called \textbf{secularily bounded} if it satisfies both the following two conditions:\begin{itemize}
    \item[i)] $f$ is absolutely integrable on $\mathbb{R}^{n-1}$ in the variables $t_{p_1}, \ldots, t_{p_{n-1}}$ for any choice of $\{p_1, \ldots, p_{n-1}\} \subset \{1, \ldots, n\}$ for any permutation $\{p_1, \ldots, p_{n}\} $ of the set $ \{1, \ldots, n\}$, and for every $t_{p_n}\in\mathbb{R}$:
    \begin{equation*}
    \int_{\mathbb{R}^{n-1}} |f(t_1, \ldots, t_n)| \di t_{p_1} \cdots \di t_{p_{n-1}} < \infty.
\end{equation*}
\item[ii)]  The function:
\begin{equation*} 
    g(t_{p_n}) \coloneqq \int_{\mathbb{R}^{n-1}} |f(t_1, \ldots, t_n)| \di t_{p_1} \cdots \di t_{p_{n-1}}
\end{equation*}
satisfies a bound $|g(t_{p_n})| \leq C_1$ for $C_1 \in \mathbb{R}^+$, for every $ p_n\in\{1,\ldots,n\}$.
\end{itemize}
\end{defn}

With this definition at hand, we present the main result of this section. We show that, for adiabatically switched on interactions, outside the limit $h \to 1$ secular effects are absent on states with secularily bounded truncated $n$-point functions.\begin{thm}\label{th:truncated}
    On Minkowski spacetime $(\mathbb{M} = \mathbb{R} \times \Sigma, \eta)$, consider an interaction $V$:
    \begin{equation*}
        V \coloneqq  \lambda \int_{\mathbb{M}} \chi(t) h(\mathbf{x}) \mathcal{L}_I \di^4x,
    \end{equation*}
    with $h \in \mathcal{C}^\infty_0(\Sigma)$ and $\chi \in \mathcal{C}^{\infty}(\mathbb{R})$ a smooth switch-on \eqref{eq:accad}. 
    Outside the adiabatic limit, consider $K \in \mathcal{A}[[\lambda]]$, the generator of the cocycle intertwining $\tau_t$ with $\tau_t^V$, defined as:
    \begin{equation*}
        K=-R_V\bigg(\lambda \int_{\mathbb{M}} \dot{\chi}(t) h(\mathbf{x}) \mathcal{L}_I(x) \di^4x\bigg).
    \end{equation*}
   Let $\omega$ be a state on $\mathcal{A}$ such that for every $\{A_i\}_{i\in\{1,\dots, n\}}\in \mathcal{A},n\in\mathbb{N}$ the function:
    \begin{equation}\label{eq:truncated-function-hypothesis}
    f_{A_1\dots A_n}(t_1,\dots,{t_n}) \coloneqq \omega^{\mathcal{T}}(\tau_{t_1}(A_1)\otimes \dots \otimes \tau_{t_n}(A_n)) 
    \end{equation}
   is secularily bounded, for $\omega^{\mathcal{T}}$ the truncated or connected functions of $\omega$. Then, at each fixed perturbative order $N$ and for every $A\in\mathcal{A}$ it exists a constant $C_N \in \mathbb{R}^+$ such that the following uniform bound holds:
    \[
    \left|\omega^{(N)}(\tau^V_t(A))\right| \leq C_N,
    \]
    where $\omega^{(N)}(\tau^V_t(\cdot))$ denotes the expansion at $N$-th order in $K$ of $\omega(\tau^V_t(\cdot))$. In particular, no secular effects are present.
\end{thm}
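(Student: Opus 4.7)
The plan is to expand $\tau_t^V(A)$ via the Dyson series in $K$, apply the moment--cumulant (truncated) expansion to the resulting correlators, and exploit the identity $U_V(t)U_V(t)^{-1}=\mathbb{1}$ to reduce the order-$N$ contribution to the integral of a single truncated correlator that contains $\tau_t(A)$.

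Concretely, I would start from $\tau_t^V(A)=U_V(t)\,\tau_t(A)\,U_V(t)^{-1}$ and insert the Dyson expansion $U_V(t)=\sum_{n\geq 0}\frac{i^n}{n!}\int_{[0,t]^n}T[\tau_{s_1}(K)\cdots\tau_{s_n}(K)]\,ds_1\cdots ds_n$, together with its anti-time-ordered counterpart for $U_V(t)^{-1}$. Taking the expectation produces, at order $N$ in $K$, a finite sum over $k+l=N$ of integrals on $[0,t]^{N}$ of correlators of the form $\omega\bigl(T[\tau_{s_1}(K)\cdots\tau_{s_k}(K)]\,\tau_t(A)\,\overline{T}[\tau_{s'_1}(K)\cdots\tau_{s'_l}(K)]\bigr)$. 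I would then apply the moment--cumulant formula $\omega(X_1\cdots X_{N+1})=\sum_{\mathcal{P}}\prod_{p\in\mathcal{P}}\omega^{\mathcal{T}}(X_p)$ and organise the partitions by the unique block $P_A$ containing $\tau_t(A)$. For each choice of $P_A$ the sum of the remaining partitions reassembles, by inverse moment--cumulant, into a full correlator of the leftover $K$'s; combining the binomials produced by the choice of $P_A$ with the Dyson factorials $1/(n!\,m!)$ and resumming over $n,m$, this leftover factor collapses to $\omega\bigl(U_V(t)U_V(t)^{-1}\bigr)=\omega(\mathbb{1})=1$ at every order in $K$. Since the coefficients of this constant series vanish in all positive orders of $K$, the only surviving contributions are those in which every $K$ lies in the cluster of $A$, leaving
\[
\omega^{(N)}\bigl(\tau_t^V(A)\bigr)=\sum_{k+l=N}\frac{i^k(-i)^l}{k!\,l!}\int_{[0,t]^{N}}\omega^{\mathcal{T}}\Bigl(T\!\bigl[\tau_{s_1}(K)\cdots\tau_{s_k}(K)\bigr]\,\tau_t(A)\,\overline{T}\!\bigl[\tau_{s'_1}(K)\cdots\tau_{s'_l}(K)\bigr]\Bigr)\,ds\,ds'.
\]

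For the bound, I would split each domain $[0,t]^{N}$ into the $k!\,l!$ chambers determined by the orderings of $\{s_i\}$ and $\{s'_j\}$; on each chamber the operator ordering inside $\omega^{\mathcal{T}}$ is fixed and the integrand coincides with one of the functions $f_{A_1\ldots A_{N+1}}$ appearing in \eqref{eq:truncated-function-hypothesis}. Taking absolute values, enlarging the domain from $[0,t]^{N}$ to $\mathbb{R}^{N}$, and integrating out the $N$ variables $\{s_i,s'_j\}$ with $t$ held fixed, conditions (i)--(ii) of Definition \ref{def:secularily-bounded} supply a $t$-independent majorant; summing the finitely many pairs $(k,l)$ with $k+l=N$ yields the uniform estimate $|\omega^{(N)}(\tau_t^V(A))|\leq C_N$. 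The principal technical obstacle will be the combinatorial bookkeeping in the cancellation step: one has to check carefully that the partitions disjoint from $A$ really resum, with the correct time-ordering conventions on each side of $\tau_t(A)$ and the proper weights $1/(n!\,m!)$ from the Dyson series, to $\omega(U_V(t)U_V(t)^{-1})$ at every order in $K$. Once this cluster cancellation is established, the analytic estimate is a direct consequence of the secular boundedness hypothesis.
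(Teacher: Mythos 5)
Your proposal is correct, and it reaches the same structural conclusion as the paper --- at order $N$ only the truncated correlator in which all $N$ insertions of $K$ lie in the cluster of $\tau_t(A)$ survives, after which Definition \ref{def:secularily-bounded} yields the uniform bound exactly as you describe --- but the cancellation is implemented differently. The paper starts from the iterated-commutator expansion \eqref{eq: espansDynam} of $\tau_t^V$ (which is what you obtain by multiplying out your two Dyson series for $U_V(t)$ and $U_V(t)^{-1}$), expands each nested commutator as a signed sum over ordered index subsets $I$, and proves $\omega(B)=\omega^{\mathcal{T}}(B)$ for the commutator $B$ by a purely local, order-by-order pairwise cancellation: using the same recursion $\omega(J)=\sum_{J_0\ni 0}\omega^{\mathcal{T}}(J_0)\,\omega(J\setminus J_0)$ that underlies your moment--cumulant split, the two terms indexed by $I$ and $\tilde I=I\cup\{s\}$ (with $s$ the smallest leftover index) carry opposite signs $(-1)^{|I^c|}$ and cancel, so no resummation across orders is ever needed. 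Your route instead resums the disconnected part over all orders into $\omega\bigl(U_V(t)U_V(t)^{-1}\bigr)=1$, a linked-cluster argument closer to the in--in/Keldysh folklore; the step you flag as the main obstacle does go through, because the time-ordered integrands are symmetric, so the symmetrized domains factorize via $\frac{1}{k!}\binom{k}{a}=\frac{1}{a!\,(k-a)!}$ and the induced order of the cluster subword is independent of the leftover times --- hence this is sound rather than a gap, though it costs you formal interchange of summations that the paper's fixed-simplex argument on $tS_n$ avoids entirely. One small caveat: the intertwining $\tau_t^V(A)=U_V(t)\,\tau_t(A)\,U_V(t)^{-1}$ is stated in the paper only for $t>0$ and observables supported in $\mathbb{M}^+$, so for generic $A\in\mathcal{A}$ you should take the expansion \eqref{eq: espansDynam} as the definition of the perturbative series, as the paper does; your subsequent manipulations are unaffected. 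Your final estimate --- chamber decomposition, enlarging $[0,t]^N$ to $\mathbb{R}^N$, integrating the $N$ auxiliary variables and bounding uniformly in the distinguished variable $t$ via conditions (i)--(ii) --- coincides with the paper's.
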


\begin{proof}
To study $\omega(\tau_t^{V}(A))$ we start considering the perturbative expansion of the interacting dynamics (see Equation \eqref{eq: espansDynam}):
\[
\tau_t^{V}(A) = \tau_t(A) + \sum_{n\geq 1}i^n 
\int_{t S_n}[ \tau_{t_1}(K),[\dots ,[\tau_{t_n}(K),\tau_t(A) ]\dots  ] ]
\di t_1 \dots \di t_n,
\]
where $tS_n=\{(t_1,\dots, t_n) \in \mathbb{R}^n | (0<t_1<t_2<\dots <t_n<t)  \}$ is the $n$ dimensional simplex.\\
The iterated commutator can be more conveniently written as: 
\[
B\coloneqq [A_n,[\dots, [A_1,A_0]\dots]] = \sum_{I\subset  \{1,\dots, n\} }
(-1)^{|I^c|} \left(\prod_{i\in \overleftarrow{I}}A_i \right)   A_0 \left(\prod_{j\in \overrightarrow{I^c}} A_j \right),
\]
where the sum is over all the possible subsets $I$ of indices in $\{ 1, \ldots, n\}$, $I^c$ is the complementary of $I$ in $\{ 1, \ldots, n\}$, $|I|$ is the cardinality of the set and, to emphasise that the indices are taken with increasing or decreasing order, we adopted the notation $\overrightarrow{I^c}$ respectively $\overleftarrow{I}$.\\
Hence, for $B$ as defined above:
\begin{equation*}
    \omega(B)=\sum_{I\subset \{1,\dots, n\} }(-1)^{|I^c|} \omega\left(\prod_{j\in (\overleftarrow{I}\cup\{0\}\cup \overrightarrow{I^c})} A_j\right) = \sum_{I\subset\{1,\dots,n\}} (-1)^{|I^c|}\omega(\overleftarrow{I}\cup\{0\}\cup \overrightarrow{I^c})
\end{equation*}
where we used the compact notation $\omega(J) = \omega(\prod_{j\in J} A_J)$,  with $J$ an ordered set of indices. If we similarly denote $\omega^{\mathcal{T}}(J) = \omega^{\mathcal{T}} (\prod^{\otimes}_{j\in J} A_J)$, we can compare $\omega(B)$ with $\omega^{\mathcal{T}}(B)$:
\begin{equation}\label{eq:decomposition}
\omega(B) - \omega^{\mathcal{T}}(B)=\sum_{I\subset\{1,\dots,n\}} (-1)^{|I^c|}
\big(\omega(\overleftarrow{I}\cup\{0\}\cup \overrightarrow{I^c})-\omega^{\mathcal{T}}(\overleftarrow{I}\cup\{0\}\cup \overrightarrow{I^c}) \big).
\end{equation}
We shall now prove that $\omega(B) = \omega^{\mathcal{T}}(B)$ showing that the terms in the sum at the right hand side of the previous equation cancel in pairs. To this end, the recursive definition of truncated functions (see e.g. Section 2 of \cite{BKR78}) implies that the generic element in the right hand side can be factorised in the following way:
\[
\omega(J)-\omega^{\mathcal{T}}(J) = \sum_{J_0\subset J,\;  0\in J_0} \omega^{\mathcal{T}}(J_0)\omega(J\setminus J_0).
\]
In our case, the ordered set $J = \overleftarrow{I}\cup\{0\}\cup \overrightarrow{I^c}$. We observe that in the previous factorisation: 
\[
J_0=\overleftarrow{J_0^1} \cup \{0\}\cup \overrightarrow{J_0^{2}} \qquad \text{and} \qquad
J\setminus J_0 = \overleftarrow{K^1} \cup  \{s\} \cup \overrightarrow{K^2}
\]
where both $J_0^1\subset I$, $K^1\subset I$ and $s$ is the smallest element of $J \setminus J_0$.\\
Fix now a $J_0^1$ and a $K^1$ and consider the decomposition 
\[
\omega^{\mathcal{T}}(\overleftarrow{J_0^1} \cup \{0\}\cup \overrightarrow{J_0^{2}})\omega(\overleftarrow{K^1} \cup  \{s\} \cup \overrightarrow{K^2}).
\]
This decomposition is present only in two elements of the sum  at the right hand side of \eqref{eq:decomposition}. One is indexed by $I=J_0^1\cup K^1 \cup \{ s \}$ with $I^c=J^2_0\cup K^2$, while the other is indexed by $\tilde{I}=J_0^1\cup K^1 $ with $\tilde{I}^c=\{s\}\cup J^2_0\cup K^2$. Since
$|\tilde{I}^c|=|I^c|+1$, these two elements appear with opposite sign in the sum and, therefore, they mutually cancel. This proves that:
\[
\omega(\tau_t^V(A))
=
\omega^{\mathcal{T}}(\tau_t(A)) + \sum_{n\geq 1} i^n \int_{tS_n} \omega^{\mathcal{T}}([ \tau_{t_1}(K),[\dots ,[\tau_{t_n}(K),\tau_t(A) ]_\otimes \dots  ]_\otimes ]_\otimes)   \di t_1 \dots \di t_n.
\]
The lowest perturbative order which appears in the expansion of $K$ is $1$. Hence, at fixed perturbative order, only a finite number of terms in the sum over $n$ can contribute to $\omega(\tau^V_t(A))$. Furthermore, each of the addends is estimated using assumption \eqref{eq:truncated-function-hypothesis}. Therefore, these are absolutely integrable functions in the first $n$ variables with resulting primitive that is uniformly bounded in the remaining variable (the $n+1$-th included). The proof follows.
\end{proof}
As a first consequence of this theorem, we have that all the states that are time invariant under the free evolution and show a reasonable decay property for big time separations, cannot feature secular growths.
\begin{cor}\label{cor: NoSecInvariant}
Consider a quasifree, Hadamard state $\omega$ invariant under the free time translation $\tau_t$. Moreover, assume that its two-point function satisfies the following bound for timelike future pointing separations $y-x$: 
\begin{equation}\label{eq:clustering=Derivative}
    \big| \partial_x^{(\alpha)} \partial_y^{(\beta)}\omega_2(x;y_0+t,\textbf{y})\big| \leq \frac{C}{t^{1 + \epsilon}},\;\;\; t> 1+(x_0-y_0)+|\textbf{x}-\textbf{y}|,
\end{equation}
where $\alpha,\beta\in\mathbb{N}^4$ are multi-indices and $\partial^{(\alpha)}_x, \partial^{(\beta)}_y$ are partial derivatives of order $\alpha_i, \beta_j$ in the variable $x_i, y_j$, for $i,j\in\{1,2,3,4\}$ and $C$ is a constant which could depend on $\alpha$ and $\beta$. Then, for each set of elements $A_1,\dots, A_n \in \mathcal{A}$ and $(t_1, \ldots, t_n) \in \mathbb{R}^n$ all the truncated functions:
\[
f_{A_1\dots A_n}(t_1,\dots,{t_n}) = \omega^{\mathcal{T}}(\tau_{t_1}(A_1)\otimes \dots \otimes \tau_{t_n}(A_n)) 
\]
are secularily bounded.
\end{cor}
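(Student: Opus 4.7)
The strategy is to exploit the quasifree character of $\omega$ to reduce the truncated $n$-point function to a sum of products of smeared two-point functions indexed by connected graphs, and then to integrate factor by factor using a spanning-tree change of variables based on the decay hypothesis \eqref{eq:clustering=Derivative}. More concretely, I would write each $A_i\in\mathcal{A}$ as a finite sum of Wick monomials of the basic fields, smeared against compactly supported test functions. By the standard graph characterisation of truncated functionals of a quasifree state (see e.g.\ Appendix \ref{app: espansioni} and Section~2 of \cite{BKR78}), one has
\[
\omega^{\mathcal{T}}(\tau_{t_1}(A_1)\otimes\cdots\otimes\tau_{t_n}(A_n)) \;=\; \sum_{G} c_G \prod_{(i,j)\in E(G)} F^G_{ij}(t_i - t_j),
\]
where the sum runs over connected (multi-)graphs $G$ on the vertex set $\{1,\ldots,n\}$, the coefficients $c_G$ collect the covariant self-contractions at each vertex, and each $F^G_{ij}$ is a smeared two-point function into which finitely many derivatives coming from the Wick polynomials at vertices $i,j$ have been absorbed. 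Invariance of $\omega$ under $\tau_t$ lets us write $F^G_{ij}$ as a function of the single time difference $t_i-t_j$.

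Next I would show that each edge factor $F^G_{ij}$ lies in $L^1(\mathbb{R})\cap L^\infty(\mathbb{R})$: smoothness on $\mathbb{R}$ follows from the Hadamard character of $\omega_2$ together with the compact spatial supports of the smearings, whereas hypothesis \eqref{eq:clustering=Derivative}, applied for time separations exceeding a constant determined by those supports, gives $|F^G_{ij}(t)|\le C/|t|^{1+\epsilon}$, yielding both integrability at infinity and, together with local smoothness, a global uniform bound. Now fix an arbitrary permutation $(p_1,\ldots,p_n)$ and, for each connected $G$, choose a spanning tree $T\subset G$ rooted at $p_n$. Changing variables $s_k \coloneqq t_{p_k}-t_{p_n}$ for $k<n$, and then to the oriented tree-edge differences $u_e$ (an invertible linear change of unit Jacobian), bound each non-tree factor by its $L^\infty$ norm and apply Tonelli:
\[
\int_{\mathbb{R}^{n-1}}\biggl|\prod_{e\in E(G)} F^G_e\biggr|\,ds_1\cdots ds_{n-1} \;\le\; \prod_{e\notin T}\|F^G_e\|_\infty\; \prod_{e\in T}\|F^G_e\|_{L^1(\mathbb{R})} \;<\;\infty.
\]
This verifies condition (i) of Definition~\ref{def:secularily-bounded}, and condition (ii) follows automatically: because the full integrand depends only on the differences $t_i-t_j$, the function $g(t_{p_n})$ obtained after integration is in fact \emph{constant} in $t_{p_n}$, and hence trivially bounded by some $C_1$.

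The main obstacle I anticipate is the careful bookkeeping of derivatives that appears when the $A_i$ are composite Wick polynomials in the basic fields and their derivatives, so that each $F^G_{ij}$ really involves $\partial^{\alpha}_x \partial^{\beta}_y \omega_2$ rather than $\omega_2$ itself. Hypothesis \eqref{eq:clustering=Derivative} is formulated precisely to accommodate arbitrary multi-indices $\alpha,\beta$, but one must verify that the Wick expansion of each $A_i$ produces only finitely many such derivative structures per edge (which is guaranteed by the finite polynomial degree of $A_i$) and that local Hadamard smoothness survives the differentiation on the smeared level, so that the $L^\infty$-norms used in the estimate above are indeed finite.
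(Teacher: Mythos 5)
Your overall strategy (graph expansion of the truncated functions, per-edge decay, Tonelli) is the same as the paper's, but there is one genuine flaw in the write-up: the displayed identity $\omega^{\mathcal{T}}(\tau_{t_1}(A_1)\otimes\cdots\otimes\tau_{t_n}(A_n))=\sum_G c_G\prod_{(i,j)\in E(G)}F^G_{ij}(t_i-t_j)$ is false whenever a connected graph has a vertex of degree at least two, because the smearing integral at that vertex couples all incident edges. For instance, with $n=3$ and $A_i=\,:\!\phi^2\!:(f_i)$, the triangle contribution is proportional to $\int f_1(x)f_2(y)f_3(z)\,\omega_2(x,y)\,\omega_2(y,z)\,\omega_2(z,x)\,\di x\,\di y\,\di z$, which is not a product of three functions of the pairwise time differences. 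Nor can you push pointwise bounds on $\omega_2$ through the vertex integrals: hypothesis \eqref{eq:clustering=Derivative} is pointwise only in the large-separation regime, while at small or lightlike separations $\omega_2$ is singular and only the smeared object is finite (by the Hadamard property, which also makes the powers of $\omega_2$ well defined). The repair is precisely the estimate the paper establishes following Proposition A.2 of \cite{FaldinoEquilibriumpAQFT}: one proves the smeared per-edge bound $\left|\left\langle\omega_2,\frac{\delta}{\delta\phi_1}\tau_{t_1}(A_1)\otimes\frac{\delta}{\delta\phi_2}\tau_{t_2}(A_2)\right\rangle\right|\leq C(|t_1-t_2|+d)^{-1-\epsilon}$ by joining the Hadamard/compactly-supported-distribution argument at short time separations with \eqref{eq:clustering=Derivative} at long ones, and then bounds each connected-graph contribution from \emph{above} by $c'_G\prod_{e\in E(G)}C(|t_{s(e)}-t_{r(e)}|+d)^{-1-\epsilon}$ --- an inequality, not an identity. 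Since your subsequent machinery uses only $L^1$ and $L^\infty$ norms of the factors, it survives verbatim once applied to this bounding product rather than to $f$ itself.

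With that correction, your integration step is a genuinely different, and arguably cleaner, route than the paper's. The paper splits $\mathbb{R}^n$ into finitely many simplices, uses connectedness to dominate every edge factor by the factor of a consecutive pair in the time ordering, and integrates the resulting chain $\prod_{i=1}^{n-1}(|t_i-t_{i+1}|+d)^{-1-\epsilon}$; this forces the aside that each vertex may be assumed to carry at most two lines. Your spanning-tree argument avoids both: the change of variables to the oriented tree-edge differences $u_e$ is unimodular (the reduced incidence matrix of a tree has determinant $\pm1$), so Tonelli factorizes the integral exactly into $\prod_{e\notin T}\|\cdot\|_\infty\prod_{e\in T}\|\cdot\|_{L^1}$, with each $L^1$ norm bounded by $2C/(\epsilon d^\epsilon)$ as in the paper's final estimate. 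Your observation that $g(t_{p_n})$ is \emph{constant} by time-translation invariance is correct and is exactly how condition (ii) of Definition \ref{def:secularily-bounded} is secured in the paper as well, uniformly in the choice of the distinguished variable.
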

\begin{proof}
We present the proof for the scalar case in full details, the generalization to other fields follows with minor adjustments. If $\omega$ is invariant under time translations also $\omega^{\mathcal{T}}$ is invariant under time translation.
Hence:
    \[
    f_{A_1\dots A_n}(t_1, \dots, t_j,\dots, t_n)
    =
    f_{A_1\dots A_n}(t_1-t_j, \dots,0,\dots, t_n-t_j).
    \]
Namely, by the translation invariance, $f$ can be seen as a function of $n-1$ variables, i.e.~the time separations. The truncated $n$-point function, by the quasifree assumption, admits a diagrammatic expansion via connected and oriented graphs:
\begin{align}
    f_{A_1\dots A_n}&(t_1-t_j, \dots,0,\dots, t_n-t_j) = \nonumber\\
    \bigg(\mathcal{M} \sum_{G \in \mathcal{G}^{c,o}_{n}} c_G \prod_{l \in E(G)} D_{s(l) r(l)} &(\tau_{t_1 - t_j}(A_1) \otimes \ldots \otimes A_j \otimes \ldots \otimes \tau_{t_n - t_j}(A_n)) \bigg)\bigg|_{\phi_1 = \ldots = \phi_n = 0}.\label{eq: truncinvariant}
\end{align}
Here, we have used the representation of $\mathcal{A}$ in $\mathcal{A}_{\omega}$ presented in \eqref{eq:omega-deformation}
and we have kept implicit the application of $\alpha$ in \eqref{eq:alpha}. Furthermore, $c_G$ is a constant which includes the symmetry factor of the diagrammatic expansion and $\mathcal{G}^{c,o}_n$ is a notation for the class of connected, oriented (directed) graphs with $n$ vertices. Moreover, $E(G)$ is the set of lines of $G$ and $D_{s(l) r(l)} : \mathcal{A}_{\omega} \otimes \mathcal{A}_{\omega} \to \mathcal{A}_{\omega} \otimes \mathcal{A}_{\omega}$ is a functional differential operator associated to the line $l$, with source and range vertices $s(l), r(l) \in \{ 1, \ldots, n \}$, that for real scalar fields is of the following type:
\begin{align}
    D_{s(l) r(l)} &= \int \di^4x \, \di^4y \, \omega_{2}(x, y) \frac{\delta}{\delta \phi_{s(l)}(x)} \otimes \frac{\delta}{\delta \phi_{r(l)}(y)} \label{stima_coro}\\
    &= \left \langle \omega_2, \frac{\delta}{\delta \phi_{s(l)}} \otimes \frac{\delta}{\delta \phi_{r(l)}} \right \rangle. \nonumber
\end{align}
where $\frac{\delta}{\delta\phi_{i}}$ is the functional derivative of the $i$-th element in the tensor product of the $n$ observables. So, the product in \eqref{eq: truncinvariant}, is meant as a composition.
We now observe that, under our hypothesis on the two-point function, for every $t_1,t_2\in\mathbb{R}$:
\begin{equation*}
      \left| \left< \omega_2, \frac{\delta}{\delta\phi_1}\tau_{t_1}(A_1) \otimes \frac{\delta}{\delta\phi_2}\tau_{t_2}(A_2) \right>  \right| \leq \frac{C}{(|t_1-t_2|+d)^{1+\epsilon}}\;\;\; \forall A_1,A_2\in 
      \mathcal{A}_{\omega},
\end{equation*}
where the constants $C,d$ may depend on $A_1,A_2$. This is proven following Proposition A.2 of \cite{FaldinoEquilibriumpAQFT}. In particular, since on one side $\omega$ is a time translation invariant Hadamard state, for $|t_1 - t_2|$ large enough $\omega_2(x_0 + t_1, \mathbf{x};y_0 + t_2, \mathbf{y})$ is a smooth function. On the other side, for small $|t_1 - t_2|$ the product of the same two-point function with the microcausal functionals defines a compactly supported distribution. The two cases are joint using a continuity argument of distributions.\\
Finally, in order to show the secular boundedness on $\mathbb{R}^n$ of the truncated $n$-point function, we split $\mathbb{R}^n$ in a union of finitely many simplices, e.g.~$S^{\infty}_n = \{ (t_1, \ldots, t_n) | 0 \leq t_1 \leq t_2 \leq \ldots \leq t_n \leq \infty \}$. \\
Therefore, for $(t_1, \ldots, t_n) \in S_{n}$ (any simplex) we have:
\begin{gather*}
    |f_{A_1\dots A_n}(t_1-t_j, \dots,0,\dots, t_n-t_j)| \leq \sum_{G \in \mathcal{G}^{c,o}_{n}} c_G \prod_{l \in E(G)} \frac{C}{(|t_{s(l)} - t_{r(l)} | + d)^{1+\epsilon}}\\
     \leq C_1
    \bigg(\frac{1}{(|t_{1} - t_{2}| + d) (|t_{2} - t_{3}| + d) \cdots (|t_{j-1} - t_{j}| + d) (|t_{j} - t_{j+1}| + d) \cdots (|t_{n-1} - t_{n}| + d)}\bigg)^{1+\epsilon}.
\end{gather*}
for some constant $C_1$. Here we have used the fact that the number of graphs is finite and that the graphs are all connected. Moreover, since $(t_1, \ldots, t_n) \in S_{n}$ the contribution of each line is estimated with the one of a line joining consecutive vertices. Finally, in the last expression, all time variables are explicit. We have also, without loss of generality, assumed that each vertex has at most two lines attached to it, as the general case gives just a better behaved decay estimate.\\
To show that this is a positive absolutely integrable function in $n-1$ variables and, after the integration, bounded in the $n$-th variable consider:
\begin{equation*}
    \int_{S_n} \di t_{n-1} \cdots \di t_{1}\prod_{i=1}^{n-1} \frac{1}{(|t_{i} - t_{i+1}| + d)^{1+\epsilon}} \leq \int_{\mathbb{R}^{n-1}} \di t_{n-1} \cdots \di t_{1} \prod_{i=1}^{n-1} \frac{1}{(|t_{i} - t_{i+1}| + d)^{1+\epsilon}} = \bigg( \frac{2}{\epsilon d^{\epsilon}} \bigg)^{n-1},
\end{equation*}
where the positivity of the integrand has been explicilty used in the estimate and to apply Tonelli's theorem. The required boundedness is then proven.
\end{proof}
\begin{rem}
Thermal equilibrium (KMS) and ground states satisfy the hypotheses of Corollary \ref{cor: NoSecInvariant}, for $\epsilon = 1/2$, and thus of the Theorem \ref{th:truncated}. This has been shown in Proposition A$.1$ and Proposition $3.3$ of \cite{FaldinoEquilibriumpAQFT} for the real scalar field and it has been generalised in this paper in Proposition \ref{app: 1}, for the case of the Dirac field. Also Non Equilibrium Steady States (NESS), as introduced in \cite{ruelle2000natural, JP02}, are checked to satisfy the hypothesis of the above corollary.
\end{rem}
Finally, as the Hadamard condition is believed to be necessary if we want a state to be of physical relevance (\cite{KW91, Wald1995}), we want to translate the general condition on the $n$-point function, given in Theorem \ref{th:truncated}, in a condition on the two-point function for the situation in which the state is  assumed to be quasifree and not necessarily time translation invariant. On Minkowski spacetime for time translation invariant field equations and by the Hadamard assumption, this translates in a condition only on the smooth part of the state:
\begin{thm}\label{theo:time dependent}
Let $\omega$ be a Hadamard quasifree state with smooth part $W(x,y)$. Assume that for any $k \in \mathbb{N}$, $\alpha$ multi-index:
\begin{equation}\label{eq: assunzione1}
    \sum_{|\alpha| \leq k} \sup_{(x,y) \in \mathbb{M} \times \mathbb{M}} | \partial^{\alpha} W(x,y) | < + \infty.
\end{equation}
and, upon writing $x = (t_x, \mathbf{x})$ and $y=(t_y,\mathbf{y})$, assume that for any $p,q \in \mathbb{N}$ multi-indexes the following conditions hold:
\begin{equation}\label{eq: assunzione2}
    \int_{\mathbb{R}}\di t_x |\partial^{p}_{x} \partial^{q}_{y}W(x,y)| < C^{p,q}_{\mathbf{x}, \mathbf{y}} \quad \land \quad \int_{\mathbb{R}}\di t_y |\partial^{p}_{x} \partial^{q}_{y} W(x,y)| < C'^{p,q}_{\mathbf{x}, \mathbf{y}},
\end{equation}
where $C_{\mathbf{x}, \mathbf{y}}$ respectively $C'_{\mathbf{x}, \mathbf{y}}$ are bounded functions of the space variables, uniform in $t_y$ respectively $t_x$. Then, the truncated $n$-point functions are secularily bounded.
\end{thm}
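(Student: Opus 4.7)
The plan is to combine the quasi-free graph expansion already employed in the proof of Corollary \ref{cor: NoSecInvariant} with the Hadamard splitting $\omega_2 = H + W$, where $H$ is chosen equal to the translation-invariant Minkowski vacuum two-point function $\omega_2^{\infty}$ and $W$ is the smooth remainder on which the hypotheses \eqref{eq: assunzione1} and \eqref{eq: assunzione2} are imposed. Since $\omega$ is quasifree, $f_{A_1\dots A_n}(t_1,\dots,t_n) = \omega^{\mathcal{T}}(\tau_{t_1}(A_1)\otimes\dots\otimes\tau_{t_n}(A_n))$ is a finite sum over connected oriented graphs $G$ on $n$ vertices, each contribution being the product over $l\in E(G)$ of functional differential operators $D^{\omega_2}_{s(l) r(l)}$ acting on $\tau_{t_1}(A_1)\otimes\dots\otimes\tau_{t_n}(A_n)$ and then evaluated at $\phi=0$. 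The linear decomposition $\omega_2=H+W$ propagates through each $D^{\omega_2}_{s(l) r(l)}=D^{H}_{s(l) r(l)}+D^{W}_{s(l) r(l)}$, so the full sum further expands into $2^{|E(G)|}$ sub-contributions labelled by a choice of propagator type on each edge.

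The pure-$H$ contribution reconstructs exactly the truncated $n$-point function of the Minkowski vacuum state $\omega^{\infty}$, which is time-translation invariant and whose two-point function satisfies \eqref{eq:clustering=Derivative} with $\epsilon=1/2$. Corollary \ref{cor: NoSecInvariant} applied to $\omega^{\infty}$ therefore gives secular boundedness of this piece. For the remaining contributions, at least one edge carries a $W$-propagator. Fix the permutation $(p_1,\dots,p_n)$ together with the free variable $t_{p_n}$, and choose a spanning tree $T\subset G$ rooted at $p_n$; every non-tree edge contributes a factor which is bounded uniformly in all its time arguments, either using the pointwise smeared bound on the vacuum two-point function for $H$-edges, or using \eqref{eq: assunzione1} for $W$-edges. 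Replacing each non-tree edge factor by its uniform sup norm reduces the problem to integrating the product of the $n-1$ remaining tree-edge factors.

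The tree-edge integration is carried out iteratively along the leaves of $T$. For a current leaf $v$ with unique parent $p(v)$: if the tree-edge $(v,p(v))$ is of $H$-type, the decay $|H(t_v-t_{p(v)})|\le C/(1+|t_v-t_{p(v)}|)^{3/2}$ gives $\int|H(t_v-t_{p(v)})|\,dt_v$ equal to a constant independent of $t_{p(v)}$; if it is of $W$-type, the assumption \eqref{eq: assunzione2} directly provides $\int |W(t_v,\mathbf{x}_v;t_{p(v)},\mathbf{x}_{p(v)})|\,dt_v\le C^{p,q}_{\mathbf{x}_v,\mathbf{x}_{p(v)}}$, uniformly in $t_{p(v)}$ and integrable against the compactly supported smearings coming from $A_v$ and $A_{p(v)}$. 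In either case $v$ is removed from $T$ and the induction proceeds, until only $t_{p_n}$ is left, producing a bound uniform in $t_{p_n}$. Summing over the finitely many graphs, spanning trees and propagator-type assignments yields the required secular boundedness.

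The main obstacle is the correct treatment of the higher-order derivatives of $\omega_2$ produced by the functional differential operators $D^{\omega_2}_{s(l) r(l)}$ when the $A_i$'s contain several basic fields: this is precisely why \eqref{eq: assunzione1} and \eqref{eq: assunzione2} are formulated for arbitrary multi-indices, and why one needs the spatial constants $C^{p,q}_{\mathbf{x},\mathbf{y}}$ and $C'^{p,q}_{\mathbf{x},\mathbf{y}}$ to be bounded in $\mathbf{x},\mathbf{y}$, so that they can be absorbed by the spatial integrations against the compactly supported test functions carried by the $A_i$'s.
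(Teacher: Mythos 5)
Your proposal follows essentially the same route as the paper's proof: split $\omega_2=\omega_2^{\infty}+W$ inside the quasifree graph expansion, handle the all-vacuum contribution by Corollary \ref{cor: NoSecInvariant} (whose hypotheses $\omega_2^{\infty}$ satisfies), sup-bound superfluous lines via \eqref{eq: assunzione1} until only a tree/leaf structure survives, and integrate the remaining lines leaf-by-leaf with Tonelli using \eqref{eq: assunzione2} for $W$-lines and the $3/2$-decay for vacuum lines --- your spanning tree rooted at $t_{p_n}$ is just a structured rendering of the paper's iterative reduction to vertices with a single attached line, and your sup-out-$W$ treatment of mixed parallel edges matches the paper's generalised estimate on $\omega_2^{\infty}\cdot\partial^{\alpha}W$. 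The only cosmetic slip is the claim that the pure-$H$ terms ``reconstruct exactly'' the vacuum truncated functions (the observables are normal-ordered with respect to $\omega_2$, not $\omega_2^{\infty}$), but this does not affect the edge-wise estimates actually used.
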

\begin{rem}
The ideas behind the conditions for the smooth part of the state are the following. Condition \eqref{eq: assunzione1} is needed to be able to break general connected graphs in $1$-particle reducible diagrams while \eqref{eq: assunzione2} to argue that the integrals along the various lines are uniformly bounded.\\
Assumption \eqref{eq: assunzione2} for any $p,q \in \mathbb{N}$, assuring the secular boundedness, can be relaxed asking the condition just up to a fixed number of derivatives depending on the observables.
\end{rem}
\begin{proof}
We present the proof for the scalar case in full details, the generalization to other fields follows with minor adjustments. For every $A_1,\dots ,A_n \in \mathcal{A}$, we focus on 
\begin{equation*}
\omega^{\mathcal{T}}(\tau_{t_1}(A_1)\otimes \dots \otimes \tau_{t_n}(A_n)) =   \mathcal{M}\left( \sum_{G \in \mathcal{G}^{c,o}_{n}} c_G \prod_{1\leq i<j \leq n} D_{ij}^{l_{ij}} (\tau_{t_1}(A_1) \otimes \ldots \otimes \tau_{t_n}(A_n)) \right)_{\phi_1 = \ldots = \phi_n = 0},
\end{equation*}
where we have used the representation of $\mathcal{A}$ in $\mathcal{A}_{\omega}$ presented in \eqref{eq:omega-deformation} and we have kept implicit the application of $\alpha$ in \eqref{eq:alpha}. $ D_{ij} : \mathcal{A}_\omega \otimes \mathcal{A}_\omega \to \mathcal{A}_\omega \otimes \mathcal{A}_\omega$ is a functional differential operator associated to the vertices $i,j$ that are joined by $l_{ij}$ many lines, that for real scalar field is:
\begin{equation*}
    D_{ij}=\left\langle \omega_2,\frac{\delta}{\delta\phi_i}\otimes\frac{\delta}{\delta\phi_j}\right\rangle.
\end{equation*}
The contribution $\omega^{\mathcal{T}}_G$ to $\omega^{\mathcal{T}}(\tau_{t_1}(A_1)\otimes \dots \otimes \tau_{t_n}(A_n))$, due to a single connected graph $G \in \mathcal{G}_n^{c,o}$, can be written out making explicit the two-point functions:
\begin{equation*}
\omega^{\mathcal{T}}_G = 
\int_{\mathbb{M}^n} \bigg(\prod_{1\leq i<j \leq n}\prod_{\alpha_{ij},\beta_{ij}} \left(\partial_{x_i}^{\alpha_{ij }}\partial_{x_j}^{\beta_{ij}}\omega_{2}(x_{i}, x_j)\right)^{m_{\alpha_{ij}\beta_{ij}}}\bigg) \Lambda(x_1 +t_1 e^0, \ldots, x_n + t_n e^0) \di^4x_{1} \cdots \di^4x_n,
\end{equation*}
where the indices $\alpha_{ij},\beta_{ij},m_{\alpha_{ij}\beta_{ij}}$ are fixed natural numbers and the index $m_{\alpha_{ij}\beta_{ij}}$ satisfies the condition $\sum_{\alpha_{ij},\beta_{ij}}m_{\alpha_{ij}\beta_{ji}}=l_{ij}$ for each couple $i,j$ fixed. Moreover, $e^0$ is the unit timelike future pointing vector and $\Lambda(x_1 +t_1 e^0, \ldots, x_n + t_n e^0)$ is a compactly supported smooth function. By the Hadamard assumption, we know:
\begin{equation*}
    \omega_2(x_i,x_j) = \omega_2^{\infty}(x_i - x_j) + W(x_i, x_j),
\end{equation*}
with $\omega_2^{\infty}$ the ground state two-point function. Therefore, by Newton's binomial formula:
\begin{gather*}
    \prod_{\alpha_{ij},\beta_{ij}} \left(\partial_{x_i}^{\alpha_{ij }}\partial_{x_j}^{\beta_{ij}}\omega_{2}(x_{i}, x_j)\right)^{m_{\alpha_{ij}\beta_{ij}}} = 
    \\ \prod_{\alpha_{ij}, \beta_{ij}}\sum_{p_{\alpha_{ij}\beta_{ij}}=0}^{m_{\alpha_{ij}\beta_{ij}}}\binom{m_{\alpha_{ij}\beta_{ij}}}{p_{\alpha_{ij}\beta_{ij}}} \bigg( \big(\partial_{x_i}^{\alpha_{ij }}\partial_{x_j}^{\beta_{ij}} \omega_2^{\infty}(x_i - x_j)\big)^{p_{\alpha_{ij}\beta_{ij}}} \big( \partial_{x_i}^{\alpha_{ij }}\partial_{x_j}^{\beta_{ij}} W(x_i, x_j)\big)^{m_{\alpha_{ij}\beta_{ij}}-p_{\alpha_{ij}\beta_{ij}}} \bigg).
\end{gather*}
Then, the boundedness is studied for the three separate types of contributions:
\begin{align}
    \int_{\mathbb{M}^n} \bigg(&\prod_{1\leq i<j \leq n}\prod_{\alpha_{ij},\beta_{ij}} \left(\partial_{x_i}^{\alpha_{ij }}\partial_{x_j}^{\beta_{ij}}\omega_{2}^{\infty}(x_{i} - x_j)\right)^{m_{\alpha_{ij}\beta_{ij}}} \bigg) \Lambda(x_1 +t_1 e^0, \ldots, x_n + t_n e^0) \di^4x_{1} \cdots \di^4x_n,\\
    \int_{\mathbb{M}^n} \bigg(&\prod_{1\leq i<j \leq n} \prod_{\alpha_{ij},\beta_{ij}} \left(\partial_{x_i}^{\alpha_{ij }}\partial_{x_j}^{\beta_{ij}}W(x_{i}, x_j)\right)^{m_{\alpha_{ij}\beta_{ij}}} \bigg) \Lambda(x_1 +t_1 e^0, \ldots, x_n + t_n e^0) \di^4x_{1} \cdots \di^4x_n,\\
    \int_{\mathbb{M}^n} \bigg(&\prod_{1\leq i<j \leq n} \prod_{\alpha_{ij},\beta_{ij}} \big(\partial_{x_i}^{\alpha_{ij }}\partial_{x_j}^{\beta_{ij}} \omega_2^{\infty}(x_i - x_j)\big)^{p_{\alpha_{ij}\beta_{ij}}} \left( \partial_{x_i}^{\alpha_{ij }}\partial_{x_j}^{\beta_{ij}} W(x_i, x_j)\right)^{m_{\alpha_{ij}\beta_{ij}}-p_{\alpha_{ij}\beta_{ij}}} \bigg)\bigg) \nonumber\\ 
    &\times \Lambda(x_1 +t_1 e^0, \ldots, x_n + t_n e^0) \di^4x_{1} \cdots \di^4x_n.
\end{align}
We start focusing on the first. By a change of variable, the following:
\begin{align*}
    \int_{\mathbb{M}^n} \bigg(&\prod_{1\leq i<j \leq n} \prod_{\alpha_{ij},\beta_{ij}} \big( \partial_{x_i}^{\alpha_{ij }}\partial_{x_j}^{\beta_{ij}} \omega_2^{\infty}(x_i - x_j - (t_i - t_j)e^0)\big)^{m_{\alpha_{ij}\beta_{ij}}}\bigg) \Lambda(x_1, \ldots, x_n) \di^4x_{1} \cdots \di^4x_n
\end{align*}
is secularily bounded by Corollary \ref{cor: NoSecInvariant}. In particular, the hypothesis of the corollary are satisfied by Proposition \ref{app: 1} for Dirac fields and Proposition A$.1$ in the appendix of \cite{FaldinoEquilibriumpAQFT} for scalar fields.\\\\
For the second term, let us start defining:
\begin{equation*}
    \mathcal{W}(x_1, \ldots, x_n, t_n) \coloneqq \int_{\mathbb{R}^{n-1}} \di t_1 \cdots \di t_{n-1} \bigg( \prod_{1\leq i<j \leq n} \prod_{\alpha_{ij},\beta_{ij}} \left(\partial_{x_i}^{\alpha_{ij }}\partial_{x_j}^{\beta_{ij}}W(x_{i} - t_i e^0, x_j - t_j e^0)\right)^{m_{\alpha_{ij}\beta_{ij}}}\bigg)
\end{equation*}
Using assumption \eqref{eq: assunzione1} we can estimate lines simply by a constant. The idea is to do it a number of times sufficient to ensure that, keeping the graph connected, at least one vertex has a single line attached to it:
\begin{equation*}
    |\mathcal{W}(x_1, \ldots, x_n, t_n)| \leq C \int_{\mathbb{R}^{n-1}} \di t_1 \cdots \di t_{n-1} \bigg( \prod_{1\leq i<j \leq n} \prod_{\alpha_{ij},\beta_{ij}} \left|\partial_{x_i}^{\alpha_{ij }}\partial_{x_j}^{\beta_{ij}} W(x_i - t_ie^0, x_j - t_je^0)\right|^{q_{\alpha_{ij} \beta_{ij}}}\bigg)
\end{equation*}
where $q_{\alpha_{ij},\beta_{ij}}$ are such that at least one vertex has just a single line attached. Since the integrand function is a positive measurable function, we can apply Tonelli's theorem and compute the iterative integrals. In particular, we start computing the integral in the variable that appears only once, using assumption \eqref{eq: assunzione2}. Iterating this procedure, if needed, we conclude that the function $\mathcal{W}(x_1,...,x_n,t_n)$ is well defined and that its absolute value is uniformly bounded in the variable $t_n$ by a positive function $\mathfrak{W}$, bounded in the spatial variables:
\begin{equation*}
    |\mathcal{W}(x_1, \ldots, x_n, t_n)| \leq \mathfrak{W}(\mathbf{x}_1, \ldots, \mathbf{x}_n).
\end{equation*}
Therefore, as $\Lambda$ is a smooth function of compact support, by using again Tonelli's theorem:
\begin{align*}
    \bigg|\int_{\mathbb{R}^{n-1}} \di t_1 \cdots \di t_{n-1} \bigg(\int_{\mathbb{M}^n} &\bigg(\prod_{1\leq i<j \leq n} \partial^{\alpha_i}_{x_i} \partial^{\alpha_j}_{x_j} W(x_i - t_ie^0, x_j - t_je^0)^{l_{ij}}\bigg) \Lambda(x_1, \ldots, x_n) \di^4x_{1} \cdots \di^4x_n\bigg) \bigg| \\ 
    &\leq \int_{\mathbb{M}^n} \mathfrak{W}(\mathbf{x}_1, \ldots, \mathbf{x}_n) |\Lambda(x_1, \ldots, x_n)| \di^4x_{1} \cdots \di^4x_n \leq C
\end{align*}
by the assumption of smoothness and compact support.\\\\
Finally, combining the ideas used for the first and second cases one argues analogously for the third contribution. In particular, when both the vacuum two-point function and the smooth part are connected to a same vertex we can get rid of the latter in the integrand using again assumption \eqref{eq: assunzione1}, provided that the graph is still connected.
Finally, combining the ideas used for the first and second cases one argues analogously for the third contribution. In particular, the result obtained in Proposition A.2 of \cite{FaldinoEquilibriumpAQFT} can be easily generalised using condition \eqref{eq: assunzione1}, showing that:
\begin{equation*}
    \left|\left\langle\omega_2^{\infty},\partial^\alpha W\cdot\frac{\delta^2}{\delta\phi_1\delta\phi_2}\tau_{t_1}(A)\otimes\tau_{t_2}(B)\right\rangle\right|\leq\frac{C'}{(|t_1-t_2|+r)^{3/2}}
\end{equation*}
for every $A,B\in\mathcal{F}_{\mu c}$, $t_1,t_2\in\mathbb{R}$ and for some positive constants $C',r$. Using this estimate and the same techniques already applied in the previous part of the proof, the result is obtained.
\end{proof}

\section{Applications and examples}\label{se:applications}
In this section, the previous results are applied in concrete examples. Namely, we study complex scalar and Dirac fields coupled with an external classical electromagnetic potential and a general one loop-order computation for a self interacting scalar field. In the first two examples we illustrate the failure of the return to equilibrium property for interactions supported on the entire space and, at the first order in perturbation theory, how this problem is avoided considering the adiabatic limit only after the large time limit. This is done by constructing directly a KMS state for the interacting theory as discussed in Section \ref{sec: Int States}. Finally, in the last example, we show the application of Theorem \ref{th:truncated} making explicit the mechanism behind the cancellation of secular effects in a loop diagram. For computational purposes, we set $\hbar = c = 1$.
\subsection{Scalar electrodynamics}
\label{se:scalar-ED}
 \subsubsection{General introduction}\label{4.2}
We want to study a complex massive scalar field $\varphi$ coupled to an external electromagnetic potential $A_\mu$. The latter is not quantised but treated as a classical external field. We define the \emph{covariant derivative} $\mathcal{D_\mu}$ in the usual way:
\begin{equation*}
    \mathcal{D}_\mu(\varphi)\coloneqq\partial_\mu\varphi+ieA_\mu\varphi.
\end{equation*}
$e$ denotes the coupling constant with the external field. The Lagrangian density of the interacting scalar field $\varphi$ is obtained by substituting the ordinary derivatives with the covariant ones (\emph{minimal coupling}):
\begin{align*} 
      \mathcal{L}=&-(\mathcal{D}_{\mu}\varphi)^{\dagger}\mathcal{D}^{\mu}\varphi-m^2\varphi^{\dagger}\varphi\\
      &=-\partial_{\mu}\varphi^{\dagger}\partial^{\mu}\varphi-m^2\varphi^{\dagger}\varphi+ieA^{\mu}(\varphi^{\dagger}\partial_{\mu}\varphi-\partial_{\mu}\varphi^{\dagger}\varphi)-e^2A_{\mu}A^{\mu}\varphi^{\dagger}\varphi.
\end{align*}
The interaction Lagrangian is therefore given by the two terms:
\begin{equation*}
    \mathcal{L}_{I}=ieA^{\mu}(\varphi^{\dagger}\partial_{\mu}\varphi-\partial_{\mu}\varphi^{\dagger}\varphi)-e^2A_{\mu}A^{\mu}\varphi^{\dagger}\varphi.
\end{equation*}
Finally, the quadri-potential $A_\mu$, that we will consider in the following, does not depend on the time variable:
\begin{equation}\label{formaexpdelquadrip}
    A_\mu(s)=(A_0(\textbf{s}),\textbf{A}(\textbf{s})).
\end{equation}
With $A_0$ we denoted the scalar potential while $\textbf{A}$ is the vector potential. We recall also that the electromagnetic quadri-potential has a gauge freedom that can be fixed in the most convenient way. As we will see, in our case, this corresponds to the choice of the \emph{Coulomb gauge}.\\

In this examples we compare the large time limit of the correction at the first order in perturbation theory to the $2$-point function:
\begin{equation}\label{duepuntinoncorr}
    \omega^{\beta}(R_V(\varphi^\dagger(x)) R_V(\varphi(y))),
\end{equation}
with the first order expression of the two-point function:
\begin{equation}\label{duepunticorr}
    \omega^{\beta,V}(R_V(\varphi^\dagger(x)) R_V(\varphi(y))).
\end{equation}
As a first result we are able to show how secular effects, and thus the failure of the return to equilibrium property, arises, if the adiabatic limit is taken before the large time limit, in the presence of polynomially growing stationary external potentials. The main cause of this effect is the lack of restrictions on the spatial support of interactions. This, in turn, leads to the momentum conservation at each vertex of the perturbative expansion. Therefore, the combination of the frequencies modes of the propagators appearing in the first order diagrammatic expansion that have concordant sign will lead to time translation dependent contributions. After a partial integration in time, necessary to take into account the presence of the arbitrary choice of the switch-on function $\chi$, the now manifestly time translation dependent contributions are analyzed using stationary phase methods. The general mechanism is the following. A polynomial potential of order $n$ results in a time growing prefactor of order $t^n$. This prefactor is multiplied by a function of time, obtained using the stationary phase method, that decays asymptotically at most as $t^{-(3+n)/2}$. For sufficient large value of $n$ it is now clear the presence of time growths.\\
As proven in the previous sections we know that outside the adiabatic limit return to equilibrium holds, namely that the two-point function cannot grow in time. Therefore, we give the explicit expression of the thermal equilibrium interacting two-point function showing the cancellation of the time translation dependent terms.

\subsubsection{Scalar electric potential}\label{Sec:ScalarEl}
In this subsection we will assume an external static scalar potential:
\begin{equation}
    A(x)=(A_0(\textbf{x}),\textbf{0}).
\end{equation}
In order to make the computation clearer, we split it in several steps each with its own relevance. The first step is to consider the first order correction to \eqref{duepuntinoncorr} in its general form.
\begin{prop}\label{prop_2p_noneq}
    Let's assume a spatially compactly supported interaction of the following form:
    \begin{equation*}
         V=\int \chi(t_x)h(\textbf{x})\left[ ieA_0(\varphi^{\dagger}\partial_{0}\varphi-\partial_{0}\varphi^{\dagger}\varphi) - e^2A_{0}A^{0}\varphi^{\dagger}\varphi\right]\;d^4x.
\end{equation*}
Then, at first order in perturbation theory, it holds:
\begin{equation}\label{first_order_not_correct}
    \omega^{\beta,(1)}\left(R_V\left(\varphi^{\dagger}(x)\right) R_V\left(\varphi(y)\right)\right)=\omega_2^\beta(x,y)+Z^{\mathfrak{A}}(x,y)+Z^{\mathfrak{B}}(x,y),
\end{equation}
where the superscript $(1)$ indicates that the previous equality holds up to first order in the coupling constant, $\omega_2^\beta(x,y)$ is the zeroth-order contribution corresponding to the two-point function of the free thermal state and $Z^{\mathfrak{A}}(x,y) + Z^{\mathfrak{B}}(x,y)$, the first order contribution, are time translation dependent terms whose expression is reported in Appendix \ref{appendix_scalar_el} and whose corresponding Feynman diagrams are depicted in Figure \ref{fig:Z}.
\end{prop}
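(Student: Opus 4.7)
The plan is to directly expand the Bogoliubov maps to first order and evaluate the resulting expression using the quasifree property of $\omega^\beta$. At zeroth order $R_V(F) = F$, which immediately reproduces the free thermal two-point function $\omega_2^\beta(x,y)$ as the first summand on the right-hand side. For the first-order correction to the product $R_V(\varphi^\dagger(x))\,R_V(\varphi(y))$, the standard pAQFT retarded expansion gives
\[
R_V^{(1)}(F) = i \int_{\mathbb{M}} \chi(t_z)\, h(\mathbf{z})\, \theta(t_F - t_z)\, [\mathcal{L}_I(z), F]\, \di^4 z ,
\]
so that the order-one contribution to the two-point function is the sum of $\omega^\beta(R_V^{(1)}(\varphi^\dagger(x))\,\varphi(y))$ and $\omega^\beta(\varphi^\dagger(x)\, R_V^{(1)}(\varphi(y)))$.

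Next, I would use the quasifree property of $\omega^\beta$ to reduce each expectation value to a finite sum of Wick contractions involving only the free thermal two-point functions $\omega_2^{\beta,\pm}$ of Section \ref{se:complex-scalar}. The commutator structure inside $R_V^{(1)}$ yields a Pauli--Jordan commutator function on one of the contracted lines, consistently with $R_V$ being a retarded interacting field. The time derivatives appearing in the cubic part of $\mathcal{L}_I$ can be carried through the integrations since $\omega_2^\beta$ is smooth away from coincident points and the cutoffs $\chi,h$ are compactly supported, so no distributional subtleties arise.

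The interaction Lagrangian naturally splits into the trilinear-derivative piece $ieA_0(\varphi^\dagger\partial_0\varphi - \partial_0\varphi^\dagger\varphi)$ and the seagull piece $-e^2 A_0^2 \varphi^\dagger\varphi$. Each contributes its own Feynman diagram at first order: the derivative vertex produces $Z^{\mathfrak{A}}(x,y)$ and the seagull vertex produces $Z^{\mathfrak{B}}(x,y)$, matching the topologies depicted in Figure \ref{fig:Z}. Performing the Wick contractions explicitly and interchanging the spacetime integrations with the compactly supported cutoffs yields the closed-form expressions collected in Appendix \ref{appendix_scalar_el}.

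The main obstacle is bookkeeping rather than anything conceptual: one must carefully track which of $\omega_2^{\beta,\pm}$ appears in each contraction, where the time derivative lands after integration by parts, and that the $\theta$ function in $R_V^{(1)}$ is consistent with the relative operator ordering of the external fields. At this order and away from coincident points no renormalization is required, and the three-term decomposition of the statement follows directly from the two-term split of $\mathcal{L}_I$.
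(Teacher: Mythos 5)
Your overall strategy---expanding both Bogoliubov maps to first order, reducing to Wick contractions via the quasifree property of $\omega^\beta$, and exploiting the retarded-commutator structure of $R_V^{(1)}$---coincides with the paper's, which encodes the same structure through the identity $-iV\star\varphi^\dagger(x)+iV\cdot_T\varphi^\dagger(x)=-\int \di z\,\frac{\delta V}{\delta\varphi}(z)\,\Delta_R(x,z)$. However, your identification of the decomposition $Z^{\mathfrak{A}}+Z^{\mathfrak{B}}$ contains a genuine error. The seagull term $-e^2A_0A^0\varphi^\dagger\varphi$ is \emph{second} order in the coupling constant $e$, and the superscript $(1)$ in the statement explicitly means ``up to first order in the coupling constant''; accordingly the paper discards the seagull from $V$ at the outset (``we have already neglected the terms at second order in the coupling constant''), so it contributes to neither $Z^{\mathfrak{A}}$ nor $Z^{\mathfrak{B}}$. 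Both terms arise from one and the same cubic vertex $ieA_0(\varphi^\dagger\partial_0\varphi-\partial_0\varphi^\dagger\varphi)$.

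The true origin of the split is the time derivative sitting at that vertex: computing $\delta V/\delta\varphi$ and integrating by parts in $t_s$ produces, on the one hand, terms proportional to $\dot\chi(t_s)\,h(\mathbf{s})\,A_0(\mathbf{s})$ multiplying $\Delta_R\,\omega_2^\beta$ (and the mirror term with $\Delta_A$), which the paper collects as $Z^{\mathfrak{A}}$, and on the other hand, terms proportional to $2\chi(t_s)\,h(\mathbf{s})\,A_0(\mathbf{s})$ in which $\partial_0$ acts on the retarded or advanced propagator, collected as $Z^{\mathfrak{B}}$. This is exactly what the four diagrams of Figure \ref{fig:Z} depict: vertex weights $\pm ie\dot\chi hA$ versus $\pm 2ie\chi hA$ with $\partial_0\Delta_{R/A}$ on one leg---all single-$A_0$ insertions, none of seagull topology. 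You mention ``where the time derivative lands after integration by parts'' as mere bookkeeping, but that step is precisely what generates the claimed decomposition; attributing $Z^{\mathfrak{B}}$ to the seagull both miscounts powers of $e$ (contradicting the proposition's own convention) and would fail to reproduce the expressions in Appendix \ref{appendix_scalar_el}. With this corrected, the remainder of your argument (zeroth order yielding $\omega_2^\beta$, quasifree reduction to thermal two-point functions, absence of renormalization issues at this order away from coinciding points) goes through as in the paper.
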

The proof of this Proposition is extensively reported in Appendix \ref{appendix_scalar_el}. The steps consist in merely applying the expansion of the Bogoliubov map up to the first perturbative order taking into account the distinction between the non-commutative and time-ordered products. The time translation dependence arises from the combination of the energy modes in propagators and in the two-point function that are both of positive or both of negative energy when the $h \to 1$ limit is considered.

  \begin{figure}[H]
      \centering
    \begin{tikzpicture}
        \begin{scope}[thick, decoration={
    markings,
    }
    ]
    \node at (1,-0.3) {$-ie\dot\chi h A$};
    \node at (0.3,0.3) {$\Delta_R$};
    \node at (1.8,0.3) {$\omega_2^\beta$};
    \draw[postaction={decorate}] (-0.5,0)--(1,0);
    \filldraw[black] (-0.5,0) circle (0.5pt) node[anchor=north] {$x$};
    \filldraw[black] (1,0) circle (0.5pt);
    \draw[postaction={decorate} ] (1,0) -- (2.5,0);
    \filldraw[black] (2.5,0) circle (0.5pt) node[anchor=north] {$y$};
    \end{scope}
    \end{tikzpicture}
        \begin{tikzpicture}
        \begin{scope}[thick, decoration={
    markings,
    }
    ]
    \node at (1,-0.3) {$ie\dot\chi h A$};
    \node at (0.3,0.3) {$\omega_2^\beta$};
    \node at (1.8,0.3) {$\Delta_A$};
    \draw[postaction={decorate}] (-0.5,0)--(1,0);
    \filldraw[black] (-0.5,0) circle (0.5pt) node[anchor=north] {$x$};
    \filldraw[black] (1,0) circle (0.5pt);
    \draw[postaction={decorate} ] (1,0) -- (2.5,0);
    \filldraw[black] (2.5,0) circle (0.5pt) node[anchor=north] {$y$};
    \end{scope}
    \end{tikzpicture}
    \hspace{1cm}
        \begin{tikzpicture}
        \begin{scope}[thick, decoration={
    markings,
    }
    ]
    \node at (1,-0.3) {$-ie2\chi h A$};
    \node at (0.3,0.3) {$\partial_0\Delta_R$};
    \node at (1.8,0.3) {$\omega^\beta_2$};
    \draw[postaction={decorate}] (-0.5,0)--(1,0);
    \filldraw[black] (-0.5,0) circle (0.5pt) node[anchor=north] {$x$};
    \filldraw[black] (1,0) circle (0.5pt);
    \draw[postaction={decorate} ] (1,0) -- (2.5,0);
    \filldraw[black] (2.5,0) circle (0.5pt) node[anchor=north] {$y$};
    \end{scope}
    \end{tikzpicture}
        \begin{tikzpicture}
        \begin{scope}[thick, decoration={
    markings,
    }
    ]
    \node at (1,-0.3) {$ie2\chi h A$};
    \node at (0.3,0.3) {$\omega_2^{\beta}$};
    \node at (1.8,0.3) {$\partial_0\Delta_A$};
    \draw[postaction={decorate}] (-0.5,0)--(1,0);
    \filldraw[black] (-0.5,0) circle (0.5pt) node[anchor=north] {$x$};
    \filldraw[black] (1,0) circle (0.5pt);
    \draw[postaction={decorate} ] (1,0) -- (2.5,0);
    \filldraw[black] (2.5,0) circle (0.5pt) node[anchor=north] {$y$};
    \end{scope}
    \end{tikzpicture}
    \caption{The sum of the two diagrams on the left corresponds to $Z^{\mathfrak{A}}$ while the two on the right to $Z^{\mathfrak{B}}$. See Appendix \ref{appendix_scalar_el} for their analytic expression.}
    \label{fig:Z} 
\end{figure}

Starting from this general first order computation, we are now interested in showing through the concrete example of a constant external electromagnetic field how already for this simple model secular growths might arise. In fact, we show that taking the adiabatic limit before the large time limit leads to the failure of the return to the equilibrium property and thus the arising of secular growths.
\begin{prop}\label{prop_expl_pot} Assume the external scalar potential to be of the form $A_0(\textbf{s})=s_i$, for $i=1,2,3$ denoting any spatial coordinate, and consider the adiabatic limit $h\to 1$ for the interaction. If $f,g\in\mathcal{C}^\infty_0(\mathbb{M}^4)$ are two test functions and denoting by $f_t(x)\coloneqq f(t_x-t,\textbf{x})$ the time translation, the absolute value of the smeared distribution \eqref{first_order_not_correct}:
\begin{equation}\label{first_order_not_correct_sm}
 \mathcal{W}_{f,g}(t)\coloneqq \left|\left\langle\omega^{\beta,(1)}\left(R_V(\varphi^{\dagger}(x))  R_V\left(\varphi(y)\right)\right),f_t(x)g_t(y)\right\rangle\right|  
\end{equation}
grows linearly in time:
    \begin{equation*}
        \lim_{t\to\+\infty}\frac{\mathcal{W}_{f,g}(t)}{t}= C_{f,g},\;\;\;C_{f,g}\in\mathbb{R}^+,
    \end{equation*}
with $C_{f,g}\neq 0$ for some $f,g\in\mathcal{C}^\infty_0(\mathbb{M}^4)$.
\end{prop}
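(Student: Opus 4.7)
The starting point is Proposition~\ref{prop_2p_noneq}, which splits the first-order smeared two-point function into the zeroth-order piece $\omega_2^\beta(x,y)$ plus the two correction terms $Z^{\mathfrak{A}}(x,y)$ and $Z^{\mathfrak{B}}(x,y)$. Since $\omega_2^\beta$ depends only on $x-y$, the change of variables $(t_x,t_y)\mapsto(t_x+t,t_y+t)$ shows that $\langle\omega_2^\beta,f_t\otimes g_t\rangle$ is independent of $t$. Hence any $t$-dependence of $\mathcal{W}_{f,g}(t)$, and in particular any unbounded growth, must originate from $Z^{\mathfrak{A}}+Z^{\mathfrak{B}}$ in the adiabatic limit $h\to 1$ with $A_0(\mathbf{s})=s_i$.

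The plan is then to track the $t$-asymptotics of $\langle Z^{\mathfrak{A}}+Z^{\mathfrak{B}},f_t\otimes g_t\rangle$. First, I would carry out the adiabatic limit on the spatial vertex integration: together with the factor $s_i$ it produces $-i\,\partial_{p_i}\delta^3$ on the spatial momentum identifying propagator and thermal two-point function modes, which after an integration by parts becomes a derivative $\partial_{p_i}$ acting on the remaining momentum-dependent amplitudes and phases. Next, I would decompose $\Delta_R,\Delta_A$ and $\omega_2^\beta$ into their positive and negative frequency modes according to~\eqref{eq: 2pfscalar} and the analogous decomposition of the retarded/advanced kernels. Of the four resulting sign combinations, the opposite-sign ones give phases $e^{\pm i\omega_{\mathbf{p}}(t_x-t_y)}$ whose contributions to $\mathcal{W}_{f,g}(t)$ are time-translation invariant in $(t_x,t_y)$ and therefore bounded uniformly in $t$; the equal-sign combinations give phases of the form $e^{\mp i\omega_{\mathbf{p}}(t_x+t_y-2t_1)}$, where $t_1$ is the vertex time, and these are the only sources of possible $t$-growth.

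The vertex time integration is handled as follows. In $Z^{\mathfrak{A}}$ the factor $\dot{\chi}(t_1)$ is compactly supported, so $\int dt_1\,\dot\chi(t_1)\,e^{\pm 2i\omega_{\mathbf{p}}t_1}$ is a bounded smooth function of $\omega_{\mathbf{p}}$, essentially $\widehat{\dot\chi}(\mp 2\omega_{\mathbf{p}})$. In $Z^{\mathfrak{B}}$ the factor $\chi(t_1)$ is not compactly supported, but an integration by parts in $t_1$ transfers $\partial_0$ from the propagator onto $\chi$ and onto the test function data after smearing, producing an equivalent $\dot\chi$-type contribution plus terms of the same structure. After smearing with $f_t\otimes g_t$ and shifting $(t_x,t_y)\mapsto(t_x+t,t_y+t)$, the surviving equal-sign terms display an overall oscillatory factor $e^{\mp 2i\omega_{\mathbf{p}}t}$ multiplying an otherwise $t$-independent integrand of the form
\begin{equation*}
\int d^3\mathbf{p}\,\partial_{p_i}\!\Bigl[a(\mathbf{p})\,\widetilde{f}(\mp\omega_{\mathbf{p}},\mathbf{p})\,\widetilde{g}(\pm\omega_{\mathbf{p}},\mathbf{p})\,\widehat{\dot\chi}(\mp 2\omega_{\mathbf{p}})\,e^{\mp 2i\omega_{\mathbf{p}} t}\Bigr],
\end{equation*}
where $a(\mathbf{p})$ encodes the thermal and kinematic prefactors and $\widetilde{f},\widetilde{g}$ are space-time Fourier transforms.

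The linear growth then emerges because, by the Leibniz rule, one of the terms produced by $\partial_{p_i}$ is the one in which the derivative hits the oscillatory factor $e^{\mp 2i\omega_{\mathbf{p}} t}$, bringing down the explicit factor $\mp 2it\,(p_i/\omega_{\mathbf{p}})$. The remaining momentum integral is then $O(1)$ in $t$ because the Schwartz decay of $\widetilde{f},\widetilde{g}$ in $\mathbf{p}$ ensures absolute convergence and because the oscillation in $\omega_{\mathbf{p}}$ is benign at fixed $\mathbf{p}$. Dividing by $t$ and taking $t\to+\infty$ selects exactly this contribution, giving
\begin{equation*}
\lim_{t\to+\infty}\frac{\mathcal{W}_{f,g}(t)}{t}=|C_{f,g}|,
\end{equation*}
with $C_{f,g}$ an explicit momentum integral of the shape described above. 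Finally, I would exhibit a specific pair of test functions $f,g\in\mathcal{C}^\infty_0(\mathbb{M})$, chosen so that the momentum-space integrand has a nonvanishing integral, to conclude $C_{f,g}\neq 0$. The main technical obstacles are the systematic bookkeeping to check that the other placements of $\partial_{p_i}$ (onto $a(\mathbf{p})$, $\widetilde{f}\widetilde{g}$, or $\widehat{\dot\chi}$) and the opposite-sign mode combinations give contributions that are bounded or at most oscillatory in $t$, so that the $t^1$ term is not accidentally cancelled; and the careful handling of boundary terms in the partial integrations that move the vertex time derivative onto $\chi$ in $Z^{\mathfrak{B}}$.
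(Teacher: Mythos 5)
Your overall scaffolding does follow the paper's route---reducing to $Z^{\mathfrak{A}}+Z^{\mathfrak{B}}$, converting $A_0(\mathbf{s})=s_i$ in the adiabatic limit into a momentum derivative via $-i\partial_{p_i}\delta^3$, splitting the propagators into frequency sectors, and integrating by parts in the vertex time---but you have identified the \emph{wrong} frequency sector as the source of the growth, and this inverts the key analytic step. You claim that the mixed-sign combinations, with phases $e^{\pm i\en{p}(t_x-t_y)}$, are ``time-translation invariant and therefore bounded,'' while the equal-sign combinations, with phases $e^{\mp i\en{p}(t_x+t_y-2t_s)}$, are ``the only sources of possible $t$-growth.'' This is backwards. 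When $\partial_{p_i}$ acts on the time phases it brings down explicit prefactors proportional to $(t_x-t_s)$ or $(t_y-t_s)$ in \emph{both} sectors; after the shift $t_{x/y}\mapsto t_{x/y}+t$ these prefactors grow linearly in $t$. In the mixed sector the accompanying phase $e^{\pm i\en{p}(t_x-t_y)}$ is shift-invariant, so the $t$-linear prefactor survives undamped: this is precisely the paper's secular term $et\int \frac{\di^3\mathbf{p}}{4\en{p}^3(2\pi)^3}\,p_i\,\bigl[2\bshs{-}{p}\hat f(\en{p},-\mathbf{p})\hat g(-\en{p},\mathbf{p})+2\bshs{+}{p}\hat f(-\en{p},-\mathbf{p})\hat g(\en{p},\mathbf{p})\bigr]$, whose opposite-frequency arguments of $\hat f,\hat g$ betray its mixed-sign origin.

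Conversely, the equal-sign contribution from which you build $C_{f,g}$---$t$ times a momentum integral carrying the residual oscillation $e^{\pm 2i\en{p}t}$ and the factors $\hat{\dot{\chi}}(\mp 2\en{p})$---does not survive, and your assertion that this momentum integral is ``$O(1)$ in $t$ because \ldots the oscillation in $\omega_{\mathbf{p}}$ is benign at fixed $\mathbf{p}$'' is exactly where the argument fails: the phase $2\en{p}t$ has a single stationary point at $\mathbf{p}=0$, and stationary-phase estimates (the paper invokes a modified Lemma A.1 of \cite{Meda_2022}; cf.\ also the techniques of Proposition \ref{app: 1}) give decay of order $t^{-3/2}$ for the integral, so the whole term is $O(t^{-1/2})$ and tends to zero. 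Hence your selected contribution vanishes after dividing by $t$, and the genuine secular term is the one you discarded as bounded; moreover, since your candidate would merely be $O(1)$ even under your own estimate, it could not yield an existing nonzero limit of $\mathcal{W}_{f,g}(t)/t$ anyway. To repair the plan you must retain the $t$-linear, non-oscillatory mixed-sign terms from both $Z^{\mathfrak{A}}$ and $Z^{\mathfrak{B}}$ (in $Z^{\mathfrak{B}}$ they survive, after the partial integration in $t_s$, in the pieces with no $t_s$-dependent phase, alongside invariant boundary terms and stationary-phase-decaying bulk terms), and then check non-cancellation as the paper does: the $Z^{\mathfrak{A}}$ growth is $et\int \di^3\mathbf{p}\,\frac{p_i(\bshs{-}{p}+\bshs{+}{p})}{4(2\pi)^3\en{p}^3}\,\hat f(-\mathbf{p})\hat g(\mathbf{p})$ while the $Z^{\mathfrak{B}}$ growth is $-et\int \di^3\mathbf{p}\,\frac{\bshs{-}{p}+\bshs{+}{p}}{4(2\pi)^3\en{p}}\,\partial_{p_i}\bigl(\hat f(-\mathbf{p})\hat g(\mathbf{p})\bigr)$, and $f,g$ must be exhibited for which these do not cancel.
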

The full proof is reported in Appendix \ref{appendix_scalar_el} but we sketch here the main ideas. Consider the $h \to 1$ limit of the distribution $Z^{\mathfrak{A}}$ in \eqref{first_order_not_correct}, corresponding to the sum of the two diagrams depicted at the left hand side of Figure \ref{fig:Z}, and smear it with the functions $f_t,g_t$:
\begin{align*}
    &\left\langle\left(Z^{\mathfrak{A}}\right)(x,y),f_t(x)g_t(y)\right\rangle=\\ 
    &\mathcal{Z}_{\chi,f,g}(t)+et\int\frac{\di^3\textbf{p}}{4\en{p}^3(2\pi)^3}p_i\left[2\bshs{-}{p}\hat{f}(\en{p},-\textbf{p})\hat{g}(-\en{p},\textbf{p})+2\bshs{+}{p}\hat{f}(-\en{p},-\textbf{p})\hat{g}(\en{p},\textbf{p})\right]\\
    &+et\int\frac{\di^3\textbf{p}}{4\en{p}^3(2\pi)^3}p_i(\bshs{-}{p}+\bshs{+}{p})\left[\hat{\dot{\chi}}(-2\en{p})\hat{f}(\en{p},-\textbf{p})\hat{g}(\en{p},\textbf{p})e^{2i\en{p}t}+\hat{\dot{\chi}}(2\en{p})\hat{f}(-\en{p},-\textbf{p})\hat{g}(-\en{p},\textbf{p})e^{-2i\en{p}t}\right].
\end{align*}
Here, the time prefactor arises from the Fourier transform of the external potential, as a momentum derivative, acting on the energy modes and the remaining part, $\mathcal{Z}_{\chi,f,g}(t)$, is a bounded function of $t$. The Boltzmann thermal factors were denoted by:
    \begin{equation*}
            \bshs{+}{p}\coloneqq\frac{1}{1-e^{-\beta\en{p}}} \quad \bshs{-}{p}\coloneqq\frac{1}{e^{\beta\en{p}} - 1}.
    \end{equation*}
Using a slightly modified version of Lemma A.1 in \cite{Meda_2022} (namely by stationary phase methods) it is possible to show that the absolute value of the third addend in the previous expression goes to $0$ in the limit $t\to\infty$. On the contrary, the absolute value of the second term appearing in the previous expression grows linearly in time, giving rise to a secular effect. Finally, remaining smeared distribution $Z^{\mathfrak{B}}$ doesn't cancel the linearly growing term just described above.\\

The previous proposition shows that if the interaction between a complex scalar field and a constant external electric field is considered, in the adiabatic limit, the return to equilibrium property is not satisfied. In particular, the expectation value of the product of two local smeared fields in a KMS state of the free theory grows linearly in time at the first order in perturbation theory. Nevertheless, thanks to the results obtained in the previous section, in particular in Theorem \ref{th:truncated} and Corollary \ref{cor: NoSecInvariant},
we know that for spatially compactly supported interactions the return to equilibrium property is satisfied, so this kind of growths disappear. 
Therefore, the two-point function \eqref{duepuntinoncorr} of the thermal state of the free theory converges, in the large time limit, to the two-point function \eqref{duepunticorr} of the thermal state for the interacting theory. In addition, the KMS state for the interacting theory is well defined also in the limit $h \to 1$. So, we proceed to compute up to the first perturbative order the interacting KMS two-point function when the interaction is spatially compact. Therefore, in order to make explicit the cancellation of such growths, we compute the first order correction to \eqref{duepunticorr}, and we then compare it with the first order correction to \eqref{duepuntinoncorr} obtained in Proposition \ref{prop_2p_noneq}.
\begin{prop}\label{prop_correct_el}
     Consider an interaction of the following form:
    \begin{equation*}
         V=\int \chi(t_x)h(\textbf{x})\left[ ieA_0(\varphi^{\dagger}\partial_{0}\varphi-\partial_{0}\varphi^{\dagger}\varphi) - e^2A_{0}A^{0}\varphi^{\dagger}\varphi\right]\;\di^4x,
\end{equation*}
Then, at first order in perturbation theory, it holds:
\begin{equation}\label{first_order_correct}
    \omega^{\beta,V,(1)}\left(R_V\left(\varphi^{\dagger}(x)\right) R_V\left(\varphi(y)\right)\right)=\omega_2^\beta(x,y)+
Z^{\mathfrak{B},Inv},
\end{equation}
where again $\omega_2^\beta$ is the free KMS state two-point function giving the zeroth order contribution while the first order correction is contained in the remaining part:
 \begin{align*}        
Z^{\mathfrak{B},Inv}
=&2e\int\di^3\textbf{s}\, h(\textbf{s})A_0(\textbf{s})\int\int\frac{\di^3\textbf{p}\di^3\textbf{k}}{(2\pi)^6}\frac{1}{4\en{p}}\left[e^{i\textbf{p}(\textbf{s}-\textbf{y})+i\textbf{k}(\textbf{x}-\textbf{s})}+e^{i\textbf{p}(\textbf{s}-\textbf{x})+i\textbf{k}(\textbf{y}-\textbf{s})}\right]\\
        & \times\left[\frac{2\en{p}}{(\en{p}+\en{k})(\en{p}-\en{k})}\right]\left[\bshs{+}{p}e^{-i\en{p}(t_x-t_y)}-\bshs{-}{p}e^{i\en{p}(t_x-t_y)}\right]
\end{align*}
that is now invariant under time translations.
\end{prop}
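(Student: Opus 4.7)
The plan is to use the explicit construction of the interacting KMS state furnished by \eqref{eq: intKMS}, namely
\[
\omega^{\beta,V}\bigl(R_V(\varphi^\dagger(x))\,R_V(\varphi(y))\bigr)
= \frac{\omega^{\beta}\bigl(R_V(\varphi^\dagger(x))\,R_V(\varphi(y))\,U_V(i\beta)\bigr)}{\omega^{\beta}(U_V(i\beta))},
\]
and expand both sides to first order in the coupling constant $e$. The zeroth order contribution of the numerator coincides with $\omega^\beta(U_V(i\beta))\,\omega_2^\beta(x,y)$, so after dividing one recovers $\omega_2^\beta(x,y)$; the task is therefore to collect and rearrange the various first order corrections.

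First I would substitute the first order expansion of $R_V(\varphi^\dagger(x))\,R_V(\varphi(y))$ already carried out in Proposition \ref{prop_2p_noneq}, which produces $Z^{\mathfrak{A}}(x,y)+Z^{\mathfrak{B}}(x,y)$. Second, I would expand $U_V(i\beta)$ to first order using the Dyson series applied to \eqref{eq:formaK}: since $K=-R_V\bigl(\lambda\int\dot\chi(t)h(\mathbf{s})\mathcal{L}_I(s)\,\di^4s\bigr)$, its first order contribution to $U_V(i\beta)$ is an insertion of $\int\dot\chi(t)h(\mathbf{s})\mathcal{L}_I(s)\,\di^4s$ with each occurrence of $t$ analytically continued along the imaginary interval $[0,\beta]$ with respect to the free dynamics on $\mathcal{A}_{\omega_2^\beta}$. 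The first order expansion of the denominator contributes, via the standard normalisation subtraction, the term $-\omega_2^\beta(x,y)\,\omega^\beta(U_V(i\beta)^{(1)})$.

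The core of the argument is then to prove that the cross term $\omega^\beta\bigl(\varphi^\dagger(x)\varphi(y)\,U_V(i\beta)^{(1)}\bigr)-\omega_2^\beta(x,y)\,\omega^\beta\bigl(U_V(i\beta)^{(1)}\bigr)$ cancels $Z^{\mathfrak{A}}(x,y)$ exactly. Both contributions carry the same $\dot\chi(t_s)\,h(\mathbf{s})\,A_0(\mathbf{s})$ vertex factor, the first one saturated against the imaginary-time KMS propagator following from the analytic continuation, the second one against the retarded and advanced propagators $\Delta_R,\Delta_A$ produced by the commutator structure of $R_V$. Moving both to Fourier space and using the algebraic identities satisfied by $\bshs{+}{p}$ and $\bshs{-}{p}$, which encode the KMS condition, one verifies that the contributions at the resonant frequencies $\pm\en{p}$ combine pairwise to zero. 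These are exactly the resonant singularities which, after the singular action of a polynomial potential, produce the linear growth exhibited in Proposition \ref{prop_expl_pot}, so their cancellation here is a direct reflection of the return-to-equilibrium mechanism embodied in Theorem \ref{thm: thermalisation}.

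What survives this cancellation is $Z^{\mathfrak{B}}(x,y)$. Using that $\chi\equiv 1$ throughout the relevant support and replacing $\partial_0\Delta_{R/A}$ by its Fourier representation, a direct computation reproduces the expression $Z^{\mathfrak{B},Inv}$ stated in the proposition, with non-resonant denominator $(\en{p}+\en{k})(\en{p}-\en{k})$ and thermal factors $\bshs{\pm}{p}$ weighting only the translation-invariant exponentials $e^{\mp i\en{p}(t_x-t_y)}$. The main obstacle is the bookkeeping in the previous step: one must match the $i\epsilon$ prescriptions coming from $\Delta_{R/A}$ with the imaginary-time prescription generated by $U_V(i\beta)$, so that the resonant contributions at $\pm\en{p}$ can be collected and shown to cancel once the thermal spectral decomposition of the free two-point function is inserted.
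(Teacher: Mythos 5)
Your overall strategy is the paper's: the ratio formula \eqref{eq: intKMS}, expanded to first order with the normalisation subtraction, is exactly the connected expansion \eqref{pert_exp}, and your ``cross term'' is precisely the correction $E=\int_0^\beta \di u\,\omega^{\beta,\mathcal{T}}(\varphi^\dagger(x)\varphi(y)\otimes\alpha_{iu}\dot V)$ that the paper adds to the Bogoliubov terms of Proposition \ref{prop_2p_noneq}. However, your claimed cancellation bookkeeping is wrong in two places, and the error is not repairable by ``direct computation'' as you assert. First, $E$ does \emph{not} cancel $Z^{\mathfrak{A}}$ alone: $E$ splits into a piece $E^{1,\mathfrak{A}}$ (from $\ddot\chi$, after integration by parts, with weights $1/(4\en{p}\en{k})$) and a piece $E^{1,\mathfrak{B}}$ (from the $\partial_0\omega_2^\beta$ factor, with resonant denominators $1/(\en{p}\pm\en{k})$); the latter has a pole structure that simply does not occur in $Z^{\mathfrak{A}}$, so no pairwise cancellation between $E$ and $Z^{\mathfrak{A}}$ is possible. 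One can check this explicitly on a single mode, e.g.\ $e^{i\en{k}t_x+i\en{p}t_y-it_s(\en{p}+\en{k})}$: using the identity
\begin{equation*}
\bshs{-}{k}\bshs{-}{p}\left(e^{(\en{p}+\en{k})\beta}-1\right)=\bshs{+}{p}+\bshs{-}{k},
\end{equation*}
the $1/(4\en{p}\en{k})$-weighted contributions of $Z^{\mathfrak{A}}$ and $E^{1,\mathfrak{A}}$ sum to $2\bshs{+}{p}/(4\en{p}\en{k})\neq 0$; this residue is killed only by the $1/(\en{p}+\en{k})$-weighted terms coming from $E^{1,\mathfrak{B}}$ \emph{and} from the bulk part of $Z^{\mathfrak{B}}$.

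This is the second gap: $Z^{\mathfrak{B}}$ is \emph{not} equal to $Z^{\mathfrak{B},Inv}$. Since $\chi$ is a switch-on (not compactly supported in the past of $t_x$), the vertex integral $\int_{-\infty}^{t_x}\di t_s\,\chi(t_s)\,e^{\mp i t_s(\en{p}\pm\en{k})}(\cdots)$ must be integrated by parts, producing the time-translation-invariant boundary term $Z^{\mathfrak{B},Inv}$ \emph{plus} a nonvanishing $\dot\chi$-smeared bulk remainder $Z^{\mathfrak{B},bulk}$, which still carries the $t_x+t_y$-dependent modes. The correct cancellation is therefore the three-way identity $E^{1}+Z^{\mathfrak{A}}+Z^{\mathfrak{B},bulk}=0$, established mode by mode as above, after which only the boundary term $Z^{\mathfrak{B},Inv}$ survives — in particular it is the entanglement of $E^{1,\mathfrak{B}}$ with $Z^{\mathfrak{B},bulk}$, not any $i\epsilon$ matching between $\Delta_{R/A}$ and the imaginary-time evolution, that implements the return-to-equilibrium mechanism at this order. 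As written, your decomposition would leave uncancelled switch-on-dependent terms in $Z^{\mathfrak{B}}$ and uncancelled resonant terms in $E$, so the proof does not close.
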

Again the proof is explicitly reported in Appendix \ref{appendix_scalar_el}.
The linear contribution $Z^{\mathfrak{B},Inv}$ correspond to the part of $Z^\mathfrak{B}$ which is invariant under time translations. 
More precisely, the difference with respect to the results of Proposition \ref{prop_2p_noneq} is that not only the Bogoliubov map but also the interacting KMS state is perturbatively expanded.
In this way, for a generic local observable $F\in\mathcal{A}$:
\begin{equation}\label{pert_exp}
       \omega^{\beta,V}(R_V(F))=\sum_{n\geq0}(-1)^n\int_{\beta S_n}\di u_1...\di u_n\;\omega^{\beta,\mathcal{T}}\left(R_V(F)\otimes \tau_{iu_1}K\otimes...\otimes\tau_{iu_n}K\right),
\end{equation}
where $\omega^{\beta,\mathcal{T}}$ denotes the truncated $n-$point function of the state $\omega^\beta$ and the specific form of the generator of the cocycle that intertwines the free and the interacting dynamics, $K$, is:
\begin{equation*}
     K=R_V(-\Dot{V})=-\int\Dot{\chi}(t_x)h(\textbf{x})\left[ieA_0(\textbf{x})R_V(\varphi^{\dagger}\partial_{0}\varphi-\partial_{0}\varphi^{\dagger}\varphi) + e^2A_0(\mathbf{x})A^0(\mathbf{x})R_V(\varphi^{\dagger} \varphi)\right]\di^4x.
\end{equation*}
At the first order in the coupling constant $e$, the perturbative expansion \eqref{pert_exp} coincides with \eqref{first_order_not_correct}, except for the presence of the additional contribution $E$ corresponding to $n=1$ in \eqref{pert_exp} and $R_V(F)=F$:
\begin{equation*}
    E\coloneqq\int_0^{\beta}du\;\omega^{\beta,\mathcal{T}}(\varphi^{\dagger}(x)\varphi(y)\otimes\alpha_{iu}\Dot{V}).
\end{equation*}
This correction cancels the terms coming from the perturbative expansion of the Bogoliubov map responsible of the time growths in Proposition \ref{prop_expl_pot}. This was expected, as the interacting KMS two-point function is time translation invariant at each fixed perturbative order.\\

Finally, recall that $\omega^{\beta,V}$ doesn't depend on the switching function $\chi$ and has a convergent $h \to 1$ limit. Therefore, the just proven results emphasize the lack of secular growths also in the case in which the interaction is everywhere supported and has always been turned on (or, equivalently, has been turned on at past infinity).

\subsubsection{Vector magnetic potential}\label{Sec:ScalarMag}
In this subsection we consider the case in which the complex scalar field interacts with an external magnetic vector potential:
\begin{equation}
A_\mu(s)=(0,\textbf{A}(\textbf{s})).
\end{equation}
The aim is to show, for completeness, that the interacting thermal equilibrium two-point function does not present any time growth for the generic choice of the stationary vector potential. In particular, we prove a result analogous to the one presented in Proposition \ref{prop_correct_el}. However, for the sake of simplicity, we report (and prove) the result directly in the adiabatic limit and fixing the gauge freedom for the external potential.
\begin{prop}\label{prop_correct_mag}
Consider an interaction of the following form:
\begin{equation*}
    V=\int \chi(t_x)h(\textbf{x})\left[ ieA^{i}(\varphi^{\dagger}\partial_{i}\varphi-\partial_{i}\varphi^{\dagger}\varphi)-e^2A_{i}A^{i}\varphi^{\dagger}\varphi\right]\;\di^4x,
\end{equation*}
with the external vector potential in the Coulomb gauge ($\partial_iA^i(\textbf{s})=0$). Then, at first order in perturbation theory and in the limit $h \to 1$, it holds:
\begin{equation}\label{first_order_correct_mag}
    \omega^{\beta,V,(1)}\left(R_V\left(\varphi^{\dagger}(x)\right) R_V\left(\varphi(y)\right)\right)=\omega_2^\beta(x,y)+\mathfrak{C},
\end{equation}
where again $\omega_2^\beta$ is the free KMS state two-point function giving the zeroth order contribution while the first order correction:
\begin{align*}       
   \mathfrak{C}\coloneqq\int \di^3\textbf{s}\int \di^3\textbf{p}\int \di^3\textbf{k}\;\frac{eA_i(\mathbf{s})}{(2\pi)^6}\frac{k_i}{2\en{p}\en{k}}
    \left[\bshs{+}{p}\frac{e^{-i\en{p}(t_x-t_y)}}{(\en{k}+\en{p})}+\bshs{+}{p}\frac{e^{-i\en{p}(t_x-t_y)}}{(\en{k}-\en{p})}+\right.\\
   \left.+\bshs{-}{p}\frac{e^{i\en{p}(t_x-t_y)}}{(\en{k}-\en{p})}+\bshs{-}{p}\frac{e^{i\en{p}(t_x-t_y)}}{(\en{k}+\en{p})}\right]\left[e^{i\mathbf{p}(\mathbf{s-y})+i\mathbf{k}(\mathbf{x-s})}-e^{i\mathbf{p}(\mathbf{x-s})+i\mathbf{k}(\mathbf{y-s})}\right]
    \end{align*}
is invariant under time translations.
\end{prop}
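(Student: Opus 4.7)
The plan is to mirror the strategy that proves Proposition~\ref{prop_correct_el} in Appendix~\ref{appendix_scalar_el}, now adapted to the derivative coupling $ieA^i(\varphi^\dagger \partial_i\varphi - \partial_i\varphi^\dagger\varphi)$ and exploiting the Coulomb gauge $\partial_i A^i = 0$. The starting point is the perturbative series \eqref{pert_exp} for the interacting KMS state, applied to $F = \varphi^\dagger(x)\varphi(y) + (\text{first order in } R_V)$. Two independent first-order contributions in $e$ must therefore be collected: one from expanding the Bogoliubov maps $R_V(\varphi^\dagger(x))$ and $R_V(\varphi(y))$ in \eqref{first_order_not_correct}-style terms, and one from the $n=1$ term in the KMS expansion, namely $E = -\int_0^{\beta}\! du\,\omega^{\beta,\mathcal{T}}\!\big(\varphi^\dagger(x)\varphi(y)\otimes \alpha_{iu}\dot V\big)$, with $\dot V = -K$ obtained by replacing $\chi$ by $\dot\chi$ in $V$. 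The quadratic in $A$ term of $\mathcal{L}_I$ does not enter at first order in $e$, so only the derivative coupling contributes.

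The first step would be to repeat the computation of Proposition~\ref{prop_2p_noneq} with $A_0\partial_0$ replaced by $A^i\partial_i$. This gives four Feynman diagrams analogous to those in Figure~\ref{fig:Z}: two with a $\dot\chi h A^i$ vertex joined by a retarded/advanced propagator and a thermal two-point function, and two with a $\chi h A^i$ vertex and a spatial derivative of the propagator. Passing to momentum space, the spatial derivative produces a factor $k_i$ (or $p_i$), so that under the Coulomb gauge $\partial_i A^i=0$ the two classes combine naturally: terms that would produce $\partial_i A^i(\mathbf{s})$ drop after integration by parts in $\mathbf{s}$, leaving only contributions in which the derivative hits a $\varphi$ or $\varphi^\dagger$, which explains the single factor $k_i$ appearing in $\mathfrak{C}$ and the minus sign between the two exponential arrangements (reflecting the two terms of the derivative coupling).

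Next I would split each resulting momentum integrand as in the scalar-electric case into a piece invariant under $(t_x,t_y)\to(t_x+a,t_y+a)$ and a piece carrying $\hat{\dot\chi}(\pm 2\omega_{\mathbf p})$ factors with combined frequencies $\pm(\omega_{\mathbf p}+\omega_{\mathbf k})$ or similar; the former, after using the resolvent-like identity $\frac{1}{\omega_{\mathbf p}\pm\omega_{\mathbf k}}$ produced by the time integration of retarded/advanced propagators against the free two-point function, reproduces exactly the four combinations of $\bshs{\pm}{p}$ with $e^{\mp i\omega_{\mathbf p}(t_x-t_y)}$ and denominators $(\omega_{\mathbf k}\pm\omega_{\mathbf p})$ that appear in the stated $\mathfrak{C}$. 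The latter, non-invariant pieces, have to be cancelled by $E$.

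For the cancellation, I would compute $E$ directly using the two-point function \eqref{eq: 2pfscalar} to evaluate the truncated three-point function and perform the $u$ integral on $[0,\beta]$ explicitly; after changing $\chi\to\dot\chi$ and using the $\hat{\dot\chi}$ Fourier transform, the combination of $\omega_{\mathbf p}$-exponentials and Boltzmann factors from the $u$ integration produces exactly the sign and structure required to annihilate the non time-translation-invariant terms from the Bogoliubov expansion, in complete analogy with the computation in Appendix~\ref{appendix_scalar_el}. The main obstacle, as in the scalar-electric case, is purely bookkeeping: tracking the eight distinct momentum-frequency combinations (four Bogoliubov terms and four from the expansion of the KMS cocycle) and checking that the derivative couplings distribute the momentum factors $k_i$ and $p_i$ in a way consistent with the Coulomb gauge, so that after simplification only the invariant combination $\mathfrak{C}$ survives. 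Since the guaranteed cancellation is independent of the specific form of $A^i(\mathbf{s})$ (it is dictated by Corollary~\ref{cor: NoSecInvariant} applied to the manifestly time-translation-invariant $\omega^{\beta,V}$), this step is essentially a check rather than a genuine difficulty, and the proof concludes with the adiabatic limit $h\to 1$ of $\mathfrak{C}$, which exists because the resulting spatial $\mathbf{s}$-integral is controlled by the decay of the free two-point function in $|\mathbf{s}-\mathbf{x}|,|\mathbf{s}-\mathbf{y}|$.
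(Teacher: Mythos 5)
Your overall route coincides with the paper's own proof in Appendix \ref{appendix_scalar_mag}: expand the Bogoliubov maps, integrate by parts in the vertex time $t_s$ to split the first order into boundary terms (invariant, yielding $\mathfrak{C}=\mathfrak{C}_1+\mathfrak{C}_2$) and bulk terms proportional to $\dot\chi$, then compute the first-order state correction $E$ and show it annihilates the bulk terms. There is, however, one concrete misidentification that would derail the computation as you describe it: you assert that the cancellation proceeds ``in complete analogy'' with the scalar-electric case of Appendix \ref{appendix_scalar_el}, where the coefficient of each exponential mode vanishes by Boltzmann-factor identities alone, with no gauge input. In the magnetic case it does not. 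Collecting the terms of $E^{(1)}+\mathfrak{B}_1+\mathfrak{B}_2$ with a fixed exponential mode, the Boltzmann factors combine into equal and opposite brackets multiplying $k_i$ and $p_i$ respectively, so that a residual proportional to $(p_i+k_i)$ survives; this residual vanishes only because the $\mathbf{s}$-integration produces the Fourier transform $\hat{A}_i(\mathbf{p}+\mathbf{k})$ and the Coulomb gauge yields the transversality condition $q_i\hat{A}_i(\mathbf{q})=0$ evaluated at $\mathbf{q}=\mathbf{p}+\mathbf{k}$. In your sketch the gauge is invoked only at the vertex, to replace $\partial^i(A_i\varphi^\dagger)+A^i\partial_i\varphi^\dagger$ by $2A^i\partial_i\varphi^\dagger$; that is indeed the paper's first use of it, but the second, essential use is at the cancellation stage, and without it the non-invariant terms are simply not zero.

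Three smaller points. First, appealing to Corollary \ref{cor: NoSecInvariant} to declare the cancellation ``a check rather than a genuine difficulty'' is misplaced: that corollary establishes secular boundedness of the truncated functions of the free state (feeding Theorem \ref{th:truncated}), not the time-translation invariance of $\omega^{\beta,V,(1)}$; the invariance that guides the computation follows from the KMS property of $\omega^{\beta,V}$ with respect to $\tau_t^V$, and in any case the proposition asserts a specific closed form for $\mathfrak{C}$, which only the explicit bookkeeping (including the transversality step) delivers. Second, your $E$ carries a wrong overall sign: from \eqref{pert_exp} with $K=R_V(-\dot V)$, the $n=1$ term is $+\int_0^{\beta}\di u\,\omega^{\beta,\mathcal{T}}\big(\varphi^{\dagger}(x)\varphi(y)\otimes\alpha_{iu}\dot V\big)$, as in the paper, not its negative. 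Third, integrating by parts in $\mathbf{s}$ at finite $h$ produces terms proportional to $\partial_i h(\mathbf{s})$; the paper disposes of these only in the limit $h\to 1$, by dominated convergence together with the pointwise vanishing of $\partial_i h$, and since the statement is formulated in the adiabatic limit you should make this step explicit rather than integrate by parts tacitly.
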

The proof of this Proposition is analogous to the one in Proposition \ref{prop_correct_el}, with the only difference consisting in the form of the interaction. The full derivation can be found in the Master thesis \cite{Sangaletti} and it is presented in Appendix \ref{appendix_scalar_mag} for completion.

\subsection{Dirac fields coupled with strong external electromagnetic field}\label{sec: Dirac example}
In this section, we focus on the perturbative computation of the expectation value of an observable for a Dirac field coupled with a classical (strong) external electromagnetic field. In particular, we aim at computing at first order in perturbation theory the expectation value of the Dirac current:
\begin{equation*}
    j^{\mu}(x) = e \overline{\psi} \gamma^{\mu} \psi(x)
\end{equation*}
on the KMS state associated to the interacting theory determined by the following Lagrangian:
\begin{equation*}
    \mathcal{L} = \mathcal{L}_{D} + \mathcal{L}_I,
\end{equation*}
with:
\begin{equation*}
    \mathcal{L}_D = \overline{\psi}(i \cancel{\partial} - m) \psi, \quad \mathcal{L}_I = eA_{\mu} \overline{\psi} \gamma^{\mu} \psi.
\end{equation*}
Here, $A_{\mu} = A_{\mu}(\mathbf{x})$ is the electromagnetic potential, function of just the spatial coordinates. For the discussion of locally covariant Dirac field theories in external potentials see \cite{zahn2012}. In a second step, we compare this computation with the corresponding one on the free KMS state. The motivation and type of analysis is the same as that of the previous subsection with the only difference that here, instead of the two-point function, the expectation value of a concrete local observable is studied.\\
We start assuming the interacting potential to have spatial compact support:
\begin{equation*}
    V = \lambda e \int \di^4x A_{\mu}(\mathbf{x}) \overline{\psi}(x) \gamma^{\mu} \psi(x) \chi(t) h(\mathbf{x}),
\end{equation*}
with $\chi(t)$ the switch-on function given in equation \eqref{eq:accad} and $h(\mathbf{x})$ the space cutoff. The observable, $j^{\mu}(x)$, is supported after the complete switch on. In the following, we first perturbatively expand the expectation value:
\begin{equation*}
    \cancel{\omega}^{\beta,V}(R_V(j^0(x))),
\end{equation*}
showing the disappearing of any dependence from the time $t$ and from the switch-on process. After, we rather assume $V$ to be already in the adiabatic limit, $h \to 1$, where return to equilibrium is not guaranteed. In that context, we discuss the presence of secular growths and the explicit dependence on $\chi$.\\\\
Consider the perturbative expansion of the state (see equation \eqref{eq: espansKMS}) and that of the observable, at first order. Then:
\begin{align}
    &\cancel{\omega}^{\beta,V}\big(R_V(j^0(x))\big) =\nonumber\\
    \cancel{\omega}^{\beta}(:j^0(x):) -i \cancel{\omega}^{\beta}(V :j^0(x):) +&i \cancel{\omega}^{\beta}\big(T(V, :j^0(x):)\big) - \int_0^{\beta}\di u \, \cancel{\omega}^{\beta,{\mathcal{T}}}\big(:j^0(x): \otimes \, \tau_{iu} (K)\big) + O(\lambda^2) \label{eq: DiracCorrente},
\end{align}
where we have denoted the normal ordering with $:\, \cdot \, :$ to remind that without it the expectation value of $j$ is ill-defined. The first term corresponds with the zeroth order, the expectation value of the current of a free Dirac field on the state that is KMS with respect to the free time evolution. The second and third terms correspond to the expansion of the Bogoliubov map of the observable at first order in the coupling $\lambda$. Finally, the fourth term is the first-order expansion of the interacting equilibrium state, where $K$ is the generator of the cocycle intertwining the free and the interacting dynamics \eqref{eq:formaK}.\\
The first thing to notice is that, by normal ordering with respect to the two-point function of $\cancel{\omega}^{\beta}$, the zeroth order contribution vanishes. The computation of the remaining terms is summarised in the following Lemmata, proven in the appendix:
\begin{lem}\label{lem: DiracTerm1}
The term in \eqref{eq: DiracCorrente} corresponding to the first perturbative order correction to the state takes the explicit form:
\begin{align}
    - &\int_0^{\beta}\di u \, \cancel{\omega}^{\beta,{\mathcal{T}}}\big(:j^0(x): \otimes \, \tau_{iu} (K)\big) =\nonumber\\
    &-\frac{\lambda e^2}{(2\pi)^6} \int \di t_1 \di^3\mathbf{x}_1  \Dot{\chi}(t_1) h(\mathbf{x}_1) A_{\mu}(\mathbf{x}_1) \gamma^{0} \gamma^{\mu} \int \frac{\di^3\mathbf{p} \di^3 \mathbf{k}}{4 \en{p} \en{k}}e^{-i(\mathbf{p} - \mathbf{k})(\mathbf{x} - \mathbf{x}_1)} \times \nonumber\\
    &\times \bigg[ \frac{e^{i(\en{p} - \en{k})(t - t_1)}}{\en{p} - \en{k}} (-\gamma^0 \en{p} - \gamma^i p_i + m) (-\gamma^0 \en{k} - \gamma^i k_i + m) \bigg( \frac{1}{1 + e^{\beta \en{k}}} - \frac{1}{1 + e^{\beta \en{p}}}  \bigg)\nonumber\\
    &\,\,\,\, + \frac{e^{i(\en{p} + \en{k})(t - t_1)}}{\en{p} + \en{k}} (-\gamma^0 \en{p} - \gamma^i p_i + m) (\gamma^0 \en{k} - \gamma^i k_i + m)\bigg( \frac{1}{1 + e^{\beta \en{k}}} + \frac{1}{1 + e^{\beta \en{p}}} - 1 \bigg)\nonumber\\
    &\,\,\,\, + \frac{e^{-i(\en{p} + \en{k})(t - t_1)}}{\en{p} + \en{k}} (\gamma^0 \en{p} - \gamma^i p_i + m) (-\gamma^0 \en{k} - \gamma^i k_i + m) \bigg( \frac{1}{1 + e^{\beta \en{k}}} + \frac{1}{1 + e^{\beta \en{p}}} - 1\bigg)\nonumber\\
    &\,\,\,\, + \frac{e^{-i(\en{p} - \en{k})(t - t_1)}}{\en{p} - \en{k}} (\gamma^0 \en{p} - \gamma^i p_i + m) (\gamma^0 \en{k} - \gamma^i k_i + m) \bigg( \frac{1}{1 + e^{\beta \en{k}}} - \frac{1}{1 + e^{\beta \en{p}}} \bigg)\bigg], \label{eq: termine1}
\end{align}
where, for notational convenience, spinorial indices are dropped. 
\end{lem}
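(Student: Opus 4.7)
The plan is to expand both $K$ and the truncated two-point function at the lowest nontrivial order in $\lambda$ and then explicitly perform the integrations. At zeroth order in the coupling, the Bogoliubov map reduces to the identity, so from Equation \eqref{eq:formaK} one has
\[
K = -\lambda e \int \di^4 x_1\, \dot\chi(t_1)\, h(\mathbf{x}_1)\, A_\mu(\mathbf{x}_1)\, {:}\overline{\psi}(x_1)\gamma^\mu \psi(x_1){:} + O(\lambda^2),
\]
which is exactly the accuracy required since \eqref{eq: DiracCorrente} treats the fourth term already at order $\lambda$.

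Next, since both $:j^0(x):$ and the leading part of $K$ are normal-ordered bilinears in the free Dirac field and $\cancel{\omega}^\beta$ is quasifree, the truncated two-point function reduces by Wick's theorem (applied with the sign dictated by the fermionic contraction crossing) to
\[
\cancel{\omega}^{\beta,\mathcal{T}}\bigl({:}\overline{\psi}\gamma^0\psi{:}(x)\otimes {:}\overline{\psi}\gamma^\mu\psi{:}(x_1)\bigr) = -\Tr\!\left[\gamma^0\, \cancel{S}_2^{\beta,+}(x,x_1)\, \gamma^\mu\, \cancel{S}_2^{\beta,-}(x_1,x)\right].
\]
I then apply the complex time translation $\tau_{iu}$ to the second factor; at the level of the Fourier representations \eqref{eq: 2puntiKMS_+}--\eqref{eq: 2puntiKMS_-}, this amounts to multiplying the positive-energy and negative-energy modes in $x_1$ by $e^{-\omega_{\mathbf{k}} u}$ and $e^{+\omega_{\mathbf{k}} u}$ respectively. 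Inserting these Fourier representations yields four contributions labelled by the sign choices $(\pm \omega_{\mathbf{p}}, \pm \omega_{\mathbf{k}})$; each is of the schematic form
\[
e^{\pm i\omega_{\mathbf{p}}(t-t_1)\pm i\omega_{\mathbf{k}}(t-t_1)}\, e^{i(\mathbf{k}-\mathbf{p})(\mathbf{x}-\mathbf{x}_1)}\, N_{\pm,\pm}(\mathbf{p},\mathbf{k})\, e^{\mp\omega_{\mathbf{k}} u}\, F^{\beta}_{\pm\pm},
\]
where $N_{\pm,\pm}$ collects the $(\pm\gamma^0\omega - \gamma^i p_i + m)$ spinorial kernels and $F^\beta_{\pm\pm}$ is the thermal denominator.

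I then perform the $u$-integration over $[0,\beta]$. The elementary identity $\int_0^\beta e^{\mp \omega_{\mathbf{k}} u}\di u = (1-e^{\mp\beta\omega_{\mathbf{k}}})/(\pm\omega_{\mathbf{k}})$ turns the four $u$-dependent exponentials into the four denominators $(\pm\omega_{\mathbf{p}}\pm\omega_{\mathbf{k}})^{-1}$ appearing in the statement. Combining $(1-e^{\mp\beta\omega_{\mathbf{k}}})$ with the original thermal prefactors $(1+e^{\pm\beta\omega_{\mathbf{p}}})^{-1}$ and $(1+e^{\pm\beta\omega_{\mathbf{k}}})^{-1}$ coming from \eqref{eq: 2puntiKMS_+}--\eqref{eq: 2puntiKMS_-}, I use the algebraic identities
\[
\frac{1}{1+e^{-\beta\omega_{\mathbf{p}}}}=1-\frac{1}{1+e^{\beta\omega_{\mathbf{p}}}},\qquad \frac{1-e^{-\beta\omega_{\mathbf{k}}}}{1+e^{-\beta\omega_{\mathbf{k}}}}\cdot\frac{1}{1+e^{\beta\omega_{\mathbf{p}}}}=\frac{1}{1+e^{\beta\omega_{\mathbf{p}}}}-\frac{1}{1+e^{\beta\omega_{\mathbf{k}}}}
\]
and their analogues to reorganize the two "equal-sign" ($\omega_{\mathbf{p}}-\omega_{\mathbf{k}}$) contributions into factors of $\bigl(\tfrac{1}{1+e^{\beta\omega_{\mathbf{k}}}}-\tfrac{1}{1+e^{\beta\omega_{\mathbf{p}}}}\bigr)$, and the two "opposite-sign" ($\omega_{\mathbf{p}}+\omega_{\mathbf{k}}$) contributions into factors of $\bigl(\tfrac{1}{1+e^{\beta\omega_{\mathbf{k}}}}+\tfrac{1}{1+e^{\beta\omega_{\mathbf{p}}}}-1\bigr)$. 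Overall translation invariance of the free two-point functions collapses the $x,x_1$ dependence into $e^{-i(\mathbf{p}-\mathbf{k})(\mathbf{x}-\mathbf{x}_1)}$ and $e^{\pm i(\omega_{\mathbf{p}}\pm\omega_{\mathbf{k}})(t-t_1)}$, reproducing exactly \eqref{eq: termine1}.

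The main obstacle is step 6, namely bookkeeping the eight thermal factors (two from each of $\cancel{S}_2^{\beta,\pm}$ and two more from each $u$-integral) and showing that, for each of the four frequency combinations, they collapse into precisely the compact expression displayed in the lemma; the spinorial traces and the overall sign from anticommutation are routine but must be tracked carefully to ensure the factor $-\lambda e^2/(2\pi)^6$ and the structure $\gamma^0\gamma^\mu$ on the left come out correctly.
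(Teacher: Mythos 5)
Your strategy coincides with the paper's proof: expand $K$ to first order (the Bogoliubov map is the identity at order $\lambda^0$), reduce the truncated function of the two normal-ordered bilinears to a single fermionic loop of two thermal two-point functions, perform the $u$-integration over $[0,\beta]$, and collapse the resulting thermal factors by elementary algebra. The only cosmetic difference is that the paper transports $\tau_{iu}$ onto the current via stationarity and the analytic continuation of $\cancel{\omega}^{\beta,\mathcal{T}}$, evaluating $j^0$ at $(t-iu,\mathbf{x})$, while you shift the $x_1$-argument of $K$; these are equivalent by translation invariance. However, two of your displayed formulas are wrong as written and would derail the computation if taken literally.

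First, since $\tau_{iu}$ translates \emph{both} fields in $K$, \emph{both} propagators in the loop acquire $u$-dependence: the contribution with frequency signs $(\sigma_{\mathbf{p}},\sigma_{\mathbf{k}})$ carries $e^{(\sigma_{\mathbf{p}}\en{p}+\sigma_{\mathbf{k}}\en{k})u}$, not merely $e^{\mp\en{k}u}$ as in your schematic term. Your own quoted identity $\int_0^\beta e^{\mp\en{k}u}\,\di u = (1-e^{\mp\beta\en{k}})/(\pm\en{k})$ would then yield denominators $\en{k}$ alone and could never produce the $(\en{p}\pm\en{k})^{-1}$ of the lemma --- your next sentence contradicts your own schematic. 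The correct step, which is exactly the paper's intermediate expression, is $\int_0^\beta e^{(\sigma_{\mathbf{p}}\en{p}+\sigma_{\mathbf{k}}\en{k})u}\,\di u = \bigl(e^{(\sigma_{\mathbf{p}}\en{p}+\sigma_{\mathbf{k}}\en{k})\beta}-1\bigr)/\bigl(\sigma_{\mathbf{p}}\en{p}+\sigma_{\mathbf{k}}\en{k}\bigr)$, producing both the energy denominators and the numerators $e^{\pm(\en{p}\mp\en{k})\beta}-1$ that feed the Fermi-factor algebra. Second, your displayed identity $\frac{1-e^{-\beta\en{k}}}{1+e^{-\beta\en{k}}}\cdot\frac{1}{1+e^{\beta\en{p}}}=\frac{1}{1+e^{\beta\en{p}}}-\frac{1}{1+e^{\beta\en{k}}}$ is false (equality would force $e^{\beta\en{p}}=1$); the manipulations that actually close the argument are, as in the paper, $\frac{e^{\beta(\en{p}-\en{k})}-1}{(1+e^{\beta\en{p}})(1+e^{-\beta\en{k}})}=\frac{1}{1+e^{-\beta\en{p}}}-\frac{1}{1+e^{-\beta\en{k}}}$ together with $\frac{1}{1+e^{-\beta\en{p}}}=1-\frac{1}{1+e^{\beta\en{p}}}$. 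With these two corrections your outline reproduces the paper's proof; the remaining sign bookkeeping you flag is consistent, since the fermion-loop minus you place in front of the trace is what the paper absorbs into its overall $-\lambda e^2$ prefactor.
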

\begin{lem}\label{lem: DiracTerm2}
The terms in \eqref{eq: DiracCorrente} corresponding to the first perturbative order expansion of the Bogoliubov map take the explicit form:
\begin{align}
    &-i \cancel{\omega}^{\beta}(V :j^0(x):) +i \cancel{\omega}^{\beta}\big(T(V, :j^0(x):)\big) = \nonumber\\
    &-i\lambda e^2\int \di^4x_1 \chi(t_1) h(\mathbf{x}_1)\gamma^{0} \gamma^{\mu} A_{\mu}(\mathbf{x}_1) \bigg[ \cancel{S}_F(x_1 - x)\cancel{S}_F(x - x_1) + \cancel{S}^{+}_2(x_1 - x)\cancel{S}^{-}_2(x - x_1)+ \nonumber\\
    &\,\,\,\, i\theta(t - t_1)\big(-\cancel{S}(x_1 - x)W^+_{\beta}(x - x_1) +  W^+_{\beta}(x_1 - x)\cancel{S}(x - x_1)\big) \bigg], \label{eq: parti1}
\end{align}
where $\cancel{S}^{\pm}_2(z)$ is the two-point function of the ground state (see equation \eqref{eq: 2puntiGro_+} and \eqref{eq: 2puntiGro_-}), $\cancel{S}_F(z)$ the Feynman propagator of the Dirac equation on Minkowski spacetime, $\cancel{S}(z)$ the corresponding Pauli-Jordan function and $W^+_{\beta}(z)$ the smooth function defined by:
\begin{align*}
    W_{\beta}^{+}(z) &= \cancel{S}^{\beta,+}_2(z) - \cancel{S}^+_{2}(z) \nonumber\\
    &= - \frac{1}{(2\pi)^3} \int \frac{\di^3\mathbf{p}}{2 \en{p}} \frac{e^{i \mathbf{p} \mathbf{z}}}{1 + e^{\beta \en{p}}}\big((-\gamma^0 \en{p} - \gamma^i p_i +m)e^{-i\en{p} t_z} +  (\gamma^0 \en{p} - \gamma^i p_i +m)e^{i\en{p} t_z} \big).
\end{align*}
Moreover, for notational convenience, all spinorial indices were dropped.
\end{lem}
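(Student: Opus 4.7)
The plan is to compute each of the two expectation values on the left-hand side separately, exploiting the quasi-free (Gaussian) character of $\cancel{\omega}^\beta$ to reduce everything to Wick contractions, and then to combine them using the algebraic identities relating the Dirac Feynman propagator $\cancel{S}_F$, the Wightman two-point functions $\cancel{S}_2^\pm$ and the Pauli--Jordan commutator function $\cancel{S}$.

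Writing out $V=\lambda e\int \di^4 x_1\,\chi(t_1)h(\mathbf{x}_1)A_\mu(\mathbf{x}_1)\,{:}\overline{\psi}\gamma^\mu\psi{:}(x_1)$ and ${:}j^0(x){:}=e\,{:}\overline{\psi}\gamma^0\psi{:}(x)$, the overall coefficient $\lambda e^2$ and the Dirac-matrix prefactor $\gamma^0\gamma^\mu A_\mu$ on the right-hand side of \eqref{eq: parti1} emerge once the two spinor contractions are arranged cyclically inside the (implicit) trace over spinor indices. For the ordinary product, the fermionic Wick theorem applied to the Gaussian state $\cancel{\omega}^\beta$ kills the disconnected tadpole contributions, since $\cancel{\omega}^\beta({:}\overline{\psi}\gamma^\mu\psi{:})=0$ by the rotational and charge-conjugation symmetries of the thermal state at zero chemical potential. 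What survives is the single connected cross-contraction
\[
\cancel{\omega}^\beta\big(V\cdot {:}j^0(x){:}\big)=-\lambda e^2\!\int\!\di^4 x_1\,\chi h\,A_\mu(\mathbf{x}_1)\,\mathrm{tr}\!\big[\gamma^0\gamma^\mu\cancel{S}_2^{\beta,+}(x_1-x)\cancel{S}_2^{\beta,-}(x-x_1)\big],
\]
the minus being the fermion anticommutation sign introduced when bringing $\psi(x_1)$ next to $\overline{\psi}(x)$ in the Wick expansion.

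For the time-ordered product I would use the Epstein--Glaser definition of $T$ acting on two Wick-ordered bilinears: this produces three kinds of term, namely the fully Wick-ordered quartic ${:}\overline{\psi}\gamma^\mu\psi(x_1)\overline{\psi}\gamma^0\psi(x){:}$, the two single-contraction terms in which one pair of adjacent fields is replaced by $\cancel{S}_F(x_1-x)$ or $\cancel{S}_F(x-x_1)$ and a Wick-ordered bilinear at the mixed arguments survives, and the fully contracted c-number $\cancel{S}_F(x_1-x)\cancel{S}_F(x-x_1)$. Taking $\cancel{\omega}^\beta$ of each piece, the generating-functional representation of ground-state normal ordering implies that $\cancel{\omega}^\beta$ of any Wick-ordered polynomial is evaluated by pair contractions with the smooth thermal part $W_\beta^\pm=\cancel{S}_2^{\beta,\pm}-\cancel{S}_2^{\pm}$; hence the fully Wick-ordered quartic contributes a cross-contraction of the form $W_\beta^+(x_1-x)\,W_\beta^-(x-x_1)$, the single-contraction pieces contribute $\cancel{S}_F\cdot W_\beta^\pm$ and $W_\beta^\pm\cdot\cancel{S}_F$ terms, and the fully contracted term is just $\cancel{S}_F\cdot\cancel{S}_F$.

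The two results are then combined with the relative signs $-i$ and $+i$ dictated by the left-hand side. Substituting $\cancel{S}_2^{\beta,\pm}=\cancel{S}_2^{\pm}+W_\beta^{\pm}$ in the ordinary-product expression and using the identity $W_\beta^-(z)=-W_\beta^+(z)$ (which follows from the explicit expression for $W_\beta^+$ quoted in the statement and from the analogous computation $W_\beta^-=\cancel{S}_2^{\beta,-}-\cancel{S}_2^{-}$), the pure $W_\beta^+W_\beta^+$ contributions cancel against each other, while the mixed $\cancel{S}_2^{\pm}\cdot W_\beta^+$ terms recombine with the single-contraction $\cancel{S}_F\cdot W_\beta^+$ terms through the Dirac analogue of the Feynman-propagator decomposition
\[
\cancel{S}_F(z)-\cancel{S}_2^{+}(z)\propto\theta(-t_z)\cancel{S}(z),\qquad \cancel{S}_F(z)+\cancel{S}_2^{-}(z)\propto\theta(t_z)\cancel{S}(z),
\]
which converts each $\cancel{S}_F-\cancel{S}_2^\pm$ combination into a Heaviside factor times $\cancel{S}$. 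After these substitutions the mixed contributions collapse precisely into $i\theta(t-t_1)\big(-\cancel{S}(x_1-x)W_\beta^+(x-x_1)+W_\beta^+(x_1-x)\cancel{S}(x-x_1)\big)$, while the purely ground-state contributions assemble into $\cancel{S}_F(x_1-x)\cancel{S}_F(x-x_1)+\cancel{S}_2^{+}(x_1-x)\cancel{S}_2^{-}(x-x_1)$, reproducing exactly the right-hand side of \eqref{eq: parti1}.

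The principal obstacle will be the careful bookkeeping of fermion anticommutation signs and of the spinor matrix ordering, to make sure that the Dirac prefactor assembles as $\gamma^0\gamma^\mu A_\mu$ acting from the left on the propagator product inside the implicit trace (rather than in a cyclically-shifted trace), and that the two Heaviside channels arising from the two single-contraction terms of the time-ordered product collapse to the single $\theta(t-t_1)$ of the statement. This last collapse ultimately reflects the causal factorisation property $T(V,{:}j^0{:})=V\cdot {:}j^0{:}$ whenever $V\lesssim {:}j^0{:}$, which eliminates the $\theta(t_1-t)$ channel from the difference $T(V,{:}j^0{:})-V\cdot {:}j^0{:}$ that one is effectively computing.
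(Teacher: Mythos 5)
Your proposal is correct and follows essentially the same route as the paper's proof: both reduce the ordinary and time-ordered terms to the cross-contractions $\cancel{S}^{\beta,+}_2(x_1-x)\cancel{S}^{\beta,-}_2(x-x_1)$ and a product of (thermal) Feynman propagators, split the thermal objects into vacuum parts plus the smooth $W^+_\beta$ using $W^-_\beta=-W^+_\beta$ and $\cancel{S}^{\beta}_F-\cancel{S}_F=W^+_\beta$, cancel the terms quadratic in $W^+_\beta$, and convert the mixed $\cancel{S}_F\cdot W^+_\beta$ and $\cancel{S}^{\pm}_2\cdot W^+_\beta$ contributions into $\theta(t-t_1)\cancel{S}$ via the same Feynman--Wightman identities. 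The only cosmetic deviations --- expanding the Epstein--Glaser time-ordered product with vacuum Feynman propagators \emph{before} evaluating in $\cancel{\omega}^{\beta}$ rather than computing directly with $\cancel{S}^{\beta}_F$ and decomposing afterwards, and disposing of the tadpoles by the symmetry argument $\cancel{\omega}^{\beta}(:\!\overline{\psi}\gamma^{\mu}\psi\!:)=0$ instead of the paper's observation that the normal-ordering contributions cancel between the $-i$ and $+i$ terms --- lead to identical algebra and do not change the argument.
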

We now combine these results to get the first perturbative order of the expectation value of the conserved current. Again we summarise it in a proposition whose proof will be given in the appendix. As in the previous section, the time translation dependent contributions arising from the expansion of the Bogoliubov map present in Lemma \ref{lem: DiracTerm2}, corresponding to the combination of only positive or negative energy modes in the perturbative expansions, are cancelled by the contributions in Lemma \ref{lem: DiracTerm1}.
\begin{prop}\label{prop: DiracTotalTerm}
The first order perturbative expansion of the expectation value of the conserved Dirac current, computed adding equation \eqref{eq: termine1} together with \eqref{eq: parti1}, is:
\begin{align*}
    &\cancel{\omega}^{\beta,V}\big(R_V(j^0(x))\big) =\\
    &\frac{4\lambda e^2}{(2 \pi)^6}\int \di^3 \mathbf{x}_1 h(\mathbf{x}_1) A^{0}(\mathbf{x}_1) \int \frac{\di^3 \mathbf{p} \di^3 \mathbf{k}}{\en{p}^2 - \en{k}^2} e^{-i(\mathbf{p} - \mathbf{k})(\mathbf{x} - \mathbf{x}_1)} \bigg[  \frac{\en{p}^2 + m^2 + k_i p^i}{\en{p} (1 + e^{\beta \en{p}})} - \frac{\en{k}^2 + m^2 + k_i p^i}{\en{k} (1 + e^{\beta \en{k}})}\bigg] + O(\lambda^2)
\end{align*}
where $\omega_{\textbf{k},M} = \sqrt{|\mathbf{k}|^2 + M^2}$. This expression is correct up to renormalization freedom terms that are time translation invariant.
\end{prop}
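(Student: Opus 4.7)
The plan is to combine the expressions \eqref{eq: termine1} and \eqref{eq: parti1} in a form that makes the cancellation of $\chi$-dependent, time-non-invariant pieces manifest, leaving only a time-translation invariant remainder. First, I would unfold the Bogoliubov-map contribution \eqref{eq: parti1} by inserting the momentum-space mode expansions \eqref{eq: 2puntiGro_+}--\eqref{eq: 2puntiGro_-} for $\cancel{S}_2^{\pm}$ and for $W_\beta^+$, and expressing $\cancel{S}_F(z) = \theta(z^0)\cancel{S}_2^+(z) - \theta(-z^0)\cancel{S}_2^-(z)$ together with $i\cancel{S}(z) = \cancel{S}_2^+(z) + \cancel{S}_2^-(z)$. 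This rewrites every term in \eqref{eq: parti1} as a double $\di^3\mathbf{p}\,\di^3\mathbf{k}$ momentum integral against one of four phase factors $e^{\pm i(\en{p}\pm\en{k})(t-t_1)}$, with spinor polynomial prefactors matching exactly those appearing in \eqref{eq: termine1}.

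Next, I would perform the $t_1$-integration of the Bogoliubov contribution. For each of the four phase types, integration by parts in $t_1$ against $\chi(t_1)$ splits the $t_1$-integral into a boundary-type piece, yielding a factor $1/(\en{p}\pm\en{k})$ that is manifestly $t$-independent, plus a remainder proportional to $\dot\chi(t_1)$ with the same phase. The boundary pieces are precisely the time-invariant distributions one expects from the Feynman-like closures of the $t_1$-contour (for the "diagonal" $\en{p}-\en{k}$ phases they require a principal value prescription, for the "off-diagonal" $\en{p}+\en{k}$ phases they are regular).

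The key step is then to show that the $\dot\chi(t_1)$-weighted remainders obtained from \eqref{eq: parti1} are exactly opposite to the contribution \eqref{eq: termine1}. This is where the thermal combinations match: the Bogoliubov-side coefficients naturally reorganize, using the identities $\theta(t-t_1)+\theta(t_1-t) = 1$ and the fact that $W_\beta^+$ contains the combinations $1/(1+e^{\beta\en{p}})$ while $\cancel{S}_F$ introduces $\theta$-functions that, once expanded, generate both signs of the phase, into precisely the coefficient patterns $(1+e^{\beta\en{k}})^{-1}-(1+e^{\beta\en{p}})^{-1}$ for the $\en{p}-\en{k}$ phases and $(1+e^{\beta\en{k}})^{-1}+(1+e^{\beta\en{p}})^{-1}-1$ for the $\en{p}+\en{k}$ phases appearing in Lemma~\ref{lem: DiracTerm1}. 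Matching these coefficients term by term produces the desired cancellation.

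The surviving contribution consists only of the boundary pieces with $\en{p}-\en{k}$ denominators, multiplied by thermal factors. Using the symmetry $\mathbf{p}\leftrightarrow\mathbf{k}$ and the spinor trace identity $\mathrm{tr}[\gamma^0\gamma^0]$ together with the fact that only the components $(-\gamma^0\en{p}-\gamma^ip_i+m)(-\gamma^0\en{k}-\gamma^ik_i+m)$ and its sign-flipped analogue survive the projection onto $\gamma^0\gamma^0$ (for $A_0$) or $\gamma^0\gamma^i$ (for $A_i$), one simplifies the Dirac algebra to isolate the scalar prefactor $\en{p}^2 + m^2 + k_ip^i$; dividing by $\en{p}^2-\en{k}^2$ gives the claimed form. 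The renormalization-freedom remark covers the principal value ambiguity, which is time-translation invariant and hence permissible. The main obstacle is the combinatorial bookkeeping of the sixteen-or-so terms produced by unfolding \eqref{eq: parti1} and confirming that the thermal coefficients match exactly, especially around the distributional singularity at $\en{p}=\en{k}$ where the principal value must be handled with care.
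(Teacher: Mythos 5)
Your proposal follows essentially the same skeleton as the paper's proof --- mode expansion of \eqref{eq: parti1}, integration by parts in $t_1$ against $\chi$, cancellation of the $\dot\chi$-weighted bulk remainders against \eqref{eq: termine1} (and your identification of the coefficient patterns is right: the $-1$'s in the $\en{p}+\en{k}$ channels come from the vacuum-like part, the pure Fermi factors from the $W^+_\beta$ cross terms), and evaluation of the surviving boundary terms by $\gamma$-trace identities. However, there is a genuine gap: the piece $\cancel{S}_F(x_1-x)\cancel{S}_F(x-x_1)+\cancel{S}^{+}_2(x_1-x)\cancel{S}^{-}_2(x-x_1)$ is a closed fermion loop, i.e.\ an ill-defined product of distributions at coinciding points, and its temperature-independent part has no Fermi factor to damp the momentum integrals, so your ``unfold into $\di^3\mathbf{p}\,\di^3\mathbf{k}$ integrals'' step is divergent for exactly this contribution. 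The paper must renormalise it \emph{before} integrating by parts, via the K\"all\'en--Lehmann representation \eqref{eq: A_Rin} with renormalization constants $c_1,c_2$; only then does the boundary evaluation at $t_1=t$ --- precisely the coinciding-point configuration --- make sense, and one then has to prove separately that this renormalised vacuum boundary term $Z^{\mathfrak{A}}_{BO}$ vanishes, which is a nontrivial spinor-trace computation on the spectral density $\rho_1^D$ (consistent with the Ward identities). The bulk piece, by contrast, may be treated naively because on $\supp{\dot\chi}$ one has $t_1<t$ strictly, whence $\cancel{S}_F(x_1-x)=-\cancel{S}^{-}_2(x_1-x)$ and $\cancel{S}_F(x-x_1)=\cancel{S}^{+}_2(x-x_1)$; your proposal makes no such distinction and so cannot justify either the convergence of the vacuum contribution or the vanishing of its boundary term.

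Relatedly, you misidentify the residual ambiguity. It is not a principal-value prescription at $\en{p}=\en{k}$: in the surviving boundary term the thermal combinations vanish on that set (the bracket in the final formula is zero at $\en{p}=\en{k}$), so that singularity is removable and no PV enters. The genuine renormalization freedom consists of the local counterterms from the Epstein--Glaser extension of the propagator product across the diagonal --- terms proportional to $\delta$ and its derivatives, which in the final answer read as $hA^0$ and its spatial derivatives evaluated at $\mathbf{x}$ --- and these are time-translation invariant, which is what the closing sentence of the statement refers to. A minor bookkeeping slip: it is not only the $\en{p}-\en{k}$ boundary pieces that survive; $Z^{\mathfrak{B}}_{BO}$ also contains $1/(\en{p}+\en{k})$ terms, and the trace identities combine both into the overall $1/(\en{p}^2-\en{k}^2)$ of the claimed expression.
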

It is manifest that the expectation value is independent of the switch-on process and invariant under any time translation. Namely, secular effects are absent as expected by Theorem \ref{thm: NoSecEq} and Theorem \ref{th:truncated}.

\subsubsection{Secular growths for non-compactly supported interactions}
If the limit $h \to 1$ is taken before thermalisation, the system might not satisfy the property of return to equilibrium, as mentioned in Section \ref{sec: Ret Eq}. Therefore, the expectation value at first order in perturbation theory of the zero component of the current $j^0(x)$ corresponds to:
\begin{equation*}
    \langle j^0\rangle_{{\beta},1}\coloneqq -i \cancel{\omega}^{\beta}(V :j^0(x):) +i \cancel{\omega}^{\beta}\big(T(V, :j^0(x):)\big),
\end{equation*}
that is just the expansion of the Bogoliubov map. We now show that this term manifests secular growths following the same procedure discussed at the end of the subsection \ref{4.2}. In order to do so, we assume that the electromagnetic potential has the following form\footnote{We are aware that such a potential might have few physical interests for arbitrary $n > 1$. Our choice aims just at showing the existence of such growths already at the first perturbative order. More physically interesting situations, like the case $n=1$ of a constant electric field, also present secular growths just at higher order in perturbation theory.}:
\begin{equation*}
    A_{\mu}(\mathbf{s}) = (-E (s_1)^n, 0, 0, 0),
\end{equation*}
where $s_1$ denotes the first component of $\mathbf{s} \in \mathbb{R}^3$, $n \in \mathbb{N}$ and $E \in \mathbb{R}$ is a constant.\\
Let's start analysing the term \eqref{eq: termine1} since, as we discussed above, is the one that has an explicit dependence on the time translation. Notice that:
\begin{equation*}
    (s_1)^n e^{-i(\mathbf{p} - \mathbf{k}) \mathbf{s}} = \bigg(\frac{1}{i} \partial_{p_1} \bigg)^n e^{-i(\mathbf{p} - \mathbf{k}) \mathbf{s}},
\end{equation*}
to get:
\begin{align*}
\eqref{eq: termine1}
=
    \frac{(-1)^ne^2E}{(2\pi)^3(i)^{n-1}} \int_{-\infty}^t &\di t_s \chi(t_s) \int \di^3 \mathbf{p}\, \partial_{p_1}^n \bigg[ \frac{1}{\en{p} \en{k}} e^{i(\mathbf{p} - \mathbf{k}) \mathbf{x}} 
    \\
 &
    \bigg[ e^{i(\en{p} - \en{k})(t-t_s)}(\en{p} \en{k} - p^i k_i - m^2) \bigg( \frac{1}{1 + e^{\beta \en{p}}} - \frac{1}{1 + e^{\beta \en{k}}} \bigg)\\ 
    &+ e^{i(\en{p} + \en{k})(t-t_s)}(-\en{p} \en{k} - p^i k_i - m^2) \bigg( -\frac{1}{1 + e^{\beta \en{p}}} - \frac{1}{1 + e^{\beta \en{k}}} \bigg)\\
    &+ e^{-i(\en{p} + \en{k})(t-t_s)}(-\en{p} \en{k} - p^i k_i - m^2) \bigg( \frac{1}{1 + e^{\beta \en{p}}} + \frac{1}{1 + e^{\beta \en{k}}} \bigg)\\
    &+e^{-i(\en{p} - \en{k})(t-t_s)}(\en{p} \en{k} - p^i k_i - m^2) \bigg( -\frac{1}{1 + e^{\beta \en{p}}} + \frac{1}{1 + e^{\beta \en{k}}} \bigg)\bigg] \bigg]_{p=k}.
\end{align*}
In this expression the most growing contribution in time, denoted by $J_{t,1}$, is:
\begin{align*}
    J_{t,1} = &-\frac{i(-1)^n2e^2E}{(2\pi)^3} \int_{-\infty}^t \di t_s \chi(t_s) (t - t_s)^n \int \di^3 \mathbf{p}  \bigg(\frac{p_1}{\en{p}}\bigg)^n \bigg( \frac{e^{2i\en{p} (t - t_s)} + (-1)^n e^{-2i \en{p} (t - t_s)}}{1 + e^{\beta \en{p}}} \bigg)\\
    &\sim \frac{(-1)^neE}{(2\pi)^3} \int \di^3 \mathbf{p}  \bigg(\frac{p_1}{\en{p}}\bigg)^n \frac{1}{1 + e^{\beta \en{p}}} \int_{-\infty}^{+\infty} \di t_s \Dot{\chi}(t_s) \frac{(t-t_s)^n}{\en{p}} \big( e^{2i\en{p} (t - t_s)} + (-1)^n e^{-2i \en{p} (t - t_s)}\big)\\
    &= \frac{(-1)^neE}{(2\pi)^3} \int \di^3 \mathbf{p}  \frac{(p_1)^n}{\en{p}^{n+1}} \frac{1}{1 + e^{\beta \en{p}}} \bigg(\frac{1}{2i}\bigg)^n \partial_{\en{p}}^n \bigg( \Hat{\Dot{\chi}}(-2 \en{p}) e^{2i\en{p} t} + \Hat{\Dot{\chi}}(2 \en{p})e^{-2i \en{p} t}\bigg),
\end{align*}
where, at the second step, we have integrated by parts in $t_s$ and again kept only the most divergent contribution in $t$ arising from it. Performing a change of variable, similar to the one outlined in the proof of Proposition \ref{app: 1}, and noticing that by symmetry arguments only the case $n = 2l$ for $l\in \mathbb{N}$ does not vanish we get:
\begin{align*}
  J_{t,1} = \frac{(-1)^le^2E}{(2\pi)^3 2^{2l}} \frac{t^{2l}}{t^{l + 3/2}}\int_0^{+\infty} &\di w   \frac{w^{l + 1/2}}{(\frac{w}{t} + m )^{2l - 1}} \frac{( \frac{w}{t} + 2m )^{l + \frac{1}{2}}}{1 + e^{\beta (\frac{w}{t} + m)}} 
  \\
  &\times\partial_{w}^{2l} 
  \bigg[ \Hat{\Dot{\chi}}\bigg(-2 \bigg(\frac{w}{t} + m\bigg)\bigg) e^{2i(\frac{w}{t} + m) t} + \Hat{\Dot{\chi}}\bigg(2 \bigg(\frac{w}{t} + m\bigg)\bigg)e^{-2i (\frac{w}{t} + m) t}\bigg].
\end{align*}
By the arguments in the proof of Proposition \ref{app: 1} the above integral in $w$ is proven to be convergent uniformly in $t$. Acting on a general family of time translated test functions $f_{t'}$ does not tame the shown divergence following the same arguments of Proposition \ref{prop_expl_pot}. Therefore, for $l \geq 2$ ($n = 4$), secular growths arise at first order in perturbation theory.\\
The contributions associated with the renormalised Feynman propagators are computed in the same way. We will not do it here as, for the purpose of showing the presence of secular growths, we just need to notice that the term involving the renormalization of the Feynman propagators is independent of the temperature, so it cannot cancel the just computed growths.

\subsection{Fields with more than quadratic interaction}
\label{sec: Loop Example}
We consider an uncharged scalar Klein-Gordon field $\varphi$ of mass $m$ with an interaction Lagrangian
\[
V= \lambda \int \di t \di \mathbf{x}  \chi(t)h(\mathbf{x}) \varphi^n(t,\mathbf{x}),
\]
which propagates on a Minkowski spacetime $\mathbb{M}$. We are interested in analyzing the two-point function
\[
\omega^I_2(x_1,x_2) = \omega(R_V(\varphi(x_1)) R_V(\varphi(x_2)))
\]
for fields in an Hadamard state $\omega$ which for simplicity is assumed to be quasifree, translation and rotation invariant and its two-point function satisfies the decay estimate for large time given in Corollary \ref{cor: NoSecInvariant}.
The two-point function of such a state is
\begin{equation}\label{eq:2ptexample}
\omega_2(x_1,x_2) = \frac{1}{(2\pi)^3} \int \dd^{3}\mathbf{p} \;  e^{i \mathbf{p}(\mathbf{x}_1-\mathbf{x}_2)} 
\left(
e^{i \omega_{\mathbf{p}}(t_{{x}_1}-t_{{x}_2})}\frac{1+F(\omega_{\mathbf{p}})}{2 \omega_{\mathbf{p}}}
+
e^{-i \omega_{\mathbf{p}}(t_{{x}_1}-t_{{x}_2})}\frac{F(\omega_{\mathbf{p}})}{2 \omega_{\mathbf{p}}}\right)
\end{equation}
where $F$ is a smooth function on $[m,\infty)$ which is bounded together with its derivatives near $m$ and which decays rapidly for large value of its argument. For equilibrium state at inverse temperature $\beta$ with respect to the time evolution olong the Minkowski time $F(w) = (e^{\beta w}-1)^{-1}$.
\\
The analysis is split in two parts. In the first, we take the limit $h\to 1$ and consider, at second perturbative order, the limit $t_{x_1}+t_{x_2} \to + \infty$ of $\omega^I_2(x_1,x_2)$. 
The result is a linear growth in $t_{x_1}+t_{x_2}$ when the support of the Fourier transform of the distribution expressing the internal loop intersect the positive and/or negative mass-shell, e.g. for a $\lambda \varphi^4$ interacting theory in a generic state but not for $\lambda \varphi^3$. 
In the second part, we verify the statement of Theorem \ref{th:truncated} studying the same above configuration in the limit $t_{x_1}+t_{x_2} \to + \infty$ but for fixed  $h \in \mathcal{C}^{\infty}_0(\Sigma)$. The result is a bounded function of  $t_{x_1}+t_{x_2}$.

We preliminary observe that the second order contribution to $\omega^{I}_2(x_1,x_2)$ can be written as, 
\begin{align}
\omega_2^{I,(2)}(x_1,x_2) 
=& \lambda^2\int \di y_1 \di y_2  \Delta_R(x_1,y_1) \chi(t_{y_1}) h({y}_1) S(y_1-y_2) \chi(t_{y_2}) h({y}_2)  \Delta_A(y_2,x_2) \nonumber
\\
&+ \lambda^2 \int \di y_1 \di y_2  \Delta_R(x_1,y_1) \chi(t_{y_1}) h({y}_1) S_A(y_1-y_2) \chi(t_{y_2}) h({y}_2)  \omega_2(y_2,x_2) \label{eq:omega2I}
\\
&+ \lambda^2 \int \di y_1 \di y_2  \omega_2(x_1,y_1) \chi(t_{y_1}) h({y}_1) S_A(y_1-y_2) \chi(t_{y_2}) h({y}_2)  \Delta_A(y_2,x_2) \nonumber
\end{align}
where the integrals are understood in the distributional sense. Furthermore $S$ and $S_A$ are the integral kernels of two distributions obtained as the sum of the various propagators which joins the internal vertices. The precise form of $S$ and $S_A$ depends on the order of the interaction Lagrangian, on the quantum state in which the background theory is assumed to be and furthermore, $S_A$ depends also on some renormalization constants. A careful analysis shows that at late time the dominant contribution to 
$\omega_2^{I,(2)}(x_1,x_2)$ in \eqref{eq:omega2I} is the first one. The other two terms are in fact bounded functions when both points $x_1$ and $x_2$ are translated in the future time direction.
We thus discuss with some care the diagram with external lines consisting of advanced and retarded propagators:
 \begin{equation}
     W(x_1, x_2) = \lambda^2 \int \di y_1 \di y_2  \Delta_R(x_1,y_1) \chi(t_{y_1}) h({y}_1) S(y_1-y_2) \chi(t_{y_2}) h({y}_2)  \Delta_A(y_2,x_2)
 \end{equation}
whose general diagrammatic representation is depicted here:
\begin{figure}[H]
\centering
\begin{tikzpicture}
\begin{scope}[thick, decoration={
    markings,
    }
    ]
    \node at (1.25,1) {S};
    \node at (0,0.3) {$\Delta_R$};
   \node at (2.5,0.3) {$\Delta_A$};
   \node at (0.3,-0.3) {$\chi h$};
   \node at (2.2,-0.3) {$\chi h$};
    \draw[postaction={decorate}] (-0.5,0)--(0.5,0);
  \filldraw[black] (-0.5,0) circle (0.5pt) node[anchor=north] {$x$};
  \filldraw[black] (0.5,0) circle (0.5pt) ;
 \draw[pattern=horizontal lines] (1.25,0) circle (0.75);
  \filldraw[black] (2,0) circle (0.5pt) ;
  \draw[postaction={decorate}] (2,0)--(3,0);
  \filldraw[black] (3,0) circle (0.5pt) node[anchor=north] {$y$};
  \end{scope}
  \end{tikzpicture}
  \caption{Generic diagram}
  \end{figure}
The first remark concerns the explicit form of $S(y_1-y_2)$. In a general $\varphi^n$-self interacting real scalar field theory it has the explicit form: $S(y_1-y_2) = (\omega_2(y_1 - y_2))^{n-1}$, where $\omega_2$ denotes the two-point function of the Hadamard state for the free theory that we are perturbatively expanding around given in \eqref{eq:2ptexample}.
Writing it in Fourier space we have:
\begin{equation*}
    (\omega_2(x))^{n-1} = \frac{1}{(2 \pi)^{3(n-1)}}\int \di^4 p \int \di^4 q_1 \cdots \int \di^4 q_{n-1} \hat{\omega}_2(q_1) \cdots \hat{\omega}_2(q_{n-1}) \delta^{(4)}(q_1 + \ldots + q_{n-1} - p) e^{ipx}.
\end{equation*}
This shows that, if $\hat{\omega}_2(q_i)$ 
is supported both on the positive and negative mass shells, namely when $F$ in \eqref{eq:2ptexample} is not vanishing,
the support of $S$ contains four vectors $p$ obtained as  all possible combination of the $n-1$ four-vectors $q_i$ supported on either the positive or negative mass-shells. In particular, in the $n=4$ case this means that the support of $S(y_1-y_2)$ 
in 
momentum space is above the positive mass-shell of mass $m$ and below the negative mass shell of mass $m$. 
Instead, for $n=3$ the loop function $S(y_1-y_2)$ is supported at vanishing four momentum, above the positive mass-shell of mass $2m$ and below the negative mass-shell of mass $2m$.
We move on discussing the contributions of the considered diagram. Let us recall here the explicit form of the retarded propagator:
\[
\Delta_{R}(x,y) = \frac{i}{(2\pi)^3} \theta(t_x-t_y) \int \frac{\di^3 \mathbf{p}}{2\omega_{\mathbf{p}}}
\left(
e^{i \omega_{\mathbf{p}}(t_x-t_y)} - e^{- i \omega_{\mathbf{p}}(t_x-t_y)} \right) e^{-i \mathbf{p}(\mathbf{x}-\mathbf{y})}
\]
and $\Delta_{A}(x,y)=\Delta_{R}(y,x)$. Therefore, in $W(x_1,x_2)$ there are four contributions: one with positive frequency of $\Delta_R$ combined with positive frequency of $\Delta_A$, the second obtained combining negative frequencies of $\Delta_R$ and $\Delta_A$ and other two combining frequencies of different signs.\\
We discuss with some care the contribution obtained combining positive frequency of $\Delta_R$ with negative frequency of $\Delta_A$. The other contribution with frequencies of opposite signs can be treated similarly. Finally, when frequencies have the same signs, the behavior at large time can be studied using the same methods outlined in the previous examples in this Section and therefore here we do not report it in full details.\\
We instead consider:
\begin{align*}
    A_h(x_1,x_2) = \lambda^2 &\int_{-\infty}^{t_{x_1}} \di t_{y_1} \chi(t_{y_1}) \int_{-\infty}^{t_{x_2}} \di t_{y_2}\chi(t_{y_2})  \int \frac{\di^3 \mathbf{q}_1 \di^3 \mathbf{q}_2 \di^4 p}{(2\pi)^{4}4 \omega_{\mathbf{q}_1}\omega_{\mathbf{q}_2}}e^{i\mathbf{q}_1 \mathbf{x}_1 + i \mathbf{q}_2 \mathbf{x}_2} e^{-ip^0(t_{y_1} - t_{y_2})} \\
    &\times \hat{h}(\mathbf{q}_1 - \mathbf{p})\hat{h}(\mathbf{q}_2 + \mathbf{p}) \hat{S}(p^0, \mathbf{p}) e^{i\omega_{\mathbf{q}_1}(t_{x_1} - t_{y_1}) - i\omega_{\mathbf{q}_2}(t_{x_2} - t_{y_2})}
\end{align*}
For $t_{x_1}$ and $t_{x_2}$ which are large and positive, we shall for simplicity consider the limit where the $\epsilon$ in the definition of $\chi$ tends to $0$. In this way $\chi(t)\to\theta(t)$:
\begin{align}
    A_h(x_1,x_2) = \lambda^2 &\int_{0}^{t_{x_1}} \di t_{y_1} \int_{0}^{t_{x_2}} \di t_{y_2}  \int \frac{\di^3 \mathbf{q}_1 \di^3 \mathbf{q}_2 \di^4 p}{(2\pi)^{4}4 \omega_{\mathbf{q}_1}\omega_{\mathbf{q}_2}}e^{i\mathbf{q}_1 \mathbf{x}_1 + i \mathbf{q}_2 \mathbf{x}_2} e^{-ip^0(t_{y_1} - t_{y_2})} \nonumber\\
    &\times \hat{h}(\mathbf{q}_1 - \mathbf{p})\hat{h}(\mathbf{q}_2 + \mathbf{p}) \hat{S}(p^0, \mathbf{p}) e^{i\omega_{\mathbf{q}_1}(t_{x_1} - t_{y_1}) - i\omega_{\mathbf{q}_2}(t_{x_2} - t_{y_2})} \label{eq: Diagramma Loop}.
\end{align}
We now specify to the two cases, the first in which the limit $h \to 1$ is taken and the second in which $h$ is kept fixed.

\subsubsection{Secular growths for loop diagrams in the adiabatic limit}
\label{sec: Loop Example sec}
In the limit $h \to 1$ the integrals over $\mathbf{q}_1$ and $\mathbf{q}_2$ can be taken. Establishing, essentially, energy and momentum conservation in the interaction. The result becomes:
\begin{align*}
    A_1(x_1,x_2) = \lambda^2 &\int_{0}^{t_{x_1}} \di t_{y_1} \int_{0}^{t_{x_2}} \di t_{y_2}  \int \frac{\di^4 p}{(2\pi)^{4}4 \omega_{\mathbf{p}}^2}e^{i\mathbf{p} (\mathbf{x}_1 - \mathbf{x}_2)} e^{-ip^0(t_{y_1} - t_{y_2})} \hat{S}(p^0, \mathbf{p}) e^{i\omega_{\mathbf{p}}(t_{x_1} - t_{y_1}) - i\omega_{\mathbf{p}}(t_{x_2} - t_{y_2})}.
\end{align*}
In order to compute this expression explicitly, we move to another system of coordinates $s = t_{y_1} + t_{y_2}$ and $r = t_{y_1} - t_{y_2}$. 
Performing the integration in $s$ and evaluating the result for $t_{x_1} = t_{x_2} = t$ we obtain
\begin{align}
{A}_1(t,\mathbf{x}_1;t,\mathbf{x}_2) 
&=  
\frac{\lambda^2}{(2\pi)^4} \int \dd^4 p 
 \frac{e^{i \mathbf{p}(\mathbf{x}_1-\mathbf{x}_2)}}
{4\omega_{\mathbf{p}}^2}
\left(
t\int\limits_{-t}^{t} \dd r 
e^{i r (p^0+\omega_{\mathbf{p}})}
\hat{S}(p^0,\mathbf{p})
-
\int\limits_{-t}^{t} \dd r 
e^{i r (p^0+\omega_{\mathbf{p}})}
|r|
\hat{S}(p^0,\mathbf{p})
\right).
\label{eq: Contributi secular}
\end{align}
Here,  we are interested in the leading order contribution for large $t$, hence, we 
consider first of all the term that grows linearly with $t$. We notice that in the limit $t \to \infty$:
\begin{equation*}
    \lim_{t \to \infty} \int_{-t}^{t} \di r   e^{i(p^0 + \omega_{\mathbf{p}})r} = \delta(p^0 + \omega_{\mathbf{p}}).
\end{equation*}
Therefore, depending on whether $(-\omega_{\mathbf{p}}, \mathbf{p})$ is in the support of $\hat{S}(p^0, \mathbf{p})$ there is a linear growth in $t$ corresponding to a secular effect. To this extent we need to make a couple of remarks. It is possible to check, with an analogous computation, that the same contribution arising from 
the negative frequency of $\Delta_R$ and positive frequency of $\Delta_A$ does not cancel this growth. This is because $\hat{S}(p^0, \mathbf{p})$ does not have any symmetry for $p_0 \to - p_0$, being the Fourier transform of a product of Hadamard states. In the case in which $(-\omega_{\mathbf{p}}, \mathbf{p}) \not\in \mathrm{supp}(\hat{S}(p^0, \mathbf{p}))$, 
also a more carefully analysis shows that 
$A_1$ vanishes. \\
We are left with the evaluation of the  second contribution in \eqref{eq: Contributi secular}. In particular, computing the limit in the sense of distribution theory:
\begin{equation*}
    \lim_{t \to \infty} \int_{-t}^{t} \di r   e^{i(p^0 + \omega_{\mathbf{p}})r} |r| = \partial_{p^0} \mathrm{PV}\left( \frac{2}{p^0 + \omega_{\mathbf{p}}}\right).
\end{equation*}
This is now explicitly independent from $t$. In particular, inserting this expression in the second addend of \eqref{eq: Contributi secular} leads to a well defined and finite expression as the expression we started with in \eqref{eq: Diagramma Loop} is finite. Therefore, in the $t \to \infty$ limit, the eventual secular growth
in \eqref{eq: Contributi secular} is not cancelled.\\
Therefore, in the case of a quasifree state $\omega$ with two-point function both with positive and negative frequency modes, for $n=3$ secular growths do not arise in view of the support properties of the corresponding $\hat{S}(p_0,\mathbf{p})$ while for $n=4$ they might arise due to the above discussed support properties of $\hat{S}$. 
The observation summarized in this subsection is in   accordance with the results of 
Akhmedov, and Astrakhantsev and Popov, see e.g. Section 2.1 in \cite{Akhmedov_2014}.

\subsubsection{Absence of secular growths for loop diagrams}\label{sec: Loop Example no sec}
We know from Theorem \ref{th:truncated} and from 
Corollary \ref{cor: NoSecInvariant}
that if the state of the background theory is invariant under time translation and if the two-point function decays sufficiently fast for large timelike separation, no secular growths are present outside the adiabatic limit.
This is for example the case when the state in which the theory is analyzed is the KMS state of the free theory and if the support of $h$ is kept compact in space.

In order to see this explicitly we  start again from Equation \eqref{eq: Diagramma Loop}. We study this expression in the limit $t_{x_1} + t_{x_2} \to \infty$ aiming to show that secular growths are absent.\\
We first notice that $\hat{S}(p^0,\mathbf{p})$ can be expressed in the form:
\begin{equation}\label{eq: formaS}
    \hat{S}(p^0,\mathbf{p}) = ((p^0)^2 + |\mathbf{p}|^2 + 1)^k\hat{\mathscr{W}}(p^0,\mathbf{p})
\end{equation}
where, for sufficiently large $k \in \mathbb{N}$, 
$\hat{\mathscr{W}}(p^0,\mathbf{p})$ is an absolutely integrable function and hence it is the integral kernel of a distribution of order $0$.
The most singular contribution in $\hat{S}$ is the one which does not depend on $F$ in \eqref{eq:2ptexample}.
That contribution, which is essentially the weight of the ordinary K\"allen-Lehmann decomposition of the interacting Feynman propagator at perturbative order two, for $n=3$ it is 
\[
\hat{S}(p^0,\mathbf{p})
= \sqrt{1-\frac{4m^2}{(p^0)^2- |\mathbf{p}|^2}}
\theta(\sqrt{(p^0)^2- |\mathbf{p}|^2}-2m)
\]
while for $n=4$ it is a function which grows at most as $(p^0)^2- |\mathbf{p}|^2$ for large momentum.

Let us now perform the explicit computation starting from \eqref{eq: Diagramma Loop}. First, as a distribution in $\mathbf{x}_1$ and $\mathbf{x}_2$, we smear it with test functions $f_1,f_2 \in \mathcal{D}(\mathbb{R}^3)$ and insert the above expression for $\hat{S}(p^0, \mathbf{p})$:
\begin{align}
    A=&\int \di^3 \mathbf{x}_1 \di^3 \mathbf{x}_2 A_{h}(x_1,x_2) f_1(\mathbf{x}_1) f_2(\mathbf{x}_2) \nonumber\\
    =& \lambda^2 \int_{0}^{t_{x_1}} \di t_{y_1} \int_{0}^{t_{x_2}} \di t_{y_2}  \int \frac{\di^3 \mathbf{q}_1 \di^3 \mathbf{q}_2 \di^4 p}{(2\pi)^{4}4 \omega_{\mathbf{q}_1}\omega_{\mathbf{q}_2}}\hat{f}_1(-\mathbf{q}_1)\hat{f}_2(-\mathbf{q}_2) e^{-ip^0(t_{y_1} - t_{y_2})} \label{eq: senzah}\\
    &\times \hat{h}(\mathbf{q}_1 - \mathbf{p})\hat{h}(\mathbf{q}_2 + \mathbf{p}) ((p^0)^2 - |\mathbf{p}|^2 + a_1)^k\hat{\mathscr{W}}(p^0,\mathbf{p}) e^{i\omega_{\mathbf{q}_1}(t_{x_1} - t_{y_1}) - i\omega_{\mathbf{q}_2}(t_{x_2} - t_{y_2})}. \nonumber
\end{align}
For simplicity, having in mind the $\lambda \varphi^4$ example, let us choose $k=1$. Nevertheless, the case for general $k$, follows straightforwardly. Moreover, we 
analyze here only the contribution $(p^0)^2 \hat{\mathscr{W}}(p^0,\mathbf{p})$ as the other are can be studied with the same method.\\
Notice, from Equation \eqref{eq: senzah}, that the contributions associated to the external legs, the integrals in $\mathbf{q}_1$ and $\mathbf{q}_2$, can be factorized and studied separately. In doing so, let us further include a factor of $p^0$ in each of the two contributions and focus on the $\mathbf{q}_1$ integral:
\begin{align*}
    g_{t_{x_1}}(p):=&\int_{0}^{t_{x_1}} \di t_{y_1} p^0 e^{-ip^0 t_{y_1}} \int \frac{\di^3 \mathbf{q}_1}{2\omega_{\mathbf{q}_1}}\hat{f}_1(-\mathbf{q}_1)  \hat{h}(\mathbf{q}_1 - \mathbf{p}) e^{i\omega_{\mathbf{q}_1}(t_{x_1} - t_{y_1})}\\
    =& i e^{-ip^0 t_{x_1}} \int \frac{\di^3 \mathbf{q}_1}{2\omega_{\mathbf{q}_1}}\hat{f}_1(-\mathbf{q}_1)  \hat{h}(\mathbf{q}_1 - \mathbf{p}) - i \int \frac{\di^3 \mathbf{q}_1}{2\omega_{\mathbf{q}_1}}\hat{f}_1(-\mathbf{q}_1)  \hat{h}(\mathbf{q}_1 - \mathbf{p}) e^{i\omega_{\mathbf{q}_1}t_{x_1}}\\
    &- \int_{0}^{t_{x_1}} \di t_{y_1} e^{-ip^0 t_{y_1}} \int \frac{\di^3 \mathbf{q}_1}{2}\hat{f}_1(-\mathbf{q}_1)  \hat{h}(\mathbf{q}_1 - \mathbf{p}) e^{i\omega_{\mathbf{q}_1}(t_{x_1} - t_{y_1})}.
\end{align*}
The first integral is a convolution between two Schwartz functions and so it gives a Schwartz function of $\mathbf{p}$ which is also bounded in $p_0$. The remaining terms, for large $t_{x_1}-t_{y_1}$, are estimated using stationary phase methods:
\begin{equation*}
    \int \di^3 \mathbf{q}_1 \omega_{\mathbf{p}}^{n-1}\hat{f}_1(-\mathbf{q}_1)  \hat{h}(\mathbf{q}_1 - \mathbf{p}) e^{i\omega_{\mathbf{q}_1}(t_{x_1} - t_{y_1})} \leq C (t_{x_1} - t_{y_1})^{-\frac{3 + 2n}{2}}\hat{h}(-\mathbf{p})
\end{equation*}
for $C \in \mathbb{R}^+$, while it is bounded for 
small $t_{x_1}-t_{y_1}$.
This estimate are sufficient to prove that 
$g_{t_{x_1}}(p)$ tends to $0$ in the uniform norm in $p$.
In the same way, the contribution of the $\mathbf{q}_2$-leg, denoted by $\tilde{g}_{t_{x_2}}$ is estimated analogously.
The corresponding contribution to $A$ in \eqref{eq: senzah} is 
\[
\langle \hat{\mathscr{W}}, g_{t_{x_1}}g_{t_{x_2}}\rangle,
\]
which stays bounded 
in the limit of large $t_{x_1}$ and large $t_{x_2}$ in view of the uniform estimates given above and the character of the distribution with the integral kernel $\hat{\mathscr{W}}$.
\\

${}$ \\ \\ ${}$ \\
{\bf  Acknowledgments}
S.G.~and N.P.~would like to thank the National Group of Mathematical Physics (GNFM-INdAM) for the support. The research performed by S.G.~and N.P.~was supported in part by the MIUR Excellence Department Project 2023-2027 awarded to the Dipartimento di Matematica of the University of Genova, CUP\textunderscore$\,$D33C23001110001. L.S.~would like to thank Daniela Cadamuro for the fruitful discussions and the Deutsche Forschungsgemeinschaft (DFG, German Research Foundation) $---$ project no.~396692871 within the Emmy Noether grant CA1850/1-1 for the financial support.\\
The authors are grateful to Tommaso Bruno, Edoardo D'Angelo, Simone Murro and Gabriel Schmid for the useful discussions. They also benefited from conversations with Markus Fr\"ob and Claudio Iuliano.

\appendix 
\section{Perturbative expansions}\label{app: espansioni}
As it is customary in perturbation theory, we provide here the perturbative expansions in $K$ of the objects used in this work. Namely, for any $A \in \mathcal{A}$:
\begin{align}
    U_V(t) &= 1 + \sum_{n=1}^{\infty} i^n \int_0^t \di t_1 \int_0^{t_1} \di t_2 \cdots \int_0^{t_{n-1}} \di t_n \tau_{t_n}(K) \cdots \tau_{t_1}(K) \label{eq: espansCociclo},\\
    \tau_t^V(A) &= \tau_t(A) + \sum_{n = 1}^{\infty} i^n \int_{t S_n} \di t_1 \cdots \di t_n [\tau_{t_1}(K), [\ldots [\tau_{t_n}(K), \tau_t(A)] \ldots]]\label{eq: espansDynam},\\
    \omega^{\beta, V}(A) &= \sum_{n = 0}^{\infty} (-1)^{n} \int_{\beta S_n} \di u_1 \cdots \di u_n \, \omega^{\beta,{\mathcal{T}}}\bigg( A \otimes \bigotimes_{k=1}^{n} \tau_{i u_k}(K) \bigg)\label{eq: espansKMS},
\end{align}
where as usual $S_n \coloneqq \{ (t_1, \ldots, t_n) \in \mathbb{R}^n : 0 \leq t_1 \leq \cdots \leq t_n \leq 1 \}$ is the unit simplex, $S_n' \coloneqq \{ (t_1, \ldots, t_n) \in \mathbb{R}^n : -1 \leq t_1 \leq \cdots \leq t_n \leq 1 \}$ and $\omega^{\beta,{\mathcal{T}}}$ is the truncated (connected) part of $\omega^{\beta}$.

\section{Technical Propositions and Lemmata}
\subsection{Time decay of Dirac vacuum and KMS two-point functions}
\begin{prop}\label{app: 1}
Consider the two-point functions for free fermionic massive KMS states:
\begin{align*}
    \cancel{S}^{\beta, +}_{2}(x-y) &= \frac{1}{(2 \pi)^3} \int \frac{\di^3\mathbf{p}}{2 \en{p}} \bigg( \frac{(-\gamma^0 \en{p} - \gamma^i p_i + m)e^{-i\en{p} (t_x - t_y)}}{(1 + e^{-\beta \en{p}})} - \frac{(\gamma^0 \en{p} - \gamma^i p_i + m)e^{i\en{p} (t_x - t_y)}}{(1 + e^{\beta \en{p}})}  \bigg) e^{i \mathbf{p} (\mathbf{x}- \mathbf{y})},\\
    \cancel{S}^{\beta, -}_{2}(x-y) &= \frac{1}{(2 \pi)^3} \int \frac{\di^3\mathbf{p}}{2 \en{p}} \bigg(\frac{(-\gamma^0 \en{p} - \gamma^i p_i + m)e^{-i\en{p} (t_x - t_y)}}{(1 + e^{\beta \en{p}})} - \frac{(\gamma^0 \en{p} - \gamma^i p_i + m)e^{i\en{p} (t_x - t_y)}}{(1 + e^{-\beta \en{p}})}  \bigg) e^{i \mathbf{p} (\mathbf{x}- \mathbf{y})}.
\end{align*}
If $y-x$ is a causal future pointing vector then:
\begin{equation*}
    |\partial^{(\alpha)}_x \partial^{(\beta)}_y \cancel{S}^{\beta, \pm}_{2}(x; t_y + t, \mathbf{y})| \leq \frac{C_{\alpha}}{t^{3/2}} \, , \quad t > 1,
\end{equation*}
with $\alpha, \beta$ a multiindices denoting the partial derivations in $x$ and $y$.
\end{prop}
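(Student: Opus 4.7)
The plan is to adapt the stationary phase argument carried out for the scalar KMS two-point function in Proposition A.1 of \cite{FaldinoEquilibriumpAQFT} to the Dirac setting, handling in addition the polynomial spinor prefactors. Each of $\cancel{S}^{\beta,\pm}_2$ is a sum of two pieces of the schematic form
\begin{equation*}
I(x,y) = \int \frac{\di^3\mathbf{p}}{2\en{p}}\, P(\en{p},\mathbf{p})\, F_\beta(\en{p})\, e^{\pm i\en{p}(t_x-t_y)}\, e^{i\mathbf{p}(\mathbf{x}-\mathbf{y})},
\end{equation*}
where $P$ is a polynomial in $\en{p}$ and the components of $\mathbf{p}$ (absorbing the spin matrix prefactors $\pm\gamma^0 \en{p} - \gamma^i p_i + m$), and $F_\beta$ equals either $(1+e^{\beta\en{p}})^{-1}$ or $(1+e^{-\beta\en{p}})^{-1}$, both smooth in $\en{p}\in[m,\infty)$, bounded and with $F_\beta$ rapidly decaying at infinity in the first case and bounded below by a positive constant in the second. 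Applying $\partial^{(\alpha)}_x\partial^{(\beta)}_y$ only pulls down further polynomial factors in $\en{p}$ and $\mathbf{p}$ of the same type, so it suffices to establish the $t^{-3/2}$ decay for the basic integrals $I$, uniformly in the polynomial $P$.

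Next I would translate $t_y \mapsto t_y+t$, setting $T := t + (t_y-t_x)$, and pass to spherical coordinates $\mathbf{p} = r\hat{\mathbf{n}}$, performing the angular integration by standard plane-wave expansion; this reduces the problem to a one-dimensional oscillatory integral in $r\in[0,\infty)$ with an integrand that is smooth in $r$, polynomial-bounded at infinity (before the thermal weight) and whose only non-analyticity comes from $\en{p}=\sqrt{r^2+m^2}$ at $r=0$. Changing variables to $w = \en{p}-m$, so that $r=\sqrt{w(w+2m)}$ and $r\,\di r = (w+m)\,\di w$, the integral assumes the canonical form
\begin{equation*}
\int_0^\infty w^{1/2}\, g(w)\, e^{\pm i(w+m)T}\, \di w,
\end{equation*}
with $g$ smooth on $[0,\infty)$ and Schwartz at infinity (thanks to $F_\beta$ together with the inverse power of $\en{p}$). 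An application of Erdélyi's lemma, or equivalently an integration by parts exploiting the $w^{1/2}$ endpoint behaviour at $w=0$ and the decay of $g$ at infinity, yields decay of order $|T|^{-3/2}$. Since $y-x$ is causal future-pointing we have $t_y-t_x\ge 0$, hence for $t>1$ we have $|T|\ge t$, and the claim $|\partial^{(\alpha)}_x\partial^{(\beta)}_y \cancel{S}^{\beta,\pm}_2(x; t_y+t,\mathbf{y})| \le C_\alpha\, t^{-3/2}$ follows.

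The main technical obstacle is twofold. First, one must check that the contributions arising from the various monomials in $P(\en{p},\mathbf{p})$, after the angular integration and the change of variables $w=\en{p}-m$, all exhibit the same $w^{1/2}$ endpoint behaviour (or better), so that no slower-decaying term is produced; this is a matter of bookkeeping since each extra power of $p_i$ or $\en{p}$ translates into a polynomial factor that is either regular or improves the endpoint behaviour at $w=0$, while each derivative only shifts this polynomial degree. Second, the constants must be controlled uniformly in the multi-indices $\alpha,\beta$, which is straightforward because the thermal factor $F_\beta$ tames all polynomial growth in $\en{p}$ at infinity, yielding a finite $C_\alpha$ that depends on $\alpha$, $\beta$ and on the inverse temperature, but not on $T$ nor on $\mathbf{x}-\mathbf{y}$ within the causal future cone.
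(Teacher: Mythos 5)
Your strategy works for the exponentially damped contributions, but it breaks down at the step where you claim that, after the angular integration and the change of variables $w=\en{p}-m$, the amplitude $g$ is ``smooth on $[0,\infty)$ and Schwartz at infinity (thanks to $F_\beta$ together with the inverse power of $\en{p}$)''. For the pieces weighted by $F_\beta(\en{p})=(1+e^{-\beta\en{p}})^{-1}$ this is false: that factor tends to $1$ at infinity, and the measure $r^2\,\di r$ combined with the spinor prefactor $\pm\gamma^0\en{p}-\gamma^i p_i+m$ (degree one in the momenta) and the angular factor $\sin(r|\mathbf{x}-\mathbf{y}|)/(r|\mathbf{x}-\mathbf{y}|)$ leaves an integrand that \emph{grows} polynomially in $r$ — and each derivative $\partial^{(\alpha)}_x\partial^{(\beta)}_y$ pulls down further powers, making it worse. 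The integral is then not absolutely convergent; it is a genuine distribution, singular on the light cone, and Erd\'elyi's lemma (or the endpoint integration by parts at $w=0$) does not apply to an untempered amplitude. Note that $(1+e^{-\beta\en{p}})^{-1}=1-(1+e^{\beta\en{p}})^{-1}$, so the problematic contribution is exactly the zero-temperature (vacuum) part of the KMS two-point function; your single oscillatory-integral argument silently assumes it away.

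The paper circumvents precisely this by a Hadamard splitting: it writes $\cancel{S}^{\beta,\pm}_2=\cancel{S}^{\pm}_2+W^{\pm}_\beta$, handles the vacuum part $\cancel{S}^{\pm}_2=\pm(i\cancel{\partial}_x+m)\omega_2^{\infty}$ in closed form through the modified Bessel function $K_1$ evaluated at the timelike separation $\sqrt{(t+t_y-t_x)^2-|\mathbf{x}-\mathbf{y}|^2}\geq t$, whose asymptotics $|K_n(y)|\leq c_n/\sqrt{|y|}$ give the $t^{-3/2}$ bound (with derivatives only improving the decay), and only then applies the rescaling $w=(\en{p}-m)t$ with two integrations by parts to the remainder $W^{\pm}_\beta$, where the Fermi factor $(1+e^{\beta\en{p}})^{-1}$ makes the amplitude rapidly decaying so that your kind of $w^{1/2}$-endpoint analysis is legitimate. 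To repair your proposal you would need to either adopt this splitting (replacing your Erd\'elyi step for the vacuum piece by the Bessel representation, or by a regularized/distributional stationary phase argument in the spirit of Lemma A.1 of \cite{Meda_2022}), or otherwise justify the oscillatory integral for the non-decaying weight; as written, the central convergence claim fails for half of the terms.
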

\begin{proof}
We know that both $\cancel{S}^{\beta, \pm}_{2}$ and $\cancel{S}^{\pm}_{2}$ are Hadamard, therefore their difference $\cancel{S}^{\beta, \pm}_{2} -\cancel{S}^{\pm}_{2} = W^{\pm}_{\beta}$ is a smooth function. The vector joining the points $y$ and $(t_y + t, \mathbf{y})$ is timelike future pointing, thus, by composition, the vector joining $x$ and $(t_y + t, \mathbf{y})$ is also timelike future pointing. Follows that the point $(x; t_y + t, \mathbf{y})$ is neither in the singular support of $\cancel{S}^{\beta, \pm}_{2}$ nor in that of $\cancel{S}^{\pm}_{2}$ and thus both the two-point functions are smooth in a neighborhood of the point $(x; t_y + t, \mathbf{y})$. We start noticing that:
\begin{align*}
    \cancel{S}^{\pm}_{2}(x; t_y+t, \mathbf{y}) &= \pm (i \cancel{\partial}_x + m) \omega_2^{\infty}(\pm(t_x - t_y - t, \mathbf{x} - \mathbf{y}))\\
    &= \pm (i \cancel{\partial}_x + m) \bigg( 4 \pi \frac{m}{i \sqrt{(t + t_y - t_x)^2 - |\mathbf{x} - \mathbf{y}|^2}} K_1(im \sqrt{(t + t_y - t_x)^2 - |\mathbf{x} - \mathbf{y}|^2})\bigg),
\end{align*}
where $\omega_{2}^{\infty}$ is the ground two-point function of the free real Klein-Gordon field
\begin{equation*}
    \omega_{2}^{\infty}(x-y) = \frac{1}{(2\pi)^3} \int \frac{\di^3 \mathbf{p}}{2 \omega_{\mathbf{p}}} e^{-i (\omega_{\mathbf{p}}(t_x - t_y) - \mathbf{p} (\mathbf{x} - \mathbf{y}))}
\end{equation*}
and $K_1$ the modified Bessel function of the second kind and index $1$. However, derivatives applied to the factor in front of the Bessel function just improve the decaying properties for large $t$. Furthermore, by properties of Bessel functions we also know that \cite{gradshteyn2007}:
\begin{equation*}
    \frac{d}{dx} K_n(x) = \frac{n}{x} K_{n}(x) - K_{n+1}(x)\, , \quad |K_n(y)| \leq \frac{c_n}{\sqrt{|y|}} \,\,\,\,\, y \gg n.
\end{equation*}
Follows that, for large $t$, the decaying properties of $\cancel{S}^{\pm}_{2}(x; t_y+t, \mathbf{y})$ are not worse than those of $\omega^{\infty}_2(x; t_y+t, \mathbf{y})$. Therefore:
\begin{equation*}
    |\cancel{S}^{\pm}_{2}(x; t_y+t, \mathbf{y})| \leq \frac{C}{t^{3/2}}
\end{equation*}
for $C \in \mathbb{R}^+$. Compute now:
\begin{align*}
    t^{3/2}(\cancel{S}^{\beta, +}_{2} - \cancel{S}^{+}_{2})(0; t, \mathbf{x}) &= t^{3/2} \bigg( -\frac{1}{(2\pi)^3} \int\frac{\di^3\mathbf{p}}{2 \en{p}} e^{-i \mathbf{p} \mathbf{x}} \frac{(- \gamma^0 \en{p} - \gamma^i p_i + m)e^{i \en{p} t} + (\gamma^0 \en{p} - \gamma^i p_i + m)e^{-i \en{p} t}}{1 + e^{\beta \en{p}}} \bigg)\\
    &= t^{3/2} \bigg( -\frac{1}{(2\pi)^3} \int\frac{\sin(\theta) \di \theta \, d \varphi \, d|\mathbf{p}| \, |\mathbf{p}|^2}{2 \en{p}} e^{-i \mathbf{p} \mathbf{x}} \frac{(- \gamma^0 \en{p} + m)e^{i \en{p} t} + (\gamma^0 \en{p} + m)e^{-i \en{p} t}}{1 + e^{\beta \en{p}}} \bigg)\\
    &= c t^{3/2} \bigg( -\frac{1}{(2\pi)^3} \int_0^{\infty}\frac{\di|\mathbf{p}| \, |\mathbf{p}|^2}{2 \en{p}} \frac{\sin(|\mathbf{p}| \, |\mathbf{x}|)}{|\mathbf{p}| \, |\mathbf{x}|} \frac{(- \gamma^0 \en{p} + m)e^{i \en{p} t} + (\gamma^0 \en{p} + m)e^{-i \en{p} t}}{1 + e^{\beta \en{p}}} \bigg)\\
    &= c t^{3/2} \bigg( -\frac{1}{(2\pi)^3} \int_{m}^{\infty}\di E \, \sqrt{E^2 - m^2} \frac{\sin(\sqrt{E^2 - m^2} \, |\mathbf{x}|)}{\sqrt{E^2 - m^2} \, |\mathbf{x}|} \frac{(- \gamma^0 E + m)e^{i E t} + (\gamma^0 E + m)e^{-i E t}}{1 + e^{\beta E}} \bigg)\\
    &= c \int_{0}^{\infty}\di w \, \sqrt{w} \bigg(\sqrt{\frac{w}{t} + 2m} \bigg)\frac{\sin\bigg(\sqrt{\frac{w}{t}} \sqrt{\frac{w}{t} + 2m}  \, |\mathbf{x}|\bigg)}{\sqrt{\frac{w}{t}} \sqrt{\frac{w}{t} + 2m}  \, |\mathbf{x}|} \\
    &\,\,\,\,\,\,\,\,\,\,\times \frac{(- \gamma^0 (w/t + m) + m)e^{i (w/t + m) t} + (\gamma^0 (w/t + m) + m)e^{-i (w/t + m) t}}{1 + e^{\beta (w/t + m)}}.\\
\end{align*}
In the second step, we have used the vanishing of the integration on the angular variables of the momenta $p_i$. We have also included all numeric factors outside the integration in the constant $c \in \mathbb{R}$. Let us introduce the functions:
\begin{equation*}
    b_{\pm}\bigg( \frac{w}{t} \bigg) = e^{\pm i m t} \bigg(\sqrt{\frac{w}{t} + 2m}\bigg) \mathrm{sinc}\bigg( \sqrt{\frac{w}{t}} \sqrt{\frac{w}{t} + 2m} |\mathbf{x}|\bigg) \frac{\pm\gamma^0(w/t + m) + m}{1 + e^{\beta(w/t + m)}}
\end{equation*}
bounded and of fast decay due to the Fermi factor. Therefore, the above can be rewritten as:
\begin{align*}
    t^{3/2}(\cancel{S}^{\beta, +}_{2} - \cancel{S}^{+}_{2})(0; t, \mathbf{x}) &= c \int_{0}^{\infty}\di w \, \sqrt{w} \bigg(b_+\bigg( \frac{w}{t} \bigg) e^{i w} +  b_-\bigg( \frac{w}{t} \bigg) e^{-i w} \bigg)\\
    &= \lim_{\epsilon \to 0^+} \int_{0}^{\infty}\di w \, \sqrt{w} \bigg(b_+\bigg( \frac{w}{t} \bigg) e^{i w - \epsilon w} - b_+(0) e^{i w - \epsilon w} +  b_-\bigg( \frac{w}{t} \bigg) e^{-i w - \epsilon w} - b_-(0) e^{-i w - \epsilon w}\bigg)\\
    &+ \lim_{\epsilon \to 0^+} \int_{0}^{\infty}\di w \, \sqrt{w} \bigg(b_+( 0 ) e^{i w - \epsilon w} +  b_-( 0) e^{-i w - \epsilon w} \bigg).
\end{align*}
By boundedness, we know $b_{\pm}(0)$ is finite. Therefore, by explicit computation, the second integral gives a finite contribution in the limit $\epsilon \to 0$ as:
\begin{align*}
    \int_0^{\infty} \di w \sqrt{w} \, b_{\sigma}(0) e^{(i \sigma - \epsilon)w} &= 2 b_{\sigma}(0) \int_0^{\infty} \di s \, s^2 e^{(i \sigma - \epsilon)s^2}\\
    &= \frac{b_{\sigma}(0)}{2}\bigg( \frac{\pi}{ \epsilon - i \sigma} \bigg)^{3/2},
\end{align*}
with $\sigma$ either $+$ or $-$. For the other terms we have:
\begin{align*}
    \int_0^{\infty} \di w \sqrt{w} \bigg( b_{\sigma}\bigg( \frac{w}{t} \bigg) - b_{\sigma}(0) \bigg) e^{(i \sigma - \epsilon)w} &= \frac{1}{(i \sigma - \epsilon)^2} \int_0^{\infty} \di w \sqrt{w} \bigg( b_{\sigma}\bigg( \frac{w}{t} \bigg) - b_{\sigma}(0) \bigg) \partial^2_{w} (e^{(i \sigma - \epsilon)w})\\
    &= \frac{1}{(i \sigma - \epsilon)^2} \int_0^{\infty} \di w (w)^{-3/2} c_{\sigma}\bigg( \frac{w}{t} \bigg) (e^{(i \sigma - \epsilon)w} - 1),
\end{align*}
where we have performed twice an integration by parts and we have called:
\begin{equation*}
    c_{\sigma}\bigg( \frac{w}{t} \bigg) = -\frac{1}{4} \bigg( b_{\sigma}\bigg( \frac{w}{t} \bigg) - b_{\sigma}(0) \bigg) + w \partial_w \bigg( b_{\sigma}\bigg( \frac{w}{t} \bigg) - b_{\sigma}(0) \bigg) + w^2 \partial^2_w \bigg( b_{\sigma}\bigg( \frac{w}{t} \bigg) - b_{\sigma}(0) \bigg).
\end{equation*}
The latter, is a bounded function due to the exponential decay in the argument of $b_{\sigma}( \frac{w}{t} )$. Now, due to the $-3/2$ power of $w$, the above integral is bounded by a constant in time. Follows that we can find a constant $C \in \mathbb{R}^+$ such that:
\begin{equation*}
    |(\cancel{S}^{\beta, +}_{2} - \cancel{S}^{+}_{2})(x; t_y + t, \mathbf{y})| \leq \frac{C}{t^{3/2}}.
\end{equation*}
Combining this estimate with the one for the ground state two-point function we have the first part of the claim for $\cancel{S}^{\beta,+}_2$. For the other two-point function, $\cancel{S}^{\beta,-}_2$, noticing that $W_{\beta}^+(x) = -W_{\beta}^-(x)$ we have by the same steps the same claim.\\
The proof is completed by using again the asymptotic behaviour of modified Bessel functions of the second kind and index 1. In particular, any derivative of the ground two-point function will just increase the decay in $t$. While, for $(\cancel{S}^{\beta, +}_{2} - \cancel{S}^{+}_{2})$, derivatives act on $\frac{\sin(\sqrt{E^2 - m^2} |\mathbf{x}|)}{\sqrt{E^2 - m^2} |\mathbf{x}|} e^{i\sigma E t}$. Again, they cannot spoil the decay in $t$ by the presence of the Fermi factor.
\end{proof}

\subsection{Proofs of Section \ref{Sec:ScalarEl}}\label{appendix_scalar_el}
\begin{proof}[\textbf{Proof of Proposition \ref{prop_2p_noneq}}]
Expanding the Bogoliubov maps in \eqref{first_order_not_correct} and keeping only the first order in the interaction we obtain (with obvious notation):
\begin{equation*}
  Z_1+Z_2\coloneqq \omega^\beta\left([-iV\star\varphi^{\dagger}(x)+iV\cdot_T\varphi^{\dagger}(x)]\star\varphi(y)\right)
  +\omega^\beta\left(\varphi^{\dagger}(x)\star[-iV\star\varphi(y)+iV\cdot_T\varphi(y)]\right).
\end{equation*}
We also recall here the explicit expression of the functional $V$:
\begin{equation*}
         V=\int \chi(t_x)h(\textbf{x})\left[ ieA_0(\textbf{x})(\varphi^{\dagger}\partial_{0}\varphi-\partial_{0}\varphi^{\dagger}\varphi)\right]\;\di^4x,
\end{equation*}
where we have already neglected the terms at second order in the coupling constant and we have substituted our explicit expression for the quadri-potential. To evaluate the first term ($Z_1$), we compute the first functional derivative of the potential. Then, we use the following relation between the two-point function and the Feynman propagator (in which we directly neglect the terms quadratic in the fields, that will go to zero once evaluated on the state):
\begin{equation}\label{eq: FeyRet}
    -iV\star\varphi^\dagger(x)+i V\cdot_T\varphi^\dagger(x)=-\int \di z\frac{\delta V}{\delta \varphi}(z)\Delta_R(x,z)
\end{equation}
to exclude any boundary term. Finally, computing also the functional derivative in the adjoint field $\varphi^{\dagger}$ the final result is:
\begin{align*}
    &Z_1=\overbrace{-ie\int \di s \dot{\chi}(t_s)h(\textbf{s})A_0(\textbf{s})\Delta_R(x,s)\omega_2^\beta(s,y)}^{Z_1^{\mathfrak{A}}}\overbrace{-ie\int \di ss2\chi(t_s)h(\textbf{s})A_0(\textbf{s})\partial_0^s\Delta_R(x,s)\omega_2^\beta(s,y)}^{Z_1^{\mathfrak{B}}},\\
    & Z_2=\underbrace{ie\int \di s \dot{\chi}(t_s)h(\textbf{s})A_0(\textbf{s})\Delta_R(y,s)\omega_2^\beta(x,s)}_{Z_2^{\mathfrak{A}}}+ie\underbrace{\int \di s 2\chi(t_s)h(\textbf{s})A_0(\textbf{s})\partial_0^s\Delta_R(y,s)\omega_2^\beta(x,s)}_{Z_2^{\mathfrak{B}}},
\end{align*}
where we have also reported the result for the term $Z_2$ computed in the same manner. We underline here the presence of terms proportional to $\dot{\chi}$, as a consequence of the presence of a time derivative of the fields in the interaction Lagrangian density.\\ 
We now focus on computing explicitly $Z_1^{\mathfrak{A}}$. Substituting the expression for the retarded propagator and the two-point function \eqref{eq: 2pfscalar}, and using the support properties of the function $\dot{\chi}(t_s)$ to extend the integral in $t_s$ to +$\infty$, we get: 
\begin{align*}
    Z_1^{\mathfrak{A}}=&e\int\di^3\textbf{s}\int_{-\infty}^{+\infty}\di t_s\dot{\chi}(t_s)h(\textbf{s})A_0(\textbf{s})\int\int\frac{\di^3\textbf{p}\di^3\textbf{k}}{(2\pi)^6}\frac{1}{4\en{p}\en{k}}e^{i\textbf{p}(\textbf{s}-\textbf{y})+i\textbf{k}(\textbf{x}-\textbf{s})}\left[\bshs{+}{p}e^{i\en{k}t_x+i\en{p}t_y-it_s(\en{p}+\en{k})}\right.\\\nonumber
    & \left.+\bshs{-}{p}e^{i\en{k}t_x-i\en{p}t_y+it_s(\en{p}-\en{k})}-\bshs{+}{p}e^{-i\en{k}t_x+i\en{p}t_y-it_s(\en{p}-\en{k})}-\bshs{-}{p}e^{-i\en{k}t_x-i\en{p}t_y+it_s(\en{p}+\en{k})}\right].
\end{align*}
Focusing instead on the term $Z_1^{\mathfrak{B}}$, with similar computations one obtains:
\begin{align*}
    Z_1^{\mathfrak{B}}=&-2ie\int\di^3\textbf{s}\int_{-\infty}^{t_x}\di t_s\chi(t_s)h(\textbf{s})A_0(\textbf{s})\int\int\frac{\di^3\textbf{p}\di^3\textbf{k}}{(2\pi)^6}\frac{1}{4\en{p}}e^{i\textbf{p}(\textbf{s}-\textbf{y})+i\textbf{k}(\textbf{x}-\textbf{s})}\left[\bshs{+}{p}e^{i\en{k}t_x+i\en{p}t_y-it_s(\en{p}+\en{k})}+\right.\\\nonumber
    & \left.+\bshs{-}{p}e^{i\en{k}t_x-i\en{p}t_y+it_s(\en{p}-\en{k})}+\bshs{+}{p}e^{-i\en{k}t_x+i\en{p}t_y-it_s(\en{p}-\en{k})}+\bshs{-}{p}e^{-i\en{k}t_x-i\en{p}t_y+it_s(\en{p}+\en{k})}\right].
\end{align*}
Analogous steps and notation give for $Z_2$:
\begin{align*}
    Z_2^{\mathfrak{A}} =&-e\int\di^3\textbf{s}\int_{-\infty}^{+\infty}\di t_s\dot{\chi}(t_s)h(\textbf{s})A_0(\textbf{s})\int\int\frac{\di^3\textbf{p}\di^3\textbf{k}}{(2\pi)^6}\frac{1}{4\en{p}\en{k}}e^{i\textbf{p}(\textbf{x}-\textbf{s})+i\textbf{k}(\textbf{y}-\textbf{s})}\left[\bshs{+}{p}e^{i\en{k}t_y-i\en{p}t_x+it_s(\en{p}-\en{k})}\right.\\\nonumber
    & \left.+\bshs{-}{p}e^{i\en{p}t_x+i\en{k}t_y-it_s(\en{p}+\en{k})}-\bshs{+}{p}e^{-i\en{p}t_x-i\en{k}t_y+it_s(\en{p}+\en{k})}-\bshs{-}{p}e^{i\en{p}t_x-i\en{k}t_y+it_s(\en{k}-\en{p})}\right]
\end{align*}    
and:
\begin{align*}
    Z_2^{\mathfrak{B}}=&2ie\int\di^3\textbf{s}\int_{-\infty}^{t_y}\di t_s\chi(t_s)h(\textbf{s})A_0(\textbf{s})\int\int\frac{\di^3\textbf{p}\di^3\textbf{k}}{(2\pi)^6}\frac{1}{4\en{p}}e^{i\textbf{p}(\textbf{s}-\textbf{x})+i\textbf{k}(\textbf{y}-\textbf{s})}\left[\bshs{+}{p}e^{i\en{k}t_y-i\en{p}t_x+it_s(\en{p}-\en{k})}\right.\\\nonumber
    & \left.+\bshs{-}{p}e^{i\en{p}t_x+i\en{k}t_y-it_s(\en{p}+\en{k})}+\bshs{+}{p}e^{-i\en{p}t_x-i\en{k}t_y+it_s(\en{p}+\en{k})}+\bshs{-}{p}e^{i\en{p}t_x-i\en{k}t_y+it_s(\en{k}-\en{p})}\right].
\end{align*}
This concludes the proof of the proposition by calling $Z^{\mathfrak{A}} = Z_1^{\mathfrak{A}} + Z_2^{\mathfrak{A}}$ and $Z^{\mathfrak{B}} = Z_1^{\mathfrak{B}} + Z_2^{\mathfrak{B}}$,
which correspond respectively to the first two diagrams and to the second two diagrams in Figure \ref{fig:Z}.
\end{proof}

   \begin{proof}[\textbf{Proof of Proposition \ref{prop_expl_pot}}]
We consider the adiabatic limit $h\to 1$ and the scalar potential:
\begin{equation*}
    A_0(\textbf{s})=s_i.
\end{equation*}
We start evaluating the term $Z^{\mathfrak{A}}$ 
formed by the sum of
$Z_1^{\mathfrak{A}}$ and $Z_2^{\mathfrak{A}}$ in \eqref{first_order_not_correct}, which correspond respectively to the first and the second diagram in Figure \ref{fig:Z}. By direct computation we obtain:
\begin{align*}
  Z_1^{\mathfrak{A}}=&-ie\int_{-\infty}^{+\infty}\di t_s\dot{\chi}(t_s)\int\frac{\di^3\textbf{p}}{(2\pi)^3}\frac{e^{i\textbf{p}(\textbf{x}-\textbf{y})}}{4\omega_\textbf{p}^2}(ix_i-\frac{p_i}{\en{p}^2})\left[\bshs{+}{p}e^{i\en{p}(t_x+t_y-2t_s)}+\right.\\
        & \left.+\bshs{-}{p}e^{i\en{p}(t_x-t_y)}-\bshs{+}{p}e^{-i\en{p}(t_x-t_y)}-\bshs{-}{p}e^{-i\en{p}(t_x+t_y-2t_s)}\right]+\\
        &+\frac{ip_i(t_x-t_s)}{4(2\pi)^3\en{p}^3}e^{i\textbf{p}(\textbf{x}-\textbf{y})}\left[\bshs{+}{p}e^{i\en{p}(t_x+t_y-2t_s)}+\bshs{-}{p}e^{i\en{p}(t_x-t_y)}+\bshs{+}{p}e^{-i\en{p}(t_x-t_y)}+\bshs{-}{p}e^{-i\en{p}(t_x+t_y-2t_s)}\right].\\
    Z_2^{\mathfrak{A}}=&-ie\int_{-\infty}^{+\infty}\di t_s\dot{\chi}(t_s)\int\frac{\di^3\textbf{p}}{(2\pi)^3}\frac{e^{i\textbf{p}(\textbf{x}-\textbf{y})}}{4\omega_\textbf{p}^2}(-iy_i-\frac{p_i}{\en{p}^2})\left[-\bshs{+}{p}e^{-i\en{p}(t_x+t_y-2t_s)}+\right.\\
        & \left.-\bshs{-}{p}e^{i\en{p}(t_x-t_y)}+\bshs{+}{p}e^{-i\en{p}(t_x-t_y)}+\bshs{-}{p}e^{i\en{p}(t_x+t_y-2t_s)}\right]+\\
        &+\frac{ip_i(t_y-t_s)}{4(2\pi)^3\en{p}^3}e^{i\textbf{p}(\textbf{x}-\textbf{y})}\left[\bshs{+}{p}e^{-i\en{p}(t_x+t_y-2t_s)}+ \bshs{-}{p}e^{i\en{p}(t_x-t_y)}+\bshs{+}{p}e^{-i\en{p}(t_x-t_y)}+\bshs{-}{p}e^{i\en{p}(t_x+t_y-2t_s)}\right].  
\end{align*}
Summing the two addends $Z_1^{\mathfrak{A}}$ and $Z_2^{\mathfrak{A}}$ and smearing against the two families of test functions $f_t(x)=f(t_x-t,\textbf{x})$, $g_t(y)=g(t_y-t,\textbf{y})$ we get:
\begin{multline*}
    \left(Z_1^{\mathfrak{A}}+Z_2^{\mathfrak{A}}\right)(f_t,g_t)=\mathcal{Z}_{\chi,f,g}(t)+et\int\frac{\di^3\textbf{p}}{4\en{p}^3(2\pi)^3}p_i\left[2\bshs{-}{p}\hat{f}(\en{p},-\textbf{p})\hat{g}(-\en{p},\textbf{p})+2\bshs{+}{p}\hat{f}(-\en{p},-\textbf{p})\hat{g}(\en{p},\textbf{p})\right]\\
    +et\int\frac{\di^3\textbf{p}}{4\en{p}^3(2\pi)^3}p_i(\bshs{-}{p}+\bshs{+}{p})\left[\hat{\dot{\chi}}(-2\en{p})\hat{f}(\en{p},-\textbf{p})\hat{g}(\en{p},\textbf{p})e^{2i\en{p}t}+\hat{\dot{\chi}}(2\en{p})\hat{f}(-\en{p},-\textbf{p})\hat{g}(-\en{p},\textbf{p})e^{-2i\en{p}t}\right],
\end{multline*}
where $\mathcal{Z}_{\chi,f,g}(t)$ is a bounded function of $t$ and the relation $\int \di t_s\dot{\chi}(t_s)=1$ has been used. It is now easy to see that the absolute value of the second addend in the previous expression grows linearly with the time $t$. On the contrary, using a slightly modified version of Lemma A.1 in \cite{Meda_2022} (stationary phase method) or by Appendix \ref{app: 1}, it is possible to show that the absolute value of the third addend in the previous expression goes to $0$ in the limit $t\to\infty$ (even if slower than expected for the two-point function).\\ 
We proceed considering $Z^{\mathfrak{B}}$ in \eqref{first_order_not_correct} which is formed by the sum of the 
terms $Z_1^{\mathfrak{B}}$ and $Z_2^{\mathfrak{B}}$, which correspond to the third and fourth diagram in Figure \ref{fig:Z}. We start evaluating explicitly the contribution $Z_1^{\mathfrak{B}}$ that, after Fourier transforming in $\mathbf{s}_1$:
\begin{align*}
    Z_1^{\mathfrak{B}}=&-2e\int_{-\infty}^{t_x}\di t_s\chi(t_s)\int\di^3\textbf{p}\frac{1}{4(2\pi)^3}ix_ie^{-i\textbf{p}(\textbf{y}-\textbf{x})}\frac{1}{\en{p}}\left[\bshs{+}{p}e^{i\en{p}(t_y+t_x)-i2\en{p}t_s}+\right.\\\nonumber
        & \left.+\bshs{-}{p}e^{i\en{p}(t_x-t_y)}+\bshs{+}{p}e^{-i\en{p}(t_x-t_y)}+\bshs{-}{p}e^{-i\en{p}(t_x+t_y)+i2\en{p}t_s}\right]+\\
        & \frac{1}{4(2\pi)^3}e^{-i\textbf{p}(\textbf{y}-\textbf{x})}\frac{p_i}{\en{p}^2}\left[i(t_x-t_s)\bshs{+}{p}e^{i\en{p}(t_y+t_x)-i2\en{p}t_s}+\right.\\\nonumber
        & \left.+i(t_x-t_s)\bshs{-}{p}e^{i\en{p}(t_x-t_y)}+i(t_s-t_x)\bshs{+}{p}e^{-i\en{p}(t_x-t_y)}+i(t_s-t_x)\bshs{-}{p}e^{-i\en{p}(t_x+t_y)+i2\en{p}t_s}\right].
\end{align*}
Calling the terms in the integrand that contain a dependence on the variable of integration $t_s$ in the exponents $Z_{1,+}^{\mathfrak{B}}$, we obtain the following boundary term:
\begin{align*}
    Z_{1,+}^{\mathfrak{B},Inv}=&-2e\int\di^3\textbf{p}\frac{1}{4(2\pi)^3}ix_ie^{-i\textbf{p}(\textbf{y}-\textbf{x})}\frac{1}{\en{p}}\left[\frac{\bshs{+}{p}}{-2i\en{p}}e^{i\en{p}(t_y-t_x)}+\frac{\bshs{-}{p}}{2i\en{p}}e^{-i\en{p}(t_y-t_x)}\right]+\\\nonumber
    &\frac{1}{4(2\pi)^3}e^{-i\textbf{p}(\textbf{y}-\textbf{x})}\frac{p_i}{\en{p}^2}\left[\frac{i\bshs{+}{p}e^{i\en{p}(t_y-t_x)}}{-4\en{p}^2}-\frac{i\bshs{-}{p}e^{-i\en{p}(t_y-t_x)}}{-4\en{p}^2}\right]
\end{align*}
and the following bulk term:
\begin{align*}
    Z_{1,+}^{\mathfrak{B},Bulk}=&-2e\int_{-\infty}^{+\infty}\di t_s\dot{\chi}(t_s)\int\di^3\textbf{p}\frac{1}{4(2\pi)^3}x_ie^{-i\textbf{p}(\textbf{y}-\textbf{x})}\frac{1}{2\en{p}^2}\left[\bshs{+}{p}e^{i\en{p}(t_y+t_x)-i2\en{p}t_s}-\bshs{-}{p}e^{-i\en{p}(t_y+t_x)+2i\en{p}t_s}\right]+\\\nonumber
    &\frac{1}{4(2\pi)^3}e^{-i\textbf{p}(\textbf{y}-\textbf{x})}\frac{p_i}{\en{p}^2}\left[-t_xe^{i\en{p}(t_y+t_x)-i2\en{p}t_s}\frac{\bshs{+}{p}}{2\en{p}}-(i2\en{p}t_s+1)e^{i\en{p}(t_x+t_y)-i2\en{p}t_s}\frac{i\bshs{+}{p}}{4\en{p}^2}\right.\\\nonumber
    &\left.-t_xe^{-i\en{p}(t_y+t_x)+i2\en{p}t_s}\frac{\bshs{-}{p}}{2\en{p}}-(i2\en{p}t_s-1)e^{-i\en{p}(t_x+t_y)+i2\en{p}t_s}\frac{i\bshs{-}{p}}{4\en{p}^2}\right].
\end{align*}
To obtain the previous expression we used the form of $\chi$ and the fact that the time $t_x$ is supposed to be after the end of the complete switch-on. In conclusion we have:
\begin{equation}    
    Z_1^{\mathfrak{B}} = Z_{1,+}^{\mathfrak{B},Inv}+Z_{1,+}^{\mathfrak{B},Bulk}+Z_{1,-}^{\mathfrak{B}},
\end{equation}
where we used the notation $Z_{1,-}^{\mathfrak{B}}$ to denote the contribution in $Z_1^{\mathfrak{B}}$ that does not show any dependence on the integration variable $t_s$ in the exponents:
\begin{align*}
    Z_{1,-}^{\mathfrak{B}}=&-2e\int_{-\infty}^{t_x}\di t_s\chi(t_s)\int\di^3\textbf{p}\frac{1}{4(2\pi)^3}ix_ie^{-i\textbf{p}(\textbf{y}-\textbf{x})}\frac{1}{\en{p}}\left[
        \bshs{-}{p}e^{i\en{p}(t_x-t_y)}+\bshs{+}{p}e^{-i\en{p}(t_x-t_y)}\right]+\\
        & \frac{1}{4(2\pi)^3}e^{-i\textbf{p}(\textbf{y}-\textbf{x})}\frac{p_i}{\en{p}^2}\left[i(t_x-t_s)\bshs{-}{p}e^{i\en{p}(t_x-t_y)}+i(t_s-t_x)\bshs{+}{p}e^{-i\en{p}(t_x-t_y)}\right].
\end{align*}

In a similar way we obtain the following expression for the contribution $Z_2^{\mathfrak{B}}$:
\begin{equation*}
    Z_2^{\mathfrak{B}}=Z_{2,+}^{\mathfrak{B},Inv}+Z_{2,+}^{\mathfrak{B},Bulk}+Z_{2,-}^{\mathfrak{B}},
\end{equation*}
where, with analogous notation:
\begin{align*}
    Z_{2,-}^{\mathfrak{B}}&=2e\int_{-\infty}^{t_y}\di t_s\chi(t_s)\int\di^3\textbf{p}\frac{1}{4(2\pi)^3}iy_ie^{-i\textbf{p}(\textbf{y}-\textbf{x})}\frac{1}{\en{p}}\left[
    \bshs{-}{p}e^{i\en{p}(t_x-t_y)}+\bshs{+}{p}e^{-i\en{p}(t_x-t_y)}\right]+\\
    & \frac{1}{4(2\pi)^3}e^{-i\textbf{p}(\textbf{y}-\textbf{x})}\frac{p_i}{\en{p}^2}\left[i(t_y-t_s)\bshs{-}{p}e^{i\en{p}(t_x-t_y)}+i(t_s-t_y)\bshs{+}{p}e^{-i\en{p}(t_x-t_y)}\right].\\
    Z_{2,+}^{\mathfrak{B},Inv}&=-2e\int\di^3\textbf{p}\frac{1}{4(2\pi)^3}iy_ie^{-i\textbf{p}(\textbf{y}-\textbf{x})}\frac{1}{\en{p}}\left[\frac{\bshs{+}{p}}{-2i\en{p}}e^{i\en{p}(t_y-t_x)}+\frac{\bshs{-}{p}}{2i\en{p}}e^{-i\en{p}(t_y-t_x)}\right]+\\\nonumber
    &\frac{1}{4(2\pi)^3}e^{-i\textbf{p}(\textbf{y}-\textbf{x})}\frac{p_i}{\en{p}^2}\left[\frac{i\bshs{+}{p}e^{i\en{p}(t_y-t_x)}}{-4\en{p}^2}+\frac{i\bshs{-}{p}e^{-i\en{p}(t_y-t_x)}}{-4\en{p}^2}\right]\\
    Z_{2,+}^{\mathfrak{B},Bulk}&=-2e\int_{-\infty}^{+\infty}\di t_s\dot{\chi}(t_s)\int\di^3\textbf{p}\frac{1}{4(2\pi)^3}y_ie^{-i\textbf{p}(\textbf{y}-\textbf{x})}\frac{1}{2\en{p}^2}\left[\bshs{+}{p}e^{-i\en{p}(t_y+t_x)+i2\en{p}t_s}-\bshs{-}{p}e^{i\en{p}(t_y+t_x)-2i\en{p}t_s}\right]+\\\nonumber
    &\frac{1}{4(2\pi)^3}e^{-i\textbf{p}(\textbf{y}-\textbf{x})}\frac{p_i}{\en{p}^2}\left[t_ye^{i\en{p}(t_y+t_x)-i2\en{p}t_s}\frac{\bshs{-}{p}}{2\en{p}}+(i2\en{p}t_s+1)e^{i\en{p}(t_x+t_y)-i2\en{p}t_s}\frac{i\bshs{-}{p}}{4\en{p}^2}\right.\\\nonumber
    &\left.+t_ye^{-i\en{p}(t_y+t_x)+i2\en{p}t_s}\frac{\bshs{+}{p}}{2\en{p}}+(i2\en{p}t_s-1)e^{-i\en{p}(t_x+t_y)+i2\en{p}t_s}\frac{i\bshs{+}{p}}{4\en{p}^2}\right].
\end{align*}
The addends $Z_{1,+}^{\mathfrak{B},Inv}$, $Z_{2,+}^{\mathfrak{B},Inv}$ are invariant under time translations. Concerning the terms $Z_{1,+}^{\mathfrak{B},Bulk}$, $Z_{2,+}^{\mathfrak{B},Bulk}$, instead, it is again possible to show, by stationary phase methods or adapting Lemma A.1 in \cite{Meda_2022}, that they decay in the limit $t\to\infty$. The last contributes to consider are the one coming from the terms $Z_{1,-}^{\mathfrak{B}}$ and $Z_{2,-}^{\mathfrak{B}}$. For simplicity, we consider the case in $t_x=t_y=t/2$ and we smear the distribution $Z_{1,-}^{\mathfrak{B}}+Z_{2,-}^{\mathfrak{B}}(t,\textbf{x},\textbf{y})$ only in space, against two test functions $f(x),g(y)\in\mathcal{C}^\infty_0(\mathbb{R}^3)$:
\begin{equation*}
    \int \di^3\textbf{x}\int\di^3\textbf{y} f(\textbf{x})g(\textbf{y})\left(Z_{1,-}^{\mathfrak{B}}(t,\textbf{x},\textbf{y})+Z_{2,-}^{\mathfrak{B}}(t,\textbf{x},\textbf{y})\right)=-et\int\di^3\textbf{p}\frac{\bshs{-}{p}+\bshs{+}{p}}{4(2\pi)^3\en{p}}\partial_{p_i}\left(\hat{f}(-\textbf{p})\hat{g}(\textbf{p})\right).
\end{equation*}
With an analogous computation, we get for the terms in $Z_1^{\mathfrak{A}}+Z_2^{\mathfrak{A}}$ that grow linearly in $t$ ($Z_1^{\mathfrak{A},l}+Z_2^{\mathfrak{A},l}$):
\begin{equation*}
    \int \di^3\textbf{x}\int\di^3\textbf{y} f(\textbf{x})g(\textbf{y})\left(Z_1^{\mathfrak{A},l}(t,\textbf{x},\textbf{y})+Z_2^{\mathfrak{B},l}(t,\textbf{x},\textbf{y})\right)=et\int\di^3\textbf{p}\frac{p_i(\bshs{-}{p}+\bshs{+}{p})}{4(2\pi)^3\en{p}^3}\left(\hat{f}(-\textbf{p})\hat{g}(\textbf{p})\right).
\end{equation*}
It is now evident that it is possible to chose the function $f$ and $g$ in a way such that the linearly growing term in $Z_1^{\mathfrak{A}}+Z_2^{\mathfrak{A}}$ is not canceled by a term in $Z_1^{\mathfrak{B}}+Z_2^{\mathfrak{B}}$. The proof of the proposition follows by taking the absolute value of the smeared distribution \eqref{first_order_not_correct_sm} and the large time limit $t\to\infty$.
   \end{proof}

\begin{proof}[\textbf{Proof of Proposition \ref{prop_correct_el}}]
We want to compute:
\begin{equation*}
    \omega^{\beta,V}\left(R_V(\varphi^\dagger(x)\star\varphi(y))\right).
\end{equation*}
We recall that the following perturbative expression holds for every 
$F\in\mathcal{A}$:
\begin{equation}
       \omega^{\beta,V}(R_V(F))=\sum_{n\geq0}(-1)^n\int_{\beta S_n}\di u_1...\di u_n\;\omega^{\beta,\mathcal{T}}\left(R_V(F)\otimes \tau_{iu_1}K\otimes...\otimes\tau_{iu_n}K\right)
\end{equation}
and that, in this specific case, we have:
\begin{equation*}
     K=R_V(-\Dot{V})=-\int\Dot{\chi}(t_x)h(\textbf{x})\left[ieA^{\mu}(x)R_V(\varphi^{\dagger}\partial_{\mu}\varphi-\partial_{\mu}\varphi^{\dagger}\varphi)-e^2A_{\mu}(x)A^{\mu}(x)R_V(\varphi^{\dagger}\varphi)\right]\di^4x.
\end{equation*}
The term of order $0$ in the interaction coincides with the two-point function of the KMS state for the free theory. The corrections of order one coincide with the one obtained in \eqref{first_order_not_correct}, except for the additional term $E$:
\begin{equation*}
    E\coloneqq\int_0^{\beta}\di u\;\omega^{\beta,\mathcal{T}}(\varphi^{\dagger}(x)\star\varphi(y)\otimes\alpha_{iu}\Dot{V}).
\end{equation*}
With our specific choice of the potential, we have (at first order in $e$):
\begin{equation*}
    \dot{V}=ie\int\di^4x\dot{\chi}(t_x)h(\textbf{x})A_0(\textbf{x})(\varphi^\dagger(x)\partial_0\varphi(x)-\partial_0\varphi^\dagger(x)\varphi(x)).
\end{equation*}
Neglecting the vanishing contributions once evaluated on the state we get:
\begin{equation*}\label{3.2.37}
    \varphi(y)\star\dot{V}=2ie\int\di^4s\dot{\chi}(t_s)h(\textbf{s})A_0(\textbf{s})\omega_2^\beta(y,s)\partial_0^s\varphi(s)+ie\int\di^4s\omega_2^\beta(y,s)\ddot{\chi}(t_s)h(\textbf{s})A_0(\textbf{s})\varphi(s).
\end{equation*}
and:
\begin{equation*}
    \varphi^\dagger(x)\star\varphi(y)\star\dot{V}=-ie\int\di^4s\ddot{\chi}(t_s)h(\textbf{s})A_0(\textbf{s})\omega_2^\beta(x,s)\omega_2^\beta(y,s)-2ie\int\di^4s\dot{\chi}(t_s)h(\textbf{s})A_0(\textbf{s})\omega_2^\beta(x,s)\partial_0^s\omega_2^\beta(y,s)
\end{equation*}
In conclusion, at the first order in $e$, the correction $E$ is given by:
\begin{multline*}
    E^1=\overbrace{-ie\int\di^4s\int_0^\beta \di u\ddot{\chi}(t_s)h(\textbf{s})A_0(\textbf{s})\omega_2^\beta(x,s+iue_0)\omega_2^\beta(y,s+iue_0)}^{E^{1,\mathfrak{A}}}\\
    \underbrace{-2ie\int\di^4s\int_0^\beta\di u\dot{\chi}(t_s)h(\textbf{s})A_0(\textbf{s})\omega_2^\beta(x,s+iue_0)\partial_0^s\omega_2^\beta(y,s+iue_0)}_{E^{1,\mathfrak{B}}}.
\end{multline*}
Here we denoted $e_0\equiv(1,0,0,0)$. Substituting the explicit expression for the two-point function of the free KMS state, integrating by parts in $t_s$ the term $E^{1,\mathfrak{A}}$ and finally integrating everything in $u$ we get:
\begin{align*}
    E^{1,\mathfrak{A}}=&-e\int\di^4s\dot{\chi}(t_s)h(\textbf{s})A_0(\textbf{s})\int\int\frac{\di^3\textbf{k}\di^3\textbf{p}}{(2\pi)^6}e^{i\textbf{p}(\textbf{x}-\textbf{s})+i\textbf{k}(\textbf{y}-\textbf{s})}\frac{1}{4\en{p}\en{k}}\times\\\nonumber
    &\times\left[-\bshs{+}{p}\bshs{+}{k}e^{-i\en{p}(t_x-t_s)-i\en{k}(t_y-t_s)}\left(e^{-(\en{k}+\en{p})\beta}-1\right)-\bshs{-}{p}\bshs{+}{k}e^{i\en{p}(t_x-t_s)-i\en{k}(t_y-t_s)}\left(e^{(\en{p}-\en{k})\beta}-1\right)\right.\\\nonumber
    &\left.-\bshs{+}{p}\bshs{-}{k}e^{-i\en{p}(t_x-t_s)+i\en{k}(t_y-t_s)}\left(e^{(\en{k}-\en{p})\beta}-1\right)-\bshs{-}{p}\bshs{-}{k}e^{i\en{p}(t_x-t_s)+i\en{k}(t_y-t_s)}\left(e^{(\en{p}+\en{k})\beta}-1\right)\right]\\
    E^{1,\mathfrak{B}}=&2e\int\di^4s\dot{\chi}(t_s)h(\textbf{s})A_0(\textbf{s})\int\int\frac{\di^3\textbf{k}\di^3\textbf{p}}{(2\pi)^6}e^{i\textbf{p}(\textbf{x}-\textbf{s})+i\textbf{k}(\textbf{y}-\textbf{s})}\frac{1}{4\en{p}}\times\\\nonumber
     &\times\left[-\frac{\bshs{+}{p}\bshs{+}{k}}{\en{p}+\en{k}}e^{-i\en{p}(t_x-t_s)-i\en{k}(t_y-t_s)}\left(e^{-(\en{k}+\en{p})\beta}-1\right)-\frac{\bshs{-}{p}\bshs{+}{k}}{\en{k}-\en{p}}e^{i\en{p}(t_x-t_s)-i\en{k}(t_y-t_s)}\left(e^{(\en{p}-\en{k})\beta}-1\right)\right.\\\nonumber
    &\left.+\frac{\bshs{+}{p}\bshs{-}{k}}{\en{p}-\en{k}}e^{-i\en{p}(t_x-t_s)+i\en{k}(t_y-t_s)}\left(e^{(\en{k}-\en{p})\beta}-1\right)-\frac{\bshs{-}{p}\bshs{-}{k}}{\en{p}+\en{k}}e^{i\en{p}(t_x-t_s)+i\en{k}(t_y-t_s)}\left(e^{(\en{p}+\en{k})\beta}-1\right)\right].
\end{align*}
Our aim now is to sum the correction term $E$ to the expression \eqref{first_order_not_correct}. For this purpose, we perform in \eqref{first_order_not_correct} an integration by parts in the variable $t_s$ getting boundary terms:
    \begin{align}\label{1binv}
        Z_1^{\mathfrak{B},Inv}=&2e\int\di^3\textbf{s}h(\textbf{s})A_0(\textbf{s})\int\int\frac{\di^3\textbf{p}\di^3\textbf{k}}{(2\pi)^6}\frac{1}{4\en{p}}e^{i\textbf{p}(\textbf{s}-\textbf{y})+i\textbf{k}(\textbf{x}-\textbf{s})}\left[\frac{\bshs{+}{p}}{(\en{k}+\en{p})}e^{i\en{p}(t_y-t_x)}\right.\\\nonumber
        & \left.+\frac{\bshs{-}{p}}{(\en{k}-\en{p})}e^{-i\en{p}(t_y-t_x)}-\frac{\bshs{+}{p}}{(\en{k}-\en{p})}e^{i\en{p}(t_y-t_x)}-\frac{\bshs{-}{p}}{(\en{k}+\en{p})}e^{-i\en{p}(t_y-t_x)}\right],\\
        Z_2^{\mathfrak{B},Inv}=&2e\int\di^3\textbf{s}h(\textbf{s})A_0(\textbf{s})\int\int\frac{\di^3\textbf{p}\di^3\textbf{k}}{(2\pi)^6}\frac{1}{4\en{p}}e^{i\textbf{p}(\textbf{s}-\textbf{x})+i\textbf{k}(\textbf{y}-\textbf{s})}\left[\frac{\bshs{+}{p}}{(\en{k}+\en{p})}e^{i\en{p}(t_y-t_x)}\right.\\\nonumber
        & \left.+\frac{\bshs{-}{p}}{(\en{k}-\en{p})}e^{-i\en{p}(t_y-t_x)}-\frac{\bshs{+}{p}}{(\en{k}-\en{p})}e^{i\en{p}(t_y-t_x)}-\frac{\bshs{-}{p}}{(\en{k}+\en{p})}e^{-i\en{p}(t_y-t_x)}\right],
    \end{align}
    whose sum is $Z^{\mathfrak{B},Inv}$ mentioned in the thesis of the Proposition
    and the following bulk terms:
    \begin{align}\label{1bulk}
        Z_1^{\mathfrak{B},bulk}=&2e\int\di^3\textbf{s}\int_{-\infty}^{+\infty}\di t_s\dot{\chi}(t_s)h(\textbf{s})A_0(\textbf{s})\int\int\frac{\di^3\textbf{p}\di^3\textbf{k}}{(2\pi)^6}\frac{1}{4\en{p}}e^{i\textbf{p}(\textbf{s}-\textbf{y})+i\textbf{k}(\textbf{x}-\textbf{s})}\\\nonumber
        &\times\left[\frac{\bshs{+}{p}}{-(\en{p}+\en{k})}e^{i\en{k}t_x+i\en{p}t_y-it_s(\en{p}+\en{k})}+\frac{\bshs{-}{p}}{\en{p}-\en{k}}e^{i\en{k}t_x-i\en{p}t_y+it_s(\en{p}-\en{k})}\right.\\\nonumber
        &\left.+\frac{\bshs{+}{p}}{\en{k}-\en{p}}e^{-i\en{k}t_x+i\en{p}t_y-it_s(\en{p}-\en{k})}+\frac{\bshs{-}{p}}{\en{p}+\en{k}}e^{-i\en{k}t_x-i\en{p}t_y+it_s(\en{p}+\en{k})}\right].\\
        Z_2^{\mathfrak{B},bulk}=&-2e\int\di^3\textbf{s}\int_{-\infty}^{+\infty}\di t_s\dot{\chi}(t_s)h(\textbf{s})A_0(\textbf{s})\int\int\frac{\di^3\textbf{p}\di^3\textbf{k}}{(2\pi)^6}\frac{1}{4\en{p}}e^{i\textbf{p}(\textbf{s}-\textbf{x})+i\textbf{k}(\textbf{y}-\textbf{s})}\\\nonumber
        &\times\left[\frac{\bshs{+}{p}}{\en{p}-\en{k}}e^{i\en{k}t_y-i\en{p}t_x+it_s(\en{p}-\en{k})}+\frac{\bshs{-}{p}}{-\en{p}-\en{k}}e^{i\en{p}t_x+i\en{k}t_y-it_s(\en{p}+\en{k})}\right.\\\nonumber
        & \left.+\frac{\bshs{+}{p}}{\en{p}+\en{k}}e^{-i\en{p}t_x-i\en{k}t_y+it_s(\en{p}+\en{k})}+\frac{\bshs{-}{p}}{\en{k}-\en{p}}e^{i\en{p}t_x-i\en{k}t_y+it_s(\en{k}-\en{p})}\right].
    \end{align}
In \eqref{1binv} was used that the fields are considered for times consecutive to the switching on of the interaction (since we are interested in  the large time limit) and in \eqref{1bulk} the support properties of the function $\dot{\chi}(t_s)$, in order to extend the integral in $t_s$ to $+\infty$.\\
We now sum the coefficients in $E^1, Z_1^{\mathfrak{B}, bulk}, Z_2^{\mathfrak{B}, bulk}$ and $Z^{\mathfrak{A}}$ which have the same exponential modes.
For simplicity, we report here the computation only for a specific mode (all the others are computed in the same way):
\begin{align*}
    &e\int\di^4s\dot{\chi}(t_s)A_0(\textbf{s})\int\int\frac{\di^3\textbf{k}\di^3\textbf{p}}{(2\pi)^6}e^{i\textbf{p}(\textbf{s}-\textbf{y})+i\textbf{k}(\textbf{x}-\textbf{s})}e^{i\en{k}t_x+i\en{p}t_y-it_s(\en{p}+\en{k})}\left[\frac{\bshs{+}{p}}{4\en{p}\en{k}}-\frac{2\bshs{+}{p}}{4\en{p}(\en{p}+\en{k})}\right.\\\nonumber
    &\left.-\frac{\bshs{-}{k}}{4\en{p}\en{k}}+\frac{2\bshs{-}{k}}{4\en{k}(\en{p}+\en{k})}+\frac{\bshs{-}{k}\bshs{-}{p}}{4\en{p}\en{k}}\left(e^{(\en{p}+\en{k})\beta}-1\right)-\frac{2\bshs{-}{k}\bshs{-}{p}}{4\en{k}(\en{p}+\en{k})}\left(e^{(\en{p}+\en{k})\beta}-1\right)\right]\\\nonumber
    &=e\int\di^4s\dot{\chi}(t_s)A_0(\textbf{s})\int\int\frac{\di^3\textbf{k}\di^3\textbf{p}}{(2\pi)^6}e^{i\textbf{p}(\textbf{s}-\textbf{y})+i\textbf{k}(\textbf{x}-\textbf{s})}e^{i\en{k}t_x+i\en{p}t_y-it_s(\en{p}+\en{k})}\left[\frac{\bshs{+}{p}}{2\en{p}\en{k}}-\frac{\bshs{+}{p}}{2\en{p}\en{k}}\right]=0.
\end{align*}
In conclusion, at first order in perturbation theory, only the boundary terms $Z^{\mathfrak{B},Inv}$ survive. We thus obtain the following equality:
\begin{equation*}
    \omega_2^{\beta,V}(x,y)=\omega_2^\beta(x,y)+Z_1^{\mathfrak{B},Inv}+Z_2^{\mathfrak{B},Inv}+\mathcal{O}(e^2)
\end{equation*}
that proves the proposition.
\end{proof}

\subsection{Proof of section \ref{Sec:ScalarMag}}\label{appendix_scalar_mag}
\begin{proof}[\textbf{Proof of Proposition \ref{prop_correct_mag}}]     
We study the correction at the first order in the coupling constant $e$ to the two-point function:
\begin{equation}\label{cosadacuipartireincap4}
    \omega^{\beta,V}\left(R_V(\varphi^\dagger(x)\star\varphi(y))\right),  
\end{equation}
and we start by the terms arising from the expansion of the Bogoliubov map. Recall that in this specific case the interaction Lagrangian density $V$ has the following form:
\begin{equation*}
     V=\int \chi(t_x)h(\textbf{x})\left[ ieA^{i}(\varphi^{\dagger}\partial_{i}\varphi-\partial_{i}\varphi^{\dagger}\varphi)-e^2A_{i}A^{i}\varphi^{\dagger}\varphi\right]\;\di^4x.
\end{equation*}
Expanding Bogoliubov formula in \eqref{cosadacuipartireincap4} we get the following two contributes linear in the interaction $V$:
\begin{multline}\label{2terminicap4}
  Z_1+Z_2\coloneqq \omega^\beta\left([-iV\star\varphi^{\dagger}(x)+iV\cdot_T\varphi^{\dagger}(x)]\star\varphi(y)\right)
  +\omega^\beta\left(\varphi^{\dagger}(x)\star[-iV\star\varphi(y)+iV\cdot_T\varphi(y)]\right).
\end{multline}
We now explicitly compute the first of the two addends ($Z_1$). After an integration by parts, the following integral kernel for the functional derivative of $V$ is obtained:
\begin{equation*}\label{funcdervcap4}
    \frac{\delta V}{\delta\varphi}=\chi h\left[-ie\partial^{i}(A_{i}\varphi^{\dagger})-ieA^{i}\partial_{i}\varphi^{\dagger}-e^2A_{i}A^{i}\varphi^{\dagger}\right].
\end{equation*}
In the previous expression, the term proportional to the spatial derivative of the cut-off function $h(\textbf{s})$ has been neglected. This is due to the fact that, in the upcoming computation, we are interested in considering only the case in which the adiabatic limit ($h\to1$) is taken. As shown in \cite{FredenhagenLindnerKMS_2014} the adiabatic limit can be exchanged with the space integration, via dominated convergence theorem, thanks to the properties of decay of the two-point functions of the KMS state of the free theory (for the massive scalar field). In our case, the spatial derivative of the cut-off function $\partial_ih(\textbf{s})$ converges pointwise to the constant $0$. Therefore, we can conclude that in the adiabatic limit the terms proportional to $\partial_i h(\textbf{s})$ vanish.\\
Using now again Equation \eqref{eq: FeyRet} and the convenient choice of the \emph{Coulomb gauge} for the electromagnetic potential, the first addend in \eqref{2terminicap4} reduces to:
\begin{equation}\label{primoaddasost4cap}
      Z_1^{(1)}=-\int \di s\; \omega_2^{\beta,(0)}(s,y)\chi(t_s)h(\textbf{s})\left[2ieA_{i}(s)\partial^{i}\Delta_R(x,s)\right].
\end{equation}
Here, we have already neglected terms of second order in the coupling constant $e$ and in the vector potential $A_i$.\\
From an analogous computation for the second term in \eqref{2terminicap4} ($Z_2$) we get the following result at first order in $e$:
\begin{equation}\label{secondocontrlinearecap4}
    Z_2^{(1)}\coloneqq\int \di s\; \omega_2^{\beta,(0)}(x,s)\chi(t_s)h(\textbf{s})\left[2ieA_{i}(s)\partial^{i}\Delta_R(y,s)\right].
\end{equation}
We now substitute the explicit expression for the two-point function of the free KMS state $\omega_2^{\beta}$ and for the retarded propagator of the free theory $\Delta_R$. In addition, focusing on the term $Z_1^{(1)}$ in the limit $h \to 1$:
\begin{multline*}
    Z_1^{(1)}=\int_{-\infty}^{t_x}\di t_s\int \di^3\textbf{s}\int \di^3\textbf{p}\int \di^3\textbf{k}\;k_i\frac{2eA_i(\mathbf{s})}{(2\pi)^6}\chi(t_s)\;\frac{\sin{\en{k}(t_x-t_s)}}{2\en{p}\en{k}}\cdot\\
    \left[\bshs{+}{p}e^{-i\en{p}(t_s-t_y)}+\bshs{-}{p}e^{i\en{p}(t_s-t_y)}\right]e^{i\mathbf{p}(\mathbf{s-y})+i\mathbf{k}(\mathbf{x-s})}.
\end{multline*}
Integrating by parts, assuming the fields $\varphi^\dagger(x)$ and $\varphi(y)$ to be supported after the complete switch on of the interaction, we obtain the following two terms:
\begin{multline*}
   \mathfrak{C}_1\coloneqq\int \di^3\textbf{s}\int \di^3\textbf{p}\int \di^3\textbf{k}\;\frac{eA_i(\mathbf{s})}{(2\pi)^6}\frac{k_i}{2\en{p}\en{k}}
    \left[\frac{\bshs{+}{p}e^{-i\en{p}(t_x-t_y)}}{(\en{k}+\en{p})}+\frac{\bshs{+}{p}e^{-i\en{p}(t_x-t_y)}}{(\en{k}-\en{p})}\right.\\
   \left.+\frac{\bshs{-}{p}e^{i\en{p}(t_x-t_y)}}{(\en{k}-\en{p})}+\frac{\bshs{-}{p}e^{i\en{p}(t_x-t_y)}}{(\en{k}+\en{p})}\right]e^{i\mathbf{p}(\mathbf{s-y})+i\mathbf{k}(\mathbf{x-s})},
\end{multline*}
\begin{multline}\label{primoterminenoninvcap4}
    \mathfrak{B}_1\coloneqq\int_{-\infty}^{+\infty}\di t_s\int \di^3\textbf{s}\int \di^3\textbf{p}\int \di^3\textbf{k}\frac{eA_i(\mathbf{s})}{(2\pi)^6}\Dot{\chi}(t_s)\frac{k_i}{2\en{p}\en{k}}
    \left[-\frac{\bshs{+}{p}e^{i(\en{k}(t_x-t_s)-\en{p}(t_s-t_y))}}{(\en{k}+\en{p})}\right.\\
    \left.-\frac{\bshs{+}{p}e^{-i(\en{k}(t_x-t_s)+\en{p}(t_s-t_y))}}{(\en{k}-\en{p})}
   -\frac{\bshs{-}{p}e^{i(\en{k}(t_x-t_s)+\en{p}(t_s-t_y))}}{(\en{k}-\en{p})}-\frac{\bshs{-}{p}e^{-i(\en{k}(t_x-t_s)-\en{p}(t_s-t_y))}}{(\en{p}+\en{k})}\right]e^{i\mathbf{p}(\mathbf{y-s})+i\mathbf{k}(\mathbf{x-s})} .
\end{multline}
First, observe that the term $\mathfrak{A}$ represents a quantity invariant under time translations of the system. On the contrary the integral \eqref{primoterminenoninvcap4} depends on the switch-on function and it is not invariant under time translations. Repeating a similar analysis for \eqref{secondocontrlinearecap4}, we get the following:
\begin{multline*}
    \mathfrak{C}_2\coloneqq\int \di^3\textbf{s}\int \di^3\textbf{p}\int \di^3\textbf{k}\;\frac{eA_i(\mathbf{s})}{(2\pi)^6}\frac{k_i}{2\en{p}\en{k}}
    \left[\frac{\bshs{+}{p}e^{-i\en{p}(t_x-t_y)}}{(\en{p}-\en{k})}-\frac{\bshs{+}{p}e^{-i\en{p}(t_x-t_y)}}{(\en{k}+\en{p})}\right.+\\
   \left.-\frac{\bshs{-}{p}e^{i\en{p}(t_x-t_y)}}{(\en{k}+\en{p})}-\frac{\bshs{-}{p}e^{i\en{p}(t_x-t_y)}}{(\en{k}-\en{p})}\right]e^{i\mathbf{p}(\mathbf{x-s})+i\mathbf{k}(\mathbf{y-s})},
\end{multline*}
\begin{multline}\label{secondoterminenoninvcap4}
     \mathfrak{B}_2\coloneqq-\int_{-\infty}^{+\infty}\di t_s\int \di^3\textbf{s}\int \di^3\textbf{p}\int \di^3\textbf{k}\;\frac{eA_i(\mathbf{s})}{(2\pi)^6}\Dot{\chi}(t_s)\frac{p_i}{2\;\en{p}\en{k}}
    \left[-\frac{\bshs{+}{k}e^{i(\en{p}(t_y-t_s)-\en{k}(t_x-t_s))}}{(\en{p}-\en{k})}+\right.\\
    \left.-\frac{\bshs{+}{k}e^{-i(\en{p}(t_y-t_s)+\en{k}(t_x-t_s))}}{(\en{k}+\en{p})}
   -\frac{\bshs{-}{k}e^{i(\en{p}(t_y-t_s)+\en{k}(t_x-t_s))}}{(\en{p}+\en{k})}-\frac{\bshs{-}{k}e^{-i(\en{p}(t_y-t_s)-\en{k}(t_x-t_s))}}{(\en{p}-\en{k})}\right]e^{i\mathbf{p}(\mathbf{y-s})+i\mathbf{k}(\mathbf{x-s})}. 
\end{multline}
To finally determine the first order in $e$ of \eqref{cosadacuipartireincap4}, we still need (see \eqref{pert_exp}):
\begin{equation}\label{correzionekmsintcap4}
    E\coloneqq\int_0^{\beta}\di u\;\omega^{\beta,\mathcal{T}}(\varphi^{\dagger}(x)\star\varphi(y)\otimes\alpha_{iu}\Dot{V}).
\end{equation}
 Substituting in it the explicit expression for $\Dot{V}$, using our assumption about the potential ($A_0=0$), taking the adiabatic limit $h \to 1$, inserting the expression for the two-point function arising from the definition of the product and, at the very end, integrating in $u$ we get:
\begin{align}\label{lunghissimocap4}
\nonumber
     E^{(1)}=ie\int_{-\infty}^{+\infty} \int\int\int \frac{\Dot{\chi}(t_z)A_i(\mathbf{z})}{(2\pi)^6}(ip_i-ik_i)\left[\overbrace{-\frac{e^{-i\en{k}(t_x-t_z-i\beta)-i\en{p}(t_y-t_z-i\beta)}}{(1-e^{-\beta\en{k}})(1-e^{-\beta\en{p}})(\en{k}+\en{p})4\en{p}\en{k}}}\right.\\\nonumber
     -\frac{e^{i\en{k}(t_x-t_z-i\beta)-i\en{p}(t_y-t_z-i\beta)}}{(1-e^{\beta\en{k}})(1-e^{-\beta\en{p}})(\en{k}-\en{p})4\en{p}\en{k}}+\frac{e^{-i\en{k}(t_x-t_z-i\beta)+i\en{p}(t_y-t_z-i\beta)}}{(1-e^{-\beta\en{k}})(1-e^{\beta\en{p}})(\en{k}-\en{p})4\en{p}\en{k}}\\\nonumber
      +\frac{e^{i\en{k}(t_x-t_z-i\beta)+i\en{p}(t_y-t_z-i\beta)}}{(1-e^{\beta\en{k}})(1-e^{\beta\en{p}})(\en{p}+\en{k})4\en{p}\en{k}}+\overbrace{\frac{e^{-i\en{k}(t_x-t_z)-i\en{p}(t_y-t_z)}}{(1-e^{-\beta\en{k}})(1-e^{-\beta\en{p}})(\en{k}+\en{p})4\en{p}\en{k}}}\\\nonumber
     +\frac{e^{i\en{k}(t_x-t_z)-i\en{p}(t_y-t_z)}}{(1-e^{\beta\en{k}})(1-e^{-\beta\en{p}})(\en{k}-\en{p})4\en{p}\en{k}}-\frac{e^{-i\en{k}(t_x-t_z)+i\en{p}(t_y-t_z)}}{(1-e^{-\beta\en{k}})(1-e^{\beta\en{p}})(\en{k}-\en{p})4\en{p}\en{k}}\\
      \left.-\frac{e^{i\en{k}(t_x-t_z)+i\en{p}(t_y-t_z)}}{(1-e^{\beta\en{k}})(1-e^{\beta\en{p}})(\en{p}+\en{k})4\en{p}\en{k}}\right]e^{i\mathbf{k}(\mathbf{x-z})+i\mathbf{p}(\mathbf{y-z})}\;\di t_z\di^3\textbf{z}\;\di^3\textbf{k}\;\di^3\textbf{p}.
\end{align}
We now want to check how the correction just found combines with the contributions to the two-point function ($\mathfrak{C}_1,\mathfrak{C}_2,\mathfrak{B}_1,\mathfrak{B}_2$) to give a time translations invariant distribution. For this purpose, we sum the terms in \eqref{lunghissimocap4} ($E^{(1)}$), \eqref{primoterminenoninvcap4}($\mathfrak{C}_1$) and  \eqref{secondoterminenoninvcap4}($\mathfrak{C}_2$) that have the same exponential modes. For the sake of simplicity, we consider only the modes underlined in \eqref{lunghissimocap4}(the computation for the other terms is analogous):
\begin{align}\nonumber\label{sideveannullarecap4}
    e\int_{-\infty}^{\infty}\di t_s\int \di^3\textbf{s}\;\di^3\textbf{p}\;\di^3\textbf{k}\; \frac{A_i(\mathbf{s})}{(2\pi)^6}\frac{\Dot{\chi}(t_s)}{4\en{p}\en{k}}\frac{e^{-i(\en{k}(t_x-t_s)+\en{p}(t_y-t_s))}}{(\en{p}+\en{k})}e^{i\mathbf{p}(\mathbf{y-s})+i\mathbf{k}(\mathbf{x-s})}\\\nonumber
    \left[k_i\left[\frac{1}{1-e^{-\beta\en{p}}}-\frac{1}{e^{\beta\en{p}}-1}-\frac{1}{(1-e^{-\beta\en{k}})(e^{\beta\en{p}}-1)}+\frac{1}{(e^{\beta\en{k}}-1)(1-e^{-\beta\en{p}})}\right]-\right.\\
    \left.p_i\left[-\frac{1}{1-e^{-\beta\en{k}}}+\frac{1}{e^{\beta\en{k}}-1}+\frac{1}{(1-e^{-\beta\en{k}})(e^{\beta\en{p}}-1)}-\frac{1}{(e^{\beta\en{k}}-1)(1-e^{-\beta\en{p}})}\right]\right].
\end{align}
To understand why the expression in \eqref{sideveannullarecap4} is actually equal to $0$ we have to make use of the chosen gauge condition. In fact, by taking the Fourier transform of the gauge condition, we get the following \emph{transversality condition}:
\begin{equation}\label{trasversalitàcap4}
    \partial_i A_i(\mathbf{s})=0\xrightarrow[]{\mathcal{F}}q_i\hat{A_i}(\mathbf{q})=0.
\end{equation}
However, the integration in the variable $\textbf{s}$ in \eqref{sideveannullarecap4} produces the Fourier transform of the vector potential as a function of the three-momentum $\textbf{p}+\textbf{k}$. Follows that Equation \eqref{sideveannullarecap4} vanishes.\\
Repeating the same computation for all the exponential modes, it's easy to check that $E^{(1)}+\mathfrak{B}_1+\mathfrak{B}_2=0$ and, therefore:
\begin{equation*}
     \omega^{\beta,V}(R_V(\varphi^\dagger(x))\star R_V(\varphi(y)))=\omega_2^{\beta}(x,y)+\mathfrak{C}_1+\mathfrak{C}_2+\mathcal{O}(e^2)
\end{equation*}
Observing that $\mathfrak{C}=\mathfrak{C}_1+\mathfrak{C}_2$ concludes the proof of the proposition.
\end{proof}

\subsection{Proofs of Section \ref{sec: Dirac example}}
\begin{proof}[\textbf{Proof of Lemma \ref{lem: DiracTerm1}}]
We start noticing that the normal ordering of the conserved current can be removed as its contribution will just provide disconnected diagrams. Therefore, writing everything out explicitly:
\begin{align*}
    -\int_0^{\beta}\di u \, \cancel{\omega}^{\beta,{\mathcal{T}}}\big(j^0(x) \otimes \tau_{iu}(K)\big) &= -e \int_0^{\beta}\di u \, \cancel{\omega}^{\beta,{\mathcal{T}}}\big(\overline{\psi} \gamma^0 \psi(x) \otimes \tau_{iu}(K)\big)\\
    &= -\lambda e^2 \int_0^{\beta} \di u \int \di t_1 \di^3\mathbf{x}_1  \Dot{\chi}(t_1) h(\mathbf{x}_1)\cancel{\omega}^{\beta,{\mathcal{T}}}\big((\overline{\psi} \gamma^0 \psi)(t - iu, \mathbf{x}) \otimes \overline{\psi}(x_1) \gamma^{\mu} A_{\mu}(\mathbf{x}_1) \psi(x_1)\big).
\end{align*}
Here we used the stationarity of $\cancel{\omega}^{\beta,{\mathcal{T}}}$ together with its analytic extension to the complex strip $\Im(z) \in (0,\beta)$. Inserting the two-point functions and introducing Spinor indices\footnote{We introduce them in this expression just to highlight and keep in mind the order in which gamma matrices are multiplied. In what follows we drop spinorial indices again.} we get:
\begin{align*}
    &-\int_0^{\beta}\di u \, \cancel{\omega}^{\beta,{\mathcal{T}}}\big(j^0(x) \otimes \tau_{iu}(K)\big) =\\
    &-\lambda e^2\int_0^{\beta} \di u \int \di t_1 \di^3\mathbf{x}_1  \Dot{\chi}(t_1) h(\mathbf{x}_1) A_{\mu}(\mathbf{x}_1)
    \big(\gamma^{0}\big)^{A}_{\,\,\,B} \big(\gamma^{\mu} \big)^{C}_{\,\,\,D}\big(\cancel{S}^{\beta,-}_2(t_1 - t + iu, \mathbf{x}_1 - \mathbf{x})\big)^{D}_{\,\,\,A} \big(\cancel{S}^{\beta,+}_{2}(t - iu - t_1, \mathbf{x} - \mathbf{x}_1)\big)^{B}_{\,\,\, C}.
\end{align*}
Using the explicit form of the KMS two-point functions (see equations \eqref{eq: 2puntiKMS_+} and \eqref{eq: 2puntiKMS_-}) and computing the integral in the $u$-variable:
\begin{align*}
    -\int_0^{\beta}\di u \, \cancel{\omega}^{\beta,{\mathcal{T}}}\big(j^0(x) \otimes &\tau_{iu}(K)\big) = -\frac{\lambda e^2}{(2\pi)^6} \int \di t_1 \di^3\mathbf{x}_1  \Dot{\chi}(t_1) h(\mathbf{x}_1) A_{\mu}(\mathbf{x}_1) \gamma^{0} \gamma^{\mu} \int \frac{\di^3\mathbf{p} \di^3 \mathbf{k}}{4 \en{p} \en{k}}e^{-i(\mathbf{p} - \mathbf{k})(\mathbf{x} - \mathbf{x}_1)}\\
    &\times \bigg[ e^{i(\en{p} - \en{k})(t - t_1)} \bigg( \frac{e^{(\en{p} - \en{k}) \beta} - 1}{\en{p} - \en{k}}  \bigg) \frac{(-\gamma^0 \en{p} - \gamma^i p_i + m) (-\gamma^0 \en{k} - \gamma^i k_i + m)}{(1 + e^{\beta \en{p}}) (1 + e^{-\beta \en{k}})}\\
    &\,\,\,\,  - e^{i(\en{p} + \en{k})(t - t_1)} \bigg( \frac{e^{(\en{p} + \en{k}) \beta} - 1}{\en{p} + \en{k}}  \bigg)\frac{(-\gamma^0 \en{p} - \gamma^i p_i + m) (\gamma^0 \en{k} - \gamma^i k_i + m)}{(1 + e^{\beta \en{p}}) (1 + e^{\beta \en{k}})}\\
    &\,\,\,\, + e^{-i(\en{p} + \en{k})(t - t_1)} \bigg( \frac{e^{-(\en{p} + \en{k}) \beta} - 1}{\en{p} + \en{k}}  \bigg)\frac{(\gamma^0 \en{p} - \gamma^i p_i + m) (-\gamma^0 \en{k} - \gamma^i k_i + m)}{(1 + e^{-\beta \en{p}}) (1 + e^{-\beta \en{k}})}\\
    &\,\,\,\, - e^{-i(\en{p} - \en{k})(t - t_1)} \bigg( \frac{e^{-(\en{p} - \en{k}) \beta} - 1}{\en{p} - \en{k}}  \bigg)\frac{(\gamma^0 \en{p} - \gamma^i p_i + m) (\gamma^0 \en{k} - \gamma^i k_i + m)}{(1 + e^{-\beta \en{p}}) (1 + e^{\beta \en{k}})}\bigg].
\end{align*}
However, for what concerns Fermi factors, we may use the following decomposition:
\begin{equation*}
    \frac{e^{\beta (\en{p} - \en{k})} - 1}{(1 + e^{\beta \en{p}}) (1 + e^{-\beta \en{k}})} = \frac{e^{\beta \en{p}}(e^{ - \beta \en{k}} + 1) - 1 - e^{\beta \en{p}}}{(1 + e^{\beta \en{p}}) (1 + e^{-\beta \en{k}})} = - \frac{1}{(1 + e^{-\beta \en{k}})} + \frac{1}{(1 + e^{-\beta \en{p}})},
\end{equation*}
together with:
\begin{equation*}
    \frac{1}{1+ e^{-\beta \en{p}}} - 1 = \frac{-e^{-\beta \en{p}}}{1 + e^{-\beta \en{p}}} = -\frac{1}{1 + e^{\beta \en{p}}},
\end{equation*}
to conclude:
\begin{align*}
    -\int_0^{\beta}\di u \, \cancel{\omega}^{\beta,{\mathcal{T}}}\big(j^0(x) \otimes &\tau_{iu}(K)\big) = -\frac{\lambda e^2}{(2\pi)^6} \int \di t_1 \di^3\mathbf{x}_1  \Dot{\chi}(t_1) h(\mathbf{x}_1) A_{\mu}(\mathbf{x}_1) \gamma^{0} \gamma^{\mu} \int \frac{\di^3\mathbf{p} \di^3 \mathbf{k}}{4 \en{p} \en{k}}e^{-i(\mathbf{p} - \mathbf{k})(\mathbf{x} - \mathbf{x}_1)} \cdot\\
    &\cdot \bigg[ \frac{e^{i(\en{p} - \en{k})(t - t_1)}}{\en{p} - \en{k}} (-\gamma^0 \en{p} - \gamma^i p_i + m) (-\gamma^0 \en{k} - \gamma^i k_i + m) \bigg( \frac{1}{1 + e^{\beta \en{k}}} - \frac{1}{1 + e^{\beta \en{p}}}  \bigg)\\
    &\,\,\,\, + \frac{e^{i(\en{p} + \en{k})(t - t_1)}}{\en{p} + \en{k}} (-\gamma^0 \en{p} - \gamma^i p_i + m) (\gamma^0 \en{k} - \gamma^i k_i + m)\bigg( \frac{1}{1 + e^{\beta \en{k}}} + \frac{1}{1 + e^{\beta \en{p}}} - 1 \bigg)\\
    &\,\,\,\, + \frac{e^{-i(\en{p} + \en{k})(t - t_1)}}{\en{p} + \en{k}} (\gamma^0 \en{p} - \gamma^i p_i + m) (-\gamma^0 \en{k} - \gamma^i k_i + m) \bigg( \frac{1}{1 + e^{\beta \en{k}}} + \frac{1}{1 + e^{\beta \en{p}}} - 1\bigg)\\
    &\,\,\,\, + \frac{e^{-i(\en{p} - \en{k})(t - t_1)}}{\en{p} - \en{k}} (\gamma^0 \en{p} - \gamma^i p_i + m) (\gamma^0 \en{k} - \gamma^i k_i + m) \bigg( \frac{1}{1 + e^{\beta \en{k}}} - \frac{1}{1 + e^{\beta \en{p}}} \bigg)\bigg].
\end{align*}
This concludes the proof of the Lemma.
\end{proof}

\begin{proof}[\textbf{Proof of Lemma \ref{lem: DiracTerm2}}]
We want to compute:
\begin{equation*}
    -i \cancel{\omega}^{\beta}(V :j^0(x):) +i \cancel{\omega}^{\beta}\big(T(V, :j^0(x):)\big).
\end{equation*}
Start noticing that the normal ordering of the conserved current can be dropped as its contribution in each term exactly cancels the one arising from the other factor. By the explicit form of the potential and of the current we have:
\begin{align*}
    -i\cancel{\omega}^{\beta}\big(V j^0(x)\big) = -i\lambda e\int \di^4x_1 \chi(t_1) h(\mathbf{x}_1)\big( \gamma^{\mu} \big)^{C}_{\,\,\,D} A_{\mu}(\mathbf{x}_1) \big( \cancel{S}^{\beta,+}_2(x_1 - x) \big)^{D}_{\,\,\,A} \big( \gamma^0 \big)^{A}_{\,\,\,B} \big(\cancel{S}^{\beta,-}_2(x - x_1)\big)^{B}_{\,\,\, C}
\end{align*}
and analogously with the same spinorial contractions from the definition of the time-ordered product:
\begin{equation*}
    i\cancel{\omega}^{\beta}\big(T(V , j^0(x)) \big) = -i\lambda e^2\int \di^4x_1 \chi(t_1) h(\mathbf{x}_1)(\gamma^{\mu})^C_{\,\,D} A_{\mu}(\mathbf{x}_1) \big(\cancel{S}_F^{\beta}(x_1 - x)\big)^D_{\,\,A} (\gamma^0)^A_{\,\,B} \big(\cancel{S}_F^{\beta}(x - x_1)\big)^B_{\,\,C}.
\end{equation*}
Adding them together gives\footnote{In what follows we drop again spinorial indices being aware of the correct order of contraction that we have just given.}:
\begin{align*}
    &-i\cancel{\omega}^{\beta}(V j^0(x)) + i\cancel{\omega}^{\beta}\big(T(V, j^0(x))\big) =\\
    &-i\lambda e^2\int \di^4x_1 \chi(t_1) h(\mathbf{x}_1)\gamma^{\mu} \gamma^0 A_{\mu}(\mathbf{x}_1) \big( \cancel{S}_F^{\beta}(x_1 - x) \cancel{S}_F^{\beta}(x - x_1) + \cancel{S}^{\beta,+}_2(x_1 - x) \cancel{S}^{\beta, -}_2(x - x_1)  \big).
\end{align*}
The above involves Feynman propagators and two-point functions at finite temperature. However, as it is usually done in this context, we rewrite them in terms of their equivalents at zero temperature plus the smooth functions:
\begin{align}
    W_{\beta}^{+}(x_1 - x) &= \cancel{S}^{\beta,+}_2(x_1 - x) - \cancel{S}^+_{2}(x_1 - x) \nonumber\\
    &= - \frac{1}{(2\pi)^3} \int \frac{\di^3\mathbf{p}}{2 \en{p}} \frac{e^{i \mathbf{p} (\mathbf{x}_1 - \mathbf{x})}}{1 + e^{\beta \en{p}}}\big((-\gamma^0 \en{p} - \gamma^i p_i +m)e^{-i\en{p}(t_1 - t)} +  (\gamma^0 \en{p} - \gamma^i p_i +m)e^{i\en{p}(t_1 - t)} \big)\nonumber,\\
    W_{\beta}^{-}(x_1 - x) &= \cancel{S}_2^{\beta,-}(x_1 - x) - \cancel{S}^-_{2}(x_1 - x) \nonumber\\
    &= - W_{\beta}^+(x_1 - x)\nonumber
\end{align}
and:
\begin{align*}
    W_{\beta}^F(x_1 - x) &= \cancel{S}_F^{\beta}(x_1 - x) - \cancel{S}_F(x_1 - x) \\
    &= \cancel{S}^{\beta,+}_2(x_1 - x)\theta(t_1 - t) - \cancel{S}_2^{\beta,-}(x_1 - x)\theta(t - t_1) - \cancel{S}^+_2(x_1 - x)\theta(t_1 - t) + \cancel{S}^-_2(x_1 - x)\theta(t - t_1)\\
    &= W_{\beta}^+(x_1 - x).
\end{align*}
The smoothness follows from the states being Hadamard. The above remark allows us to write:
\begin{align*}
    \cancel{S}_F^{\beta}(x_1 - x) \cancel{S}_F^{\beta}(x - x_1) + \cancel{S}^{\beta,+}_2(x_1 - x) \cancel{S}^{\beta,-}_2(x - x_1)    &= \cancel{S}_F(x_1 - x)\cancel{S}_F(x - x_1) + \cancel{S}^{+}_2(x_1 - x)\cancel{S}^{-}_2(x - x_1)\\
    &\hspace{3mm} + \cancel{S}_F(x_1 - x)W^+_{\beta}(x - x_1) + W^+_{\beta}(x_1 - x)\cancel{S}_F(x - x_1)\\
    &\hspace{3mm} - \cancel{S}^{+}_2(x_1 - x)W^+_{\beta}(x - x_1) + W^+_{\beta}(x_1 - x)\cancel{S}^{-}_2(x - x_1).
\end{align*}
Writing the Feynman propagator in terms of the Pauli-Jordan function we obtain:
\begin{align*}
    &-i\cancel{\omega}^{\beta}\big(Vj^0(x)\big) +i \cancel{\omega}^{\beta}\big(T(V,j^0(x))\big) = \\
    &-i\lambda e^2\int \di^4x_1 \chi(t_1) h(\mathbf{x}_1)\gamma^{\mu} \gamma^0 A_{\mu}(\mathbf{x}_1) \bigg[ \cancel{S}_F(x_1 - x)\cancel{S}_F(x - x_1) + \cancel{S}^{+}_2(x_1 - x)\cancel{S}^{-}_2(x - x_1)+\\
    &\,\,\,\,i\theta(t - t_1)\big(-\cancel{S}(x_1 - x)W^+_{\beta}(x - x_1) +  W^+_{\beta}(x_1 - x)\cancel{S}(x - x_1)\big) \bigg].
\end{align*}
This concludes the proof of the Lemma.
\end{proof}

\begin{proof}[\textbf{Proof of Proposition \ref{prop: DiracTotalTerm}}]
Consider the term in equation \eqref{eq: termine1} and separate the two contributions:
\begin{align*}
    Z_1 &= -\frac{\lambda e^2}{(2\pi)^6} \int \di t_1 \di^3\mathbf{x}_1  \Dot{\chi}(t_1) h(\mathbf{x}_1) A_{\mu}(\mathbf{x}_1) \gamma^{0} \gamma^{\mu} \int \frac{\di^3\mathbf{p} \di^3 \mathbf{k}}{4 \en{p} \en{k}}e^{-i(\mathbf{p} - \mathbf{k})(\mathbf{x} - \mathbf{x}_1)} \cdot\\
    &\cdot \bigg[ \frac{e^{i(\en{p} - \en{k})(t - t_1)}}{\en{p} - \en{k}} (-\gamma^0 \en{p} - \gamma^i p_i + m) (-\gamma^0 \en{k} - \gamma^i k_i + m) \bigg( \frac{1}{1 + e^{\beta \en{k}}} - \frac{1}{1 + e^{\beta \en{p}}}  \bigg)\\
    &\,\,\,\, + \frac{e^{i(\en{p} + \en{k})(t - t_1)}}{\en{p} + \en{k}} (-\gamma^0 \en{p} - \gamma^i p_i + m) (\gamma^0 \en{k} - \gamma^i k_i + m)\bigg( \frac{1}{1 + e^{\beta \en{k}}} + \frac{1}{1 + e^{\beta \en{p}}}\bigg)\\
    &\,\,\,\, + \frac{e^{-i(\en{p} + \en{k})(t - t_1)}}{\en{p} + \en{k}} (\gamma^0 \en{p} - \gamma^i p_i + m) (-\gamma^0 \en{k} - \gamma^i k_i + m) \bigg( \frac{1}{1 + e^{\beta \en{k}}} + \frac{1}{1 + e^{\beta \en{p}}}\bigg)\\
    &\,\,\,\, + \frac{e^{-i(\en{p} - \en{k})(t - t_1)}}{\en{p} - \en{k}} (\gamma^0 \en{p} - \gamma^i p_i + m) (\gamma^0 \en{k} - \gamma^i k_i + m) \bigg( \frac{1}{1 + e^{\beta \en{k}}} - \frac{1}{1 + e^{\beta \en{p}}} \bigg)\bigg]\\
\end{align*}
and:
\begin{align}
    Z_2 &= -\frac{\lambda e^2}{(2\pi)^6} \int \di t_1 d^3\mathbf{x}_1  \Dot{\chi}(t_1) h(\mathbf{x}_1) A_{\mu}(\mathbf{x}_1) \gamma^{0} \gamma^{\mu} \int \frac{\di^3\mathbf{p} \di^3 \mathbf{k}}{4 \en{p} \en{k}}e^{-i(\mathbf{p} - \mathbf{k})(\mathbf{x} - \mathbf{x}_1)} \nonumber\\
    &\times \bigg[ -\frac{e^{i(\en{p} + \en{k})(t - t_1)}}{\en{p} + \en{k}} (-\gamma^0 \en{p} - \gamma^i p_i + m) (\gamma^0 \en{k} - \gamma^i k_i + m) \nonumber\\
    &\,\,\,\, - \frac{e^{-i(\en{p} + \en{k})(t - t_1)}}{\en{p} + \en{k}} (\gamma^0 \en{p} - \gamma^i p_i + m) (-\gamma^0 \en{k} - \gamma^i k_i + m)\bigg] \label{eq: DiracZ2}.
\end{align}
The term \eqref{eq: parti1} needs to be renormalised as $\cancel{S}_F(x_1 - x)\cancel{S}_F(x - x_1) + \cancel{S}^{+}_2(x_1 - x)\cancel{S}^{-}_2(x - x_1)$ diverges in the coinciding point limit. Using K\"allén--Lehmann spectral representation, this can be extended to a well-defined distribution also on the diagonal and supported in the region $t \geq t_1$:
\begin{align*}
    &\cancel{S}_{F}(x_1 - x) \cancel{S}_{F}(x - x_1) + \cancel{S}^+_{2}(x_1 - x) \cancel{S}^-_{2}(x - x_1) =\\
    &-i \int_{(2m)^2}^{\infty} \di M^2 \rho_1^D(M^2) \bigg(\frac{(\square - c_1)(\square - c_2)}{(M^2 - c_1)(M^2 - c_2)}\Delta_A(x_1 - x,M) \bigg), \numberthis \label{eq: A_Rin}
\end{align*}
where:
\begin{equation*}
    \rho_1^D(M^2) = -\frac{1}{(2\pi)^{3}} \int \di^3 \mathbf{k} \, \delta(2\omega(\mathbf{k}) - M) \frac{(-\gamma_0 \omega(\mathbf{k}) - \gamma_i k^i + m)( \gamma_0 \omega(\mathbf{k}) - \gamma_i k^i + m)}{4 \omega^2(\mathbf{k})},
\end{equation*}
and $c_1, c_2 \in \mathbb{R}$ are renormalization constants. Now, the formal integration by parts of $\eqref{eq: parti1}$ can be taken leading to a boundary and a bulk term:
\begin{align*}
    Z_{BO} &= -i\lambda e^2\int \di^3\mathbf{x}_1 \chi(t_1) h(\mathbf{x}_1)\gamma^{\mu} \gamma^0 A_{\mu}(\mathbf{x}_1) (\mathfrak{A} + \mathfrak{B})\big|_{t_1 = -\infty}^{t_1 = t},\\
    Z_{BU} &= i\lambda e^2\int_{-\infty}^{t} \di t_1 \int \di^3\mathbf{x}_1  h(\mathbf{x}_1)\gamma^{\mu} \gamma^0 A_{\mu}(\mathbf{x}_1) \big( \Dot{\chi}(t_1)\mathfrak{A} + \Dot{\chi}(t_1) \mathfrak{B} \big).
\end{align*}
Here $\mathfrak{A}$ is the indefinite integral in $t_1$ of the renormalised expression and $\mathfrak{B}$ is the indefinite integral of the remaining part.\\ 
For the bulk term, using the support properties of $\Dot{\chi}(t_1)$, the integration domain is extended to:
\begin{equation*}
    Z_{BU} = i\lambda e^2\int \di^4x_1 \Dot{\chi}(t_1) h(\mathbf{x}_1)\gamma^{\mu} \gamma^0 A_{\mu}(\mathbf{x}_1) (\mathfrak{A} + \mathfrak{B}).
\end{equation*}
Moreover, as $t_1 < t$, we have $\cancel{S}_F(x_1 - x) = - \cancel{S}^{-}_2(x_1 - x)$ and $\cancel{S}_F(x - x_1) = \cancel{S}^{+}_2(x - x_1)$. Therefore, the $\mathfrak{A}$-term is:
\begin{align*}
    \mathfrak{A} &= -\frac{i}{(2\pi)^6} \int \frac{\di^3 \mathbf{p} \di^3 \mathbf{k}}{4 \en{k} \en{p} (\en{p} + \en{k})}e^{i(\mathbf{p} - \mathbf{k}) (\mathbf{x}_1 - \mathbf{x})} e^{-i(\en{p} + \en{k})(t_1 - t)}(-\gamma^0 \en{p} - \gamma^i p_i + m)(\gamma^0 \en{k} - \gamma^i k_i + m)\\
    &\,\,\,\,- \frac{i}{(2\pi)^6} \int \frac{\di^3 \mathbf{p} \di^3 \mathbf{k}}{4 \en{k} \en{p} (\en{p} + \en{k})}e^{i(\mathbf{p} - \mathbf{k}) (\mathbf{x}_1 - \mathbf{x})} e^{i(\en{p} + \en{k})(t_1 - t)}(\gamma^0 \en{p} - \gamma^i p_i + m)(-\gamma^0 \en{k} - \gamma^i k_i + m),
\end{align*}
giving a contribution:
\begin{align*}
    Z_{BU}^{\mathfrak{A}} &= \frac{\lambda e^2}{(2 \pi)^6}\int \di^4x_1 \Dot{\chi}(t_1) h(\mathbf{x}_1)(\gamma^{\mu})^{C}_{\,\,\,D} (\gamma^0)^A_{\,\,\,B} A_{\mu}(\mathbf{x}_1) \int \frac{\di^3 \mathbf{p} \di^3 \mathbf{k}}{4 \en{k} \en{p} (\en{p} + \en{k})}e^{i(\mathbf{p} - \mathbf{k}) (\mathbf{x}_1 - \mathbf{x})}\\
    &\times \bigg[ e^{-i(\en{p} + \en{k})(t_1 - t)}(-\gamma^0 \en{p} - \gamma^i p_i + m)^{D}_{\,\,\,A}(\gamma^0 \en{k} - \gamma^i k_i + m)^{B}_{\,\,\,C}\\
    &\,\,\,\,\,\,+ e^{i(\en{p} + \en{k})(t_1 - t)}(\gamma^0 \en{p} - \gamma^i p_i + m)^{D}_{\,\,\,A}(-\gamma^0 \en{k} - \gamma^i k_i + m)^{B}_{\,\,\,C} \bigg].
\end{align*}
Finally, remembering that this terms is summed with $Z_2$ (Eq \eqref{eq: DiracZ2}), we get:
\begin{equation*}
    Z_2 + Z^{\mathfrak{A}}_{BU} = 0.
\end{equation*}
Now, we show that the same holds for the sum between $Z_1$ and $Z^{\mathfrak{B}}_{BU}$. Therefore, we look at the remaining contribution to the bulk:
\begin{align*}
    W_{\beta}^+(x_1 - x) \cancel{S}(x-x_1) - \cancel{S}(x_1 &- x)W_{\beta}^+(x - x_1) =\\
    \frac{i}{(2 \pi)^6}\int \frac{\di^3 \mathbf{p} \di^3 \mathbf{k}}{4 \en{k} \en{p}} e^{-i(\mathbf{p} - \mathbf{k})\cdot(\mathbf{x} - \mathbf{x}_1)} &\bigg[ e^{i(\en{p} - \en{k})(t-t_1)}(-\gamma^0 \en{p} - \gamma^i p_i + m)(-\gamma^0 \en{k} -\gamma^i k_i + m)\bigg( \frac{1}{1 + e^{\beta \en{p}}} - \frac{1}{1 + e^{\beta \en{k}}} \bigg)\\ 
    &+ e^{i(\en{p} + \en{k})(t-t_1)}(-\gamma^0 \en{p} - \gamma^i p_i + m)(\gamma^0 \en{k} -\gamma^i k_i + m)\bigg( -\frac{1}{1 + e^{\beta \en{p}}} - \frac{1}{1 + e^{\beta \en{k}}} \bigg) \\
    &+ e^{-i(\en{p} + \en{k})(t-t_1)}(\gamma^0 \en{p} - \gamma^i p_i + m)(-\gamma^0 \en{k} -\gamma^i k_i + m)\bigg( \frac{1}{1 + e^{\beta \en{p}}} + \frac{1}{1 + e^{\beta \en{k}}} \bigg)\\
    &+ e^{-i(\en{p} - \en{k})(t-t_1)}(\gamma^0 \en{p} - \gamma^i p_i + m)(\gamma^0 \en{k} -\gamma^i k_i + m)\bigg( -\frac{1}{1 + e^{\beta \en{p}}} + \frac{1}{1 + e^{\beta \en{k}}} \bigg)\bigg].
\end{align*}
Integrating it in $t_1$:
\begin{align*}
    Z_{BU}^{\mathfrak{B}} &= -\frac{\lambda e^2}{(2\pi)^6} \int \di^4x_1\Dot{\chi}(t_1) h(\mathbf{x}_1)(\gamma^{\mu})^{C}_{\,\,\,D} (\gamma^0)^{A}_{\,\,\,B} A_{\mu}(\mathbf{x}_1) \int \frac{\di^3 \mathbf{p} \di^3 \mathbf{k}}{4 \en{k} \en{p}} e^{-i(\mathbf{p} - \mathbf{k})\cdot(\mathbf{x} - \mathbf{x}_1)} \\
    &\times \bigg[ \frac{e^{i(\en{p} - \en{k})(t-t_1)}}{\en{p} - \en{k}}(-\gamma^0 \en{p} - \gamma^i p_i + m)^{D}_{\,\,\,A}(-\gamma^0 \en{k} -\gamma^i k_i + m)^{B}_{\,\,\,C}\bigg( -\frac{1}{1 + e^{\beta \en{p}}} + \frac{1}{1 + e^{\beta \en{k}}} \bigg)\\ 
    &\,\,\,\,+ \frac{e^{i(\en{p} + \en{k})(t-t_1)}}{\en{p} + \en{k}}(-\gamma^0 \en{p} - \gamma^i p_i + m)^{D}_{\,\,\,A}(\gamma^0 \en{k} -\gamma^i k_i + m)^{B}_{\,\,\,C}\bigg( \frac{1}{1 + e^{\beta \en{p}}} + \frac{1}{1 + e^{\beta \en{k}}} \bigg) \\
    &\,\,\,\,+ \frac{e^{-i(\en{p} + \en{k})(t-t_1)}}{\en{p} + \en{k}}(\gamma^0 \en{p} - \gamma^i p_i + m)^{D}_{\,\,\,A}(-\gamma^0 \en{k} -\gamma^i k_i + m)^{B}_{\,\,\,C}\bigg( \frac{1}{1 + e^{\beta \en{p}}} + \frac{1}{1 + e^{\beta \en{k}}} \bigg)\\
    &\,\,\,\,+ \frac{e^{-i(\en{p} - \en{k})(t-t_1)}}{\en{p} - \en{k}}(\gamma^0 \en{p} - \gamma^i p_i + m)^{D}_{\,\,\,A}(\gamma^0 \en{k} -\gamma^i k_i + m)^{B}_{\,\,\,C}\bigg( -\frac{1}{1 + e^{\beta \en{p}}} + \frac{1}{1 + e^{\beta \en{k}}} \bigg)\bigg].
\end{align*}
It follows that:
\begin{equation*}
    Z_1 + Z^{\mathfrak{B}}_{BU} = 0.
\end{equation*}
The only remaining part is the boundary term $Z_{BO}$. Again, computing separately the two terms, using the form of the switch on:
\begin{align*}
    Z^{\mathfrak{B}}_{BO} &= \frac{\lambda e^2}{(2 \pi)^6}\int \di^3 \mathbf{x}_1 h(\mathbf{x}_1) (\gamma^{\mu})^{C}_{\,\,\,D} (\gamma^0)^{A}_{\,\,\,B} A_{\mu}(\mathbf{x}_1) \int \frac{\di^3 \mathbf{p} \di^3 \mathbf{k}}{4 \en{k} \en{p}} e^{-i(\mathbf{p} - \mathbf{k})\cdot(\mathbf{x} - \mathbf{x}_1)} \\
    &\bigg[ \frac{1}{\en{p} - \en{k}}(-\gamma^0 \en{p} - \gamma^i p_i + m)^{D}_{\,\,\,A}(-\gamma^0 \en{k} -\gamma^i k_i + m)^{B}_{\,\,\,C}\bigg( -\frac{1}{1 + e^{\beta \en{p}}} + \frac{1}{1 + e^{\beta \en{k}}} \bigg)\\ 
    &+ \frac{1}{\en{p} + \en{k}}(-\gamma^0 \en{p} - \gamma^i p_i + m)^{D}_{\,\,\,A}(\gamma^0 \en{k} -\gamma^i k_i + m)^{B}_{\,\,\,C}\bigg( \frac{1}{1 + e^{\beta \en{p}}} + \frac{1}{1 + e^{\beta \en{k}}} \bigg) \\
    &+ \frac{1}{\en{p} + \en{k}}(\gamma^0 \en{p} - \gamma^i p_i + m)^{D}_{\,\,\,A}(-\gamma^0 \en{k} -\gamma^i k_i + m)^{B}_{\,\,\,C}\bigg( \frac{1}{1 + e^{\beta \en{p}}} + \frac{1}{1 + e^{\beta \en{k}}} \bigg)\\
    &+ \frac{1}{\en{p} - \en{k}}(\gamma^0 \en{p} - \gamma^i p_i + m)^{D}_{\,\,\,A}(\gamma^0 \en{k} -\gamma^i k_i + m)^{B}_{\,\,\,C}\bigg( -\frac{1}{1 + e^{\beta \en{p}}} + \frac{1}{1 + e^{\beta \en{k}}} \bigg)\bigg],
\end{align*}
that using the relation for traces of $\gamma$ matrices:
\begin{align*}
    (\gamma^{\mu})^{A}_{\,\,\,B} (\gamma^{\nu})^{B}_{\,\,\,A} &= - 4 \eta^{\mu \nu,}\\
    (\gamma^{\mu})^{A}_{\,\,\,B} (\gamma^{\nu})^{B}_{\,\,\,C} (\gamma^{\tau})^{C}_{\,\,\,A} &= 0,\\
    (\gamma^{\mu})^{A}_{\,\,\,B} (\gamma^{\nu})^{B}_{\,\,\,C} (\gamma^{\tau})^{C}_{\,\,\,D} (\gamma^{\rho})^{D}_{\,\,\,A} &= 4 \bigg( \eta^{\mu \nu} \eta^{\tau \rho} - \eta^{\mu \tau} \eta^{\nu \rho} + \eta^{\mu \rho} \eta^{\nu \tau}\bigg),
\end{align*} 
gives the final result:
\begin{align*}
    Z^{\mathfrak{B}}_{BO} = \frac{4\lambda e^2}{(2 \pi)^6}\int \di^3 \mathbf{x}_1 h(\mathbf{x}_1) A^{0}(\mathbf{x}_1) \int \frac{\di^3 \mathbf{p} \di^3 \mathbf{k}}{\en{p}^2 - \en{k}^2} e^{-i(\mathbf{p} - \mathbf{k})(\mathbf{x} - \mathbf{x}_1)} \bigg[  \frac{\en{p}^2 + m^2 + k_i p^i}{\en{p} (1 + e^{\beta \en{p}})} - \frac{\en{k}^2 + m^2 + k_i p^i}{\en{k} (1 + e^{\beta \en{k}})}\bigg]. 
\end{align*}
For what concerns $Z^{\mathfrak{A}}_{BO}$, considering the renormalised expression \eqref{eq: A_Rin}, integrating it in $t_1$ one gets:
\begin{align*}
    Z^{\mathfrak{A}}_{BO} = 0
\end{align*}
by noticing that the overall contraction of spinorial indices in $\rho_1^D(M^2)$ with the remaining gamma matrices gives a vanishing contribution. This result is consistent with the Ward identities. Finally, computing the remaining renormalization freedoms arising from the extension of the product of Feynman propagators, proportional to the Dirac deltas and a number of its derivatives corresponding to the scaling degree of the product of propagators, one gets:
\begin{equation*}
    4 \lambda e^2 \left( a_0 h(\mathbf{x}) A^0(\mathbf{x}) + \sum_{i=1}^{3} a_1^i \left( \partial_{y_i} (h(\mathbf{y}) A^0(\mathbf{y}))\right)\big|_{\mathbf{y} = \mathbf{x}} + \sum_{i,j = 1}^{3} a_2^{ij} \left( \partial_{y_i} \partial_{y_j} (h(\mathbf{y}) A^0(\mathbf{y}))\right)\big|_{\mathbf{y} = \mathbf{x}}\right).
\end{equation*}
Summing these freedoms with $Z^{\mathfrak{B}}_{BO}$ concludes the proof.
\end{proof}

${}$ \\ \\ ${}$ \\
{\bf  Conflict of interest}
The authors have no relevant financial or non-financial interests to disclose. The authors have no conflict of interest to declare that are relevant to the content of this article.

 ${}$ \\
{\bf  Data statement}
Data sharing is not applicable to this article as no new data were created or analysed in this study.

\printbibliography
\end{document}